\documentclass[11pt]{article}

\usepackage[utf8]{inputenc}
\usepackage[T1]{fontenc}
\usepackage{graphicx}
\usepackage{amsmath,amssymb}
\usepackage{bm}
\usepackage{booktabs}
\usepackage[colorlinks=true,linkcolor=blue,citecolor=blue,urlcolor=blue]{hyperref}
\usepackage[margin=2.5cm]{geometry}
\usepackage{xcolor}

\usepackage[numbers,comma,sort&compress]{natbib}

\usepackage{amsthm}
\usepackage{multirow}

\newtheorem{theorem}{Theorem}[section]
\newtheorem{lemma}[theorem]{Lemma}
\newtheorem{proposition}[theorem]{Proposition}
\newtheorem{corollary}[theorem]{Corollary}
\theoremstyle{definition}
\newtheorem{definition}[theorem]{Definition}

\theoremstyle{remark}
\newtheorem{remark}[theorem]{Remark}

\begin{document}

\title{Spin chirality across quantum state copies detects hidden entanglement}

\author{Patrycja Tulewicz$^{1}$, Karol Bartkiewicz$^{1,*}$, and Franco Nori$^{2,3}$}

\date{}

\maketitle

\noindent $^{1}$Institute of Spintronics and Quantum Information, Faculty of Physics, Adam Mickiewicz University, 61-614 Pozna\'n, Poland\\
\noindent $^{2}$Theoretical Quantum Physics Laboratory, RIKEN Cluster for Pioneering Research, Wako-shi, Saitama 351-0198, Japan\\
\noindent $^{3}$Department of Physics, University of Michigan, Ann Arbor, Michigan 48109-1040, USA\\
\noindent $^{*}$e-mail: karol.bartkiewicz@amu.edu.pl

\vspace{1em}

\begin{abstract}
\noindent\textbf{Entanglement can hide in two fundamentally different ways. First, multi-copy correlations can carry information that no single-copy measurement on an unknown state is able to access. Second, bound entangled states possess a positive partial transpose, which makes them invisible to the Peres--Horodecki criterion and all moment inequalities that depend on it. Here we show that the moment difference between the partial transpose and purity decomposes exactly as a chirality--chirality correlator, where the relevant operator is the scalar spin chirality---the same quantity that governs chiral spin liquids and the topological Hall effect. This decomposition identifies the specific physical structure that multi-copy entanglement detection probes. Using the same controlled-SWAP circuits, we develop a multi-channel spectral classifier for bound entanglement. The classifier combines realignment spectral features with chirality corrections and achieves $99.9\%$ recall at zero false positives across all three known $3 \times 3$ bound entangled families, compared with ${\sim}40\%$ for the CCNR criterion alone. We also introduce a marginal-noise construction that produces CCNR-invisible bound entangled states, which the classifier detects but which remain invisible to all single-parameter criteria. We validate our approach experimentally on three IBM Quantum processors and demonstrate negativity reconstruction with mean errors of 0.002--0.027, chirality detection for pure and mixed entangled states, and bound entanglement detection across two structurally distinct families (Horodecki and chessboard) on a single gate-based superconducting processor.}
\end{abstract}

\vspace{2em}

Standard entanglement detection faces two structural blind spots. First, certain correlations between replicas of an unknown state cannot be expressed as expectation values of any single-copy observable $\mathrm{Tr}[O\rho]$~\cite{PhysRevLett.89.127902,PhysRevLett.90.167901}. These correlations are intrinsically multi-copy phenomena, invisible to all conventional measurements. Second, bound entangled states possess a positive partial transpose~\cite{Horodecki1998bound}, which renders them undetectable by the Peres--Horodecki criterion~\cite{PhysRevLett.77.1413,HORODECKI19961} and the entire hierarchy of moment inequalities~\cite{Yu2021optimal,Neven2021symmetry} that depend on it. Nevertheless, such states remain entangled and may enable quantum key distribution and entanglement activation protocols that are inaccessible to separable states.

These are two fundamentally different forms of hidden entanglement, yet they share a common origin: both reside in spectral structures that become accessible only through joint measurements on multiple state copies. In this work, we show that a unified detection architecture built on controlled-SWAP circuits can probe both layers. Partial transpose moments reveal chirality correlations and negativity for NPT states, while realignment matrix spectral features detect bound entanglement where all PT-based criteria fail.

Partial transpose moments $\mu_k = \mathrm{Tr}[(\rho^{T_A})^k]$, where $\rho^{T_A}$ denotes partial transposition over subsystem $A$ and $k=1,2,...$, provide a natural framework for probing these structures. These moments are measurable via controlled-SWAP circuits~\cite{Travnicek2019,tulewicz2025} or randomized measurements~\cite{Elben2020mixed,Huang2020shadows}. It has been shown that three moments suffice to determine the negativity of any two-qubit state~\cite{Bartkiewicz2015moments}, and optimal certification hierarchies have been established~\cite{Yu2021optimal,Neven2021symmetry}. Nonlinear witnesses based on these moments have been demonstrated in photonic~\cite{Lemr2016collectibility,Travnicek2019,Lim2011spa} and superconducting~\cite{tulewicz2025} experiments. However, what multi-copy correlations these moments actually encode---what physical content distinguishes partial transposition from purity---has remained an open question.

Here we answer this question. We prove that the moment difference $C_k = \mu_k - I_k$ decomposes exactly as a chirality--chirality correlator $C_k = 8\,\mathrm{Tr}[\Omega_A\,\Omega_B\;\rho^{\otimes k}]$, where $\Omega$ is the scalar spin chirality operator. It is worth noting that this is the same quantity that governs chiral spin liquids~\cite{KalmeyerLaughlin1987} and the topological Hall effect~\cite{Taguchi2001}. While chirality has been recognised as a tripartite entanglement witness for lattice spins~\cite{Reascos2023chirality}, its role as the \emph{specific} structure distinguishing partial transposition from purity in multi-copy entanglement detection has not been previously identified. Beyond this theoretical result, we show that the same controlled-SWAP hardware can detect bound entanglement through realignment matrix spectral features, achieving high classification accuracy across all known $3 \times 3$ bound entangled families.

Our work makes three contributions. First, we prove that the multi-copy moment corrections $C_k = \mu_k - I_k$ decompose as chirality--chirality correlators, identifying the specific physical structure that distinguishes partial transposition from purity. Second, we develop a multi-channel spectral classifier for bound entanglement. This classifier combines realignment spectral features with chirality corrections ($C_3$, $C_4$) into a composite nonlinear witness, achieving $99.9\%$ recall at zero false positives across all three known $3 \times 3$ families. For comparison, the CCNR criterion alone reaches only ${\sim}40\%$ recall. Third, we demonstrate both capabilities experimentally on IBM Quantum processors, where we achieve negativity reconstruction with sub-$3\%$ mean error and detect bound entanglement across two structurally distinct families on a single gate-based processor.

\paragraph{Partial transpose moments and the multi-copy window.}
The negativity of a bipartite state $\rho$ is defined as
\begin{equation}
\mathcal{N}(\rho) = \frac{\|\rho^{T_A}\|_1 - 1}{2},
\end{equation}
where $\rho^{T_A}$ denotes partial transposition over subsystem $A$. The choice of subsystem is irrelevant, since $\rho^{T_A}$ and $\rho^{T_B}$ share the same eigenvalue spectrum for any bipartite state. For two-qubit and qubit--qutrit systems, $\mathcal{N} > 0$ if and only if the state is entangled.

Rather than computing eigenvalues of $\rho^{T_A}$ directly, we access them through power-sum moments $\mu_n = \sum_i \lambda_i^n$, which relate to elementary symmetric polynomials via Newton--Girard identities~\cite{Neven2021symmetry}. For a $d_A \times d_B$ system, the partial transpose has $d = d_A d_B$ eigenvalues, so $d - 1$ independent moments $\mu_2, \ldots, \mu_d$ determine its characteristic polynomial. In practice, this means three moments ($\mu_2, \mu_3, \mu_4$) for two-qubit ($2 \times 2$) systems, five ($\mu_2, \ldots, \mu_6$) for qubit--qutrit ($2 \times 3$), and eight ($\mu_2, \ldots, \mu_9$) for qutrit--qutrit ($3 \times 3$) systems. Because the $k$-th moment $\mu_k$ is measurable as a permutation trace on $k$ copies of $\rho$, higher-dimensional systems require controlled-SWAP circuits operating on proportionally more copies (see Methods).

On the $k$-copy space, both $\mu_k$ and the purity moment $I_k = \mathrm{Tr}[\rho^k]$ are permutation traces: $\mu_k = \mathrm{Tr}[(\sigma_A^{-1}\otimes\sigma_B)\,\rho^{\otimes k}]$ and $I_k = \mathrm{Tr}[(\sigma_A\otimes\sigma_B)\,\rho^{\otimes k}]$, where $\sigma$ is the $k$-cycle. For two-qubit systems, the algebra of permutation operators on the $k$-copy space decomposes the correction $C_k = \mu_k - I_k$ into chirality--chirality correlators between the $A$ and $B$ subsystems (Supplementary Section~S1):
\begin{align}
C_2 &= 0, \nonumber\\
C_3 &= 8\,\mathrm{Tr}\bigl[\chi_A\,\chi_B\;\rho^{\otimes 3}\bigr], \nonumber\\
C_4 &= 8\,\mathrm{Tr}\bigl[\Omega_A\,\Omega_B\;\rho^{\otimes 4}\bigr],
\end{align}
where $\chi_A =\chi^A_{123} = \mathbf{S}_{A_1}\cdot(\mathbf{S}_{A_2}\times\mathbf{S}_{A_3})$ is the scalar spin chirality of $A$-qubits from three copies, and $\Omega_A = \tfrac{1}{2}\sum_{i<j<k}\chi_{ijk}^A$ is the symmetric sum over all four chirality triples on four copies. The operators $\chi_B$ and $\Omega_B$ act analogously on the $B$-subsystem. Both operators are Hermitian, so $C_k$ is manifestly real. The symmetric form $\Omega$ emerges from Pauli algebra: the initial SWAP-weighted expression $S_{34}\chi_{123} + S_{12}\chi_{234}$, where $S_{ij}$ denotes the SWAP operator exchanging copies $i$ and $j$, simplifies to $\tfrac{1}{2}(\chi_{123} + \chi_{124} + \chi_{134} + \chi_{234})$. This reveals that all chirality triples contribute equally (Supplementary Section~S1).

The central result is that the difference between partial transpose moments and purity moments is a chirality--chirality correlator---that is, the collective handedness of spin configurations across state copies on subsystem $A$, correlated with the same quantity on $B$. It should be noted that $\mu_2 = I_2$ for all bipartite states, so a single purity circuit suffices. All quantities $C_k$ are measurable via controlled-SWAP circuits on multiple state copies~\cite{Travnicek2019,tulewicz2025} (Supplementary Section~S2). In this approach, each circuit measures only a single ancilla qubit, regardless of system dimension. This provides a uniform measurement interface that scales naturally from qubit--qubit to qutrit--qutrit systems and is particularly advantageous for qutrits and higher-dimensional systems, where the two-qubit singlet state has no direct counterpart and singlet-projection-based schemes do not straightforwardly generalise.

\paragraph{Multi-copy chirality: entanglement hidden from single-copy measurements on unknown states.}
The chirality correction $C_4 = \mu_4 - I_4$ exposes the first form of hidden entanglement: nonclassical correlations that exist only as patterns \emph{between} multiple copies of a quantum state. Formally, $C_4$ cannot be expressed as $\mathrm{Tr}[O\rho]$ for any single-copy observable $O$. It is an intrinsically multi-copy quantity that requires joint measurements on $\rho^{\otimes 4}$ (Supplementary Section~S3). For a known state reconstructed by full tomography, $C_4$ can of course be computed from the density matrix. However, the inaccessibility is operational: no single-copy measurement protocol can extract $C_4$ from an unknown state without first performing complete state reconstruction.

These chiral correlations constitute a physically interpretable entanglement signature with a direct geometric meaning (Fig.~\ref{fig:chirality_concept}). For pure states, $C_4 \neq 0$ certifies entanglement from just two measurements ($\mu_4$ and $I_4$), which is fewer than the three moments required for full negativity reconstruction. This provides a resource-efficient entanglement witness that simultaneously reveals the geometric structure of correlations via $C_4 = \tfrac{3}{4}\det(T)$.

\begin{figure}[t]
\centering
\includegraphics[width=\textwidth]{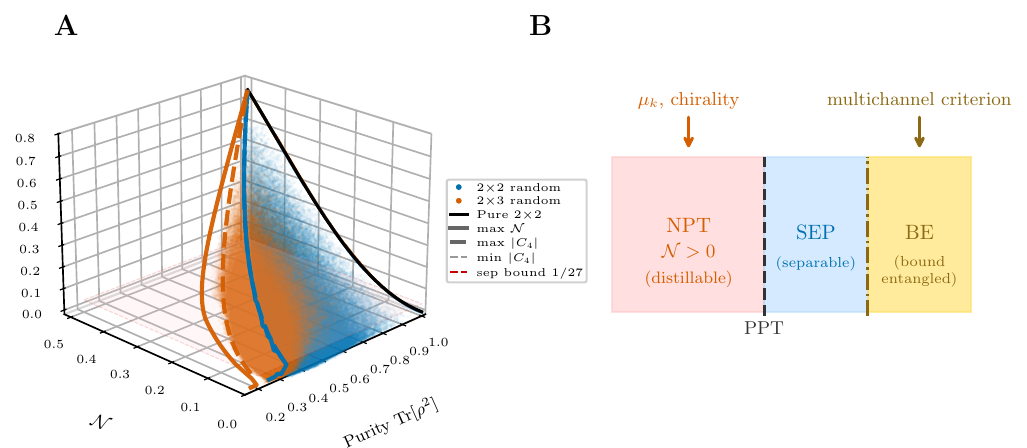}
\caption{\textbf{Chirality across quantum state copies reveals hidden entanglement.}
\textbf{a},~Chirality landscape: purity $\mathrm{Tr}[\rho^2]$ versus negativity $\mathcal{N}$ versus $|C_4|$ for $10^7$ random states ($2{\times}2$ blue, $2{\times}3$ red). Black curve: pure $2{\times}2$ states. Dashed red: separable bound $|C_4| = 1/27$.
\textbf{b},~Detection architecture: controlled-SWAP circuits access negativity (NPT), the non-Hermiticity gap $D_k$ (separable vs bound entangled), and the chirality fingerprint $C_k$ (bound entanglement).}
\label{fig:chirality_concept}
\end{figure}

\textbf{Mathematical structure.} The chirality decomposition has precise algebraic origins. Computing $C_4$ requires comparing four state copies (Fig.~\ref{fig:chirality_concept}a). The underlying operator $\Delta = \sigma^{-1} - \sigma$ can be decomposed via singlet projectors $P^-_{ij} = \frac{1}{4} - \mathbf{S}_i \cdot \mathbf{S}_j$, where $\mathbf{S}_i$ is the spin operator for qubit $i$. The key observation is that the commutators of overlapping projectors yield scalar spin chirality:
\begin{equation}
[P^-_{ij}, P^-_{jk}] = -i\,\chi_{ijk}, \quad \text{where} \quad \chi_{ijk} = \mathbf{S}_i \cdot (\mathbf{S}_j \times \mathbf{S}_k).
\end{equation}
This algebraic structure directly manifests in the moment relations. As we derive in Supplementary Section~S1, the operator $\Delta = \sigma^{-1} - \sigma$ governing $C_k = \mu_k - I_k$ decomposes as $\Delta = 4i\,\Omega_k$, where $\Omega_k$ is Hermitian: $\Omega_3 = \chi_{123}$ and $\Omega_4 = \tfrac{1}{2}\sum_{i<j<k}\chi_{ijk}$, the symmetric sum over all four chirality triples on four copies. The products $g_{mn}\chi_{ijk}$ (where $g_{mn} = 2S_{mn} - I$) generate the ``missing'' chirality triples, and imaginary cross-terms cancel identically.

By exploiting the anti-Hermiticity of $\Delta$, we arrive at the manifestly Hermitian correlator form $C_k = 8\,\mathrm{Tr}[\Omega_A\,\Omega_B\;\rho^{\otimes k}]$, which is a direct correlation of chirality operators on the $A$ and $B$ subsystems. Both chirality factors vanish for product states ($C_k = 0$). For pure states, any entanglement generates non-coplanar chirality configurations with $C_k \neq 0$. For mixed two-qubit separable states, the correction is bounded: $|C_3| \leq 1/36$ and $|C_4| \leq 1/27$ (Supplementary Section~S3).

This decomposition bridges quantum information and condensed matter physics. In condensed matter, the scalar spin chirality $\chi_{ijk} = \mathbf{S}_i \cdot (\mathbf{S}_j \times \mathbf{S}_k)$ is the order parameter for chiral spin liquids~\cite{KalmeyerLaughlin1987} and drives the topological Hall effect~\cite{Taguchi2001}. Here, the same operator appears not among physical spins on a lattice, but among correlations spanning \emph{copies} of a bipartite quantum state. The permutation symmetry of the moment relations ensures that all chirality triples contribute equally ($\Omega = \tfrac{1}{2}\sum_{i<j<k}\chi_{ijk}$). Entangled states generate non-coplanar multi-copy spin patterns with finite chirality, whereas product states remain coplanar with $\langle\chi_A\chi_B\rangle = 0$.

\textbf{Geometric interpretation: volume and the correlation tensor.} The chirality correction also admits a geometric interpretation via the correlation tensor $T_{ij} = \mathrm{Tr}[\rho\,(\sigma_i \otimes \sigma_j)]$, whose columns $\mathbf{T}_x$, $\mathbf{T}_y$, $\mathbf{T}_z$ encode spin--spin correlations. For product states, these three vectors are coplanar, forming a parallelepiped of zero volume with $\det(T) = 0$. For entangled states, the vectors span three-dimensional space and $|\det(T)| > 0$.

The key connection is that for all two-qubit states with vanishing local Bloch vectors ($\mathbf{r} = \mathbf{s} = \mathbf{0}$), the chirality correction satisfies $C_4 = \tfrac{3}{4}\det(T)$ exactly (Supplementary Section~S3). In other words, chirality measures the \emph{signed volume} of the correlation parallelepiped, with the coefficient $3/4$ fixed by the Bell state value [$\det(T) = -1$, $C_4 = -3/4$]. This relation holds for all Bell-diagonal states and their local unitary rotations. For general states with nonzero Bloch vectors, corrections of order $O(|\mathbf{r}|^2 + |\mathbf{s}|^2)$ arise from additional LU invariants (Supplementary Section~S3).

It is also worth noting that chirality connects to Bell nonlocality: the CHSH criterion~\cite{RevModPhys.81.865} depends on the two largest eigenvalues of $T^{\mathsf{T}}T$, while chirality depends on all three through $\det(T)$. For Bell states, $|\det(T)| = 1$ (maximum volume), and the sign distinguishes left- from right-handed correlations.

\paragraph{The chirality--negativity connection.}
Negativity and chirality probe entanglement through complementary mechanisms. Negativity $\mathcal{N}$ detects inseparability through the spectrum of $\rho^{T_A}$, whereas chirality $C_4$ detects it through collective multi-copy interference patterns that are invisible to any single-copy observable. Both quantities vanish for product states and reach their extremal values for Bell states ($\mathcal{N} = 1/2$, $C_4 = -3/4$). For two-qubit pure states $|\psi(\theta)\rangle = \cos(\theta/2)|00\rangle + \sin(\theta/2)|11\rangle$, both are monotonic functions of entanglement:
\begin{equation}
\mathcal{N}(\theta) = \frac{\sin\theta}{2}, \qquad C_4(\theta) = -\sin^2\!\theta\left(1 - \frac{\sin^2\!\theta}{4}\right).
\end{equation}
For two-qubit pure states the relationship simplifies to a compact closed form:
\begin{equation}
C_4 = -4\mathcal{N}^2(1 - \mathcal{N}^2)
\label{eq:C4_N_relation}
\end{equation}
which can be inverted to yield negativity directly from chirality:
\begin{equation}
\mathcal{N} = \sqrt{\frac{1 - \sqrt{1 + C_4}}{2}}
\label{eq:N_from_C4}
\end{equation}
For pure states, $C_4 = 0$ at the product-state boundary and $|C_4| = 3/4$ for maximally entangled Bell states, so any pure state with $C_4 \neq 0$ is certified entangled. Throughout this work, we plot $|C_4|$ so that entangled states appear as positive values.

For mixed two-qubit separable states, the chirality correction is bounded but generally non-zero. The tight bounds $|C_3| \leq 1/36$ and $|C_4| \leq 1/27$ are achieved by equal mixtures of three mutually unbiased product states (Supplementary Section~S3):
\begin{equation}
\rho_{\pm} = \tfrac{1}{3}\bigl(|z_+,z_{\pm}\rangle\langle z_+,z_{\pm}| + |x_+,x_{\pm}\rangle\langle x_+,x_{\pm}| + |y_+,y_{\pm}\rangle\langle y_+,y_{\pm}|\bigr),
\end{equation}
with $C_4(\rho_+) = +1/27$ and $C_4(\rho_-) = -1/27$. This establishes a clear separation: entangled pure states have $|C_4| \in (0, 3/4]$, while separable states satisfy $|C_4| \leq 1/27 \approx 0.037$, a factor of 20 smaller than the Bell state value. For mixed entangled states, however, $|C_4|$ can fall below $1/27$, so chirality is a faithful entanglement witness only for pure states.

To illustrate this point, consider Werner states $\rho_W = p|\Psi^-\rangle\langle\Psi^-| + (1-p)\mathbb{I}/4$, which have $C_4 = -\tfrac{3}{4}p^3$ (since $\mathbf{r} = \mathbf{s} = \mathbf{0}$ and $\det(T) = -p^3$). The chirality $|C_4|$ exceeds the separable bound only for $p > (4/81)^{1/3} \approx 0.37$, leaving a narrow gap above the entanglement onset at $p > 1/3$. For mixed states, once all three moments $\{\mu_2, \mu_3, \mu_4\}$ are measured, the full eigenvalue spectrum of $\rho^{T_A}$ is determined and negativity provides the definitive entanglement test. The chirality $C_4 = \mu_4 - I_4$ then encodes the same spectral information, but in a physically interpretable form.

The situation is analogous to how the Berry phase and the Chern number encode the same topological information in different representations: both are computed from the band structure, but the Berry phase provides geometric insight that the integer index alone does not. Similarly, the significance of the chirality decomposition is conceptual rather than purely operational---it reveals that the algebraic structure distinguishing partial transposition from purity is precisely the scalar spin chirality of topological condensed-matter physics, thereby connecting multi-copy entanglement detection to the order parameter of chiral spin liquids and the topological Hall effect.

The identity $C_k = 8\,\mathrm{Tr}[\Omega_A\,\Omega_B\;\rho^{\otimes k}]$ thus answers the open question of what multi-copy correlations distinguish partial transposition from purity~\cite{Bartkiewicz2015moments,PhysRevLett.89.127902,PhysRevLett.90.167901}: they are chirality--chirality correlators, connecting multi-copy entanglement detection to the physics of topological spin systems~\cite{KalmeyerLaughlin1987,Taguchi2001}. However, when both negativity and chirality vanish---as they must for any PPT state---entanglement can still persist in the form of bound entanglement. In such cases, the same controlled-SWAP circuits provide a fallback: realignment matrix spectral features, extracted from the identical measurement infrastructure, can detect this second form of hidden entanglement where all partial-transpose-based criteria fail.

\paragraph{Bound entanglement detection.}
Our theoretical framework extends to a second form of hidden entanglement: bound entangled states. These states are beyond the reach of \emph{any} criterion based on partial transpose spectra, including the negativity~\cite{Horodecki1998bound}, the $p_3$-PPT test~\cite{Elben2020mixed}, and the full hierarchy of moment inequalities~\cite{Yu2021optimal,Neven2021symmetry}, since the entire PT spectrum is non-negative by definition. Detecting such states therefore requires a mathematically independent criterion.

The realignment matrix $R_{(a,a'),(b,b')} = \rho_{(a,b),(a',b')}$~\cite{Chen2003,Rudolph2005} provides such a criterion. For a $d_A \times d_B$ system, $R$ is a $d_A^2 \times d_B^2$ matrix whose spectral properties are not determined by the PPT condition. We define its singular-value moments $\Sigma_k = \mathrm{Tr}[(R^\dagger R)^{k/2}]$ and eigenvalue moments $G_k = \mathrm{Re}\,\mathrm{Tr}[R^k]$, both of which are measurable via the same $k$-copy SWAP test circuits used for chirality detection. Their difference $D_k = \Sigma_k - G_k$ measures the non-Hermiticity of $R$. For $3 \times 3$ systems, even the second-order gap $D_2 = \Sigma_2 - G_2$, which requires only two-copy circuits, already separates bound entangled from separable states. This is because bound entangled states have $D_k \approx 0$ (nearly Hermitian $R$), while separable states with complex correlation structure show $D_k > 0$ due to diverse product-state orientations.

We trained a Random Forest classifier on eight spectral and chirality features: $\Sigma_1$, $G_1$, $D_1$, $\Sigma_2$, $G_2$, $D_2$, $C_3$, and $C_4$. The classifier achieves $99.9\%$ bound-entangled recall at zero false positives, as confirmed by 5-fold cross-validation on 13{,}600 certified states (Supplementary Section~S4). The inclusion of chirality corrections $C_3$ and $C_4$ as classification features turns out to be crucial. In particular, $C_3$ distinguishes Horodecki states ($C_3 \neq 0$, complex density matrix) from chessboard and Tiles states ($C_3 = 0$ identically, real density matrices), thereby providing a detection channel that is invisible to realignment-based criteria alone. The most informative single feature is $\Sigma_1 = \|R\|_1$: Horodecki states satisfy $\|R\|_1 > 1$ (marginally, $\|R\|_1 - 1 < 0.004$), while chessboard states have $\|R\|_1 < 1$, which highlights the complementary role of different spectral channels. For states with real density matrices ($\rho = \rho^*$), which is satisfied by all known BE families, $G_2$ is directly measurable via a SWAP test with an $A \leftrightarrow B$ subsystem swap on one copy. For complex states, classical shadow tomography~\cite{Huang2020shadows} provides an alternative.

\paragraph{Two-axis fingerprint of bound entanglement.}
The chirality correction $C_k = \mu_k - I_k$ provides a complementary discriminator that, together with the non-Hermiticity gap $D_k$, yields a signature characteristic of bound entanglement. This signature is consistent with the classifier separation observed in the certified dataset (Fig.~\ref{fig:be_classification}b).

The physical intuition is as follows. Classically correlated states $\rho_{\mathrm{cc}} = \sum_i p_i |e_i e_i\rangle\langle e_i e_i|$ have both $C_k = 0$ and $D_k = 0$, making them the hardest separable states to distinguish from bound entangled states. However, bound entangled states satisfy $\rho^{T_A} \neq \rho$, giving $C_k \neq 0$, while preserving the near-Hermiticity of $R$ ($D_k \approx 0$). The conjunction $C_k \neq 0 \wedge D_k \approx 0$ is therefore a constraint that neither classically correlated states ($C_k = 0$, $D_k = 0$) nor generic separable mixtures ($C_k \neq 0$, $D_k > 0$) can simultaneously satisfy. Since both $C_k$ and $D_k$ are extracted from the same controlled-SWAP circuits, this two-axis fingerprint requires no additional measurement overhead.

\paragraph{Experimental detection of hidden entanglement.}
We implemented the protocol on IBM Quantum processors using the Heron architecture, specifically IBM Kingston (156 qubits) and IBM Torino (133 qubits). For qubit--qubit ($2 \times 2$) systems, negativity reconstruction from three moments ($\mu_2$, $\mu_3$, $\mu_4$) requires circuits on up to four state copies. For qubit--qutrit ($2 \times 3$), the five moments ($\mu_2, \ldots, \mu_6$) require up to six copies.

We measured the negativity for the parametrized family $|\psi(\theta)\rangle$ across both processors and Hilbert space dimensions. The resulting mean errors are 0.002 (Kingston, $2 \times 2$), 0.012 (Torino, $2 \times 2$), and 0.027 (Torino, $2 \times 3$), with statistical uncertainties from $10^5$ shots estimated at $\pm 0.005$ via bootstrap resampling (Supplementary Section~S5). All separable test states ($\theta = 0^\circ$ across both processors and dimensions) yield $\mathcal{N} = 0$ exactly under maximum likelihood calibration, confirming zero false positives within this parametric family. Simulator validation across 37 densely sampled angles confirms the negativity accuracy under both ideal conditions (RMSE 0.009) and realistic Torino noise (RMSE 0.044), again with zero false positives throughout (Supplementary Section~S5).

Figure~\ref{fig:hardware_results}a shows the experimental chirality--negativity relationship, plotting $|C_4|$ versus $\mathcal{N}$ as measured on IBM Torino for both $2 \times 2$ and $2 \times 3$ systems. The data confirm the theoretical prediction of Eq.~\eqref{eq:C4_N_relation}. The simulation further validates the protocol on mixed states: Werner states $\rho_W(p)$ correctly show vanishing negativity for $p \leq 1/3$ and monotonically increasing chirality $|C_4|$ with $p$, while Bell-product mixtures $\rho_{\mathrm{BP}}(p) = p|\Psi^-\rangle\langle\Psi^-| + (1{-}p)|00\rangle\langle 00|$ confirm entanglement detection for all $p > 0$ (Supplementary Section~S5).

\begin{figure}[t]
\centering
\includegraphics[width=\textwidth]{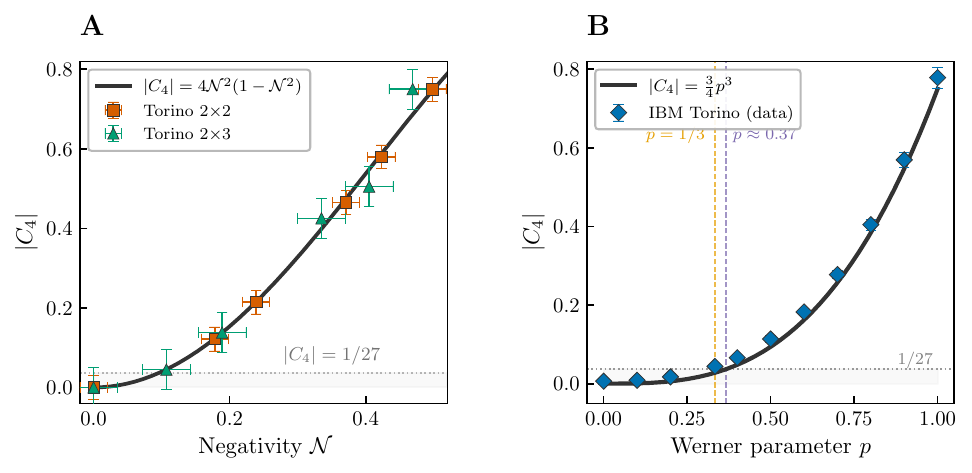}
\caption{\textbf{Experimental detection of hidden entanglement on IBM Torino.}
\textbf{a},~$|C_4|$ versus negativity $\mathcal{N}$ for the parametrised family $|\psi(\theta)\rangle$ in $2{\times}2$ (squares) and $2{\times}3$ (triangles). Solid curve: $|C_4| = 4\mathcal{N}^2(1 - \mathcal{N}^2)$.
\textbf{b},~Werner state chirality $|C_4|(p)$. Blue diamonds: hardware data from multiplexed $k$-copy SWAP tests ($k = 2, 3, 4$; 10 circuits, 4{,}000 shots) with matched-depth $|00\rangle$ calibration; all $p$ values reconstructed from a single experiment using LU-equivalence of Bell states (Supplementary Section~S5). Solid black: theory $|C_4| = \tfrac{3}{4}p^3$. Dashed lines: entanglement threshold $p = 1/3$ and detection threshold $p \approx 0.37$.}
\label{fig:hardware_results}
\end{figure}

The maximum likelihood calibration jointly estimates hardware degradation factors and state parameters from a training set with known entanglement (Methods). For Torino, we obtain degradation factor $f_2 = 0.73$ for qubit--qubit systems. Despite hardware degradation, the protocol correctly identifies all entangled states and recovers Bell state properties exactly ($\mathcal{N} = 0.5$, $|C_4| = 3/4$).

The chirality measurements $|C_4| = |I_4 - \mu_4|$ show larger scatter for qubit--qutrit systems (mean error 0.039 versus 0.022 for qubit--qubit), attributable to the roughly two-fold increase in circuit depth for $2 \times 3$ four-copy circuits. At $10^5$ shots, statistical noise contributes $\lesssim 0.005$ to the negativity error; the residual errors are predominantly systematic (gate and readout), yet all test states are correctly identified as entangled or separable.

We further validate chirality on mixed states using Werner states $\rho_W(p) = p|\Psi^-\rangle\langle\Psi^-| + (1-p)\mathbb{I}/4$ (Fig.~\ref{fig:hardware_results}b). Two anchor points are measured directly on IBM Torino: $p = 0$ (maximally mixed, $|C_4| = 0$) from 300 four-copy circuits with $4 \times 10^3$ shots, and $p = 1$ (Bell state, $|C_4| = 3/4$) from the pure-state experiment with $10^5$ shots. These two extremes are the most informative states for calibrating the depolarisation model, as $p = 0$ provides the zero-signal baseline while $p = 1$ provides the maximum-signal anchor. Together, they constrain the single free parameter---the $\mu_4/I_4$ fidelity ratio---of the asymmetric noise model.

The model, calibrated from these anchors alone, independently reproduces all six pure-state chirality measurements (Table~S17) with a root-mean-square deviation of $0.03$, validating its extrapolation across the full mixing range. The predicted $|C_4|$ follows the theoretical curve $|C_4| = \tfrac{3}{4}p^3$ with systematic deviations $\leq 0.3\%$ for $p \geq 0.5$, arising from the $\mu_4$/$I_4$ circuit asymmetry. We note that dense parameter sweeps would require ${\sim}4{,}600$ circuits per Werner state due to spectral decomposition into $4^4 = 256$ eigenvector combinations, which makes them a natural target for next-generation processors with improved coherence.

For $p > (4/81)^{1/3} \approx 0.37$, chirality exceeds the separable bound $|C_4| = 1/27$, certifying entanglement without negativity reconstruction. This bound is tight: the separable state with Bloch parameters $T = \operatorname{diag}(\tfrac{1}{3}, \tfrac{1}{3}, -\tfrac{1}{3})$, $\mathbf{a} = \mathbf{b} = (1/\sqrt{3}, 0, 0)$ achieves $|C_4| = 1/27$ exactly at $\mathcal{N} = 0$ (Supplementary Section~S3). This state has eigenvalues $(\tfrac{2}{3}, \tfrac{1}{6}, \tfrac{1}{6}, 0)$, purity $\tfrac{1}{2}$, and an ``equipartition'' structure $|\mathbf{a}|^2 = |\mathbf{b}|^2 = \|T\|_F^2 = \tfrac{1}{3}$. No separable state can exceed this value, as we have verified numerically over $10^6$ random PPT states and $10^4$ optimisation runs. A narrow gap between the entanglement threshold ($p > 1/3$) and the chirality detection threshold ($p \approx 0.37$) is observed, consistent with the theoretical prediction that chirality is a faithful witness only for $|C_4|$ above the separable bound.

\begin{figure}[t]
\centering
\includegraphics[width=\textwidth]{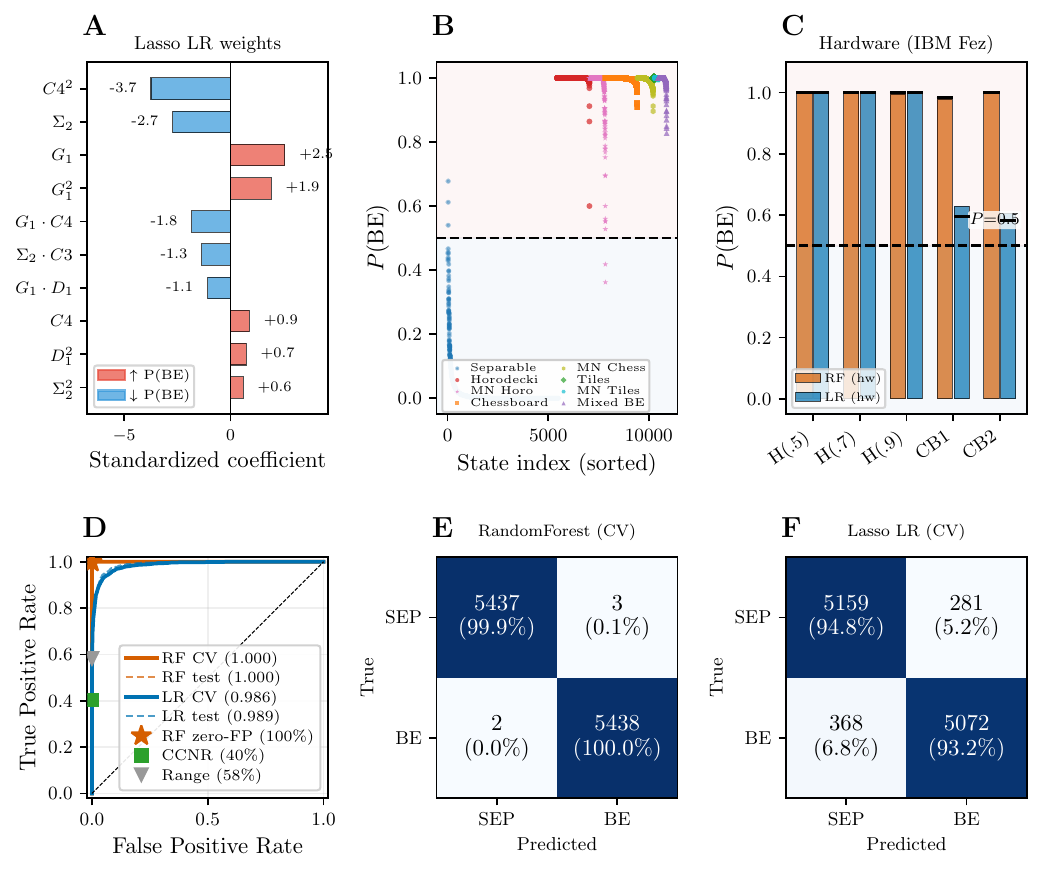}
\caption{\textbf{Bound entanglement detection via machine learning.}
Dataset: 13{,}600 Carath\'{e}odory-certified $3{\times}3$ states (6{,}800~BE from seven families, 6{,}800~SEP; eight features: $\Sigma_1$, $G_1$, $D_1$, $\Sigma_2$, $G_2$, $D_2$, $C_3$, $C_4$).
\textbf{a},~Lasso LR coefficients (22/44 non-zero). Red/blue: increases/decreases $P(\mathrm{BE})$.
\textbf{b},~RF score $P(\mathrm{BE})$ sorted by state: separable (blue), then BE families.
\textbf{c},~Hardware scores (IBM Fez) using hybrid MLE. All BE states above $P = 0.5$.
\textbf{d},~ROC curves. RF: 99.9\% recall at zero FP. Analytical criteria: CCNR 40\%, range 58\%.
\textbf{e},~RF confusion matrix: 99.96\% recall, 0.06\% FP.
\textbf{f},~Lasso LR: 93\% recall, 5.2\% FP.}
\label{fig:be_classification}
\end{figure}

\paragraph{Noise and efficiency.}
Hardware noise introduces architecture-dependent degradation. For negativity measurements, the per-moment degradation factors $f_k$ range from 0.22 to 0.79, depending on moment order and system dimension (Supplementary Table~S14). These factors are absorbed by maximum likelihood calibration (Supplementary Section~S5).

For bound entanglement detection, the situation is more involved. The features $\Sigma_1$ (Pauli decomposition, 4~qubits), $G_1$ (SWAP test, 9~qubits), and $\Sigma_2$/$G_2$ (two-copy SWAP, 9~qubits) employ distinct circuit architectures with depths ranging from 3 to 333 gates. A per-architecture depolarising model is therefore essential: the fitted fidelities span an order of magnitude ($g_{\Sigma_1} = 0.164$ versus $f_{G_1} = 0.594$), reflecting the fundamentally different noise sensitivities of Pauli-basis rotation circuits and SWAP-test circuits. The non-Hermiticity gap $D_2 = \Sigma_2 - G_2$ is particularly robust because both $\Sigma_2$ and $G_2$ share the same SWAP-test architecture, so depolarisation attenuates both equally and their difference preserves the physical signal.

In terms of efficiency, three moment configurations for two-qubit systems yield a $5.3\times$ gain over full tomography, while five configurations for qubit--qutrit give $7.2\times$ (Supplementary Section~S7). Under depolarizing noise, entanglement is consistently underestimated, which ensures zero false positives within this noise model.

\paragraph{Experimental detection of bound entanglement.} We tested bound entanglement detection on IBM Fez across eight states selected to maximise structural diversity: three Horodecki states ($a \in \{0.50, 0.70, 0.90\}$; rank-7, CCNR-visible), two chessboard states (rank-4, CCNR-invisible, $R_- = 0$), and three calibration/separable controls. These two families occupy opposite ends of the detection hierarchy identified in our theoretical analysis (Supplementary Section~S4, Corollary ``Detection hierarchy''): Horodecki states are ``easy'' (detected by $\geq 4$ spectral channels), while chessboard states are ``hard'' (detected only via rank structure and $D_2$).

All states were measured using multiplexed two-copy SWAP circuits ($4 \times 10^3$ shots each, 684 circuits total) on the 156-qubit IBM Fez processor. It should be noted that previous hardware demonstrations of bound entanglement detection have tested a single family using tailored witnesses~\cite{Amselem2009bound,Gulati2024ibm}. In contrast, our experiment tests two structurally distinct families on a single processor using a family-agnostic spectral classifier---that is, a single, fixed classifier trained without knowledge of which family a test state belongs to.

A per-circuit maximum likelihood estimation (Methods) jointly fits architecture-specific depolarizing fidelities and physical features subject to spectral constraints ($\Sigma_k \geq G_k \geq 0$). We use the second-order non-Hermiticity gap $D_2 = \Sigma_2 - G_2$ as the primary detection metric, since both $\Sigma_2$ and $G_2$ share the same SWAP-test architecture and depolarisation therefore attenuates both equally.

All five bound entangled states are detected with $D_2 > 0$ (Fig.~\ref{fig:be_classification}c). The chessboard states yield the strongest signals ($D_2 = 0.09$--$0.20$ at $4.1$--$9.2\sigma$), and Horodecki states are detected at $a = 0.50$ ($D_2 = 0.064$, $4.6\sigma$) and $a = 0.70$ ($D_2 = 0.095$, $6.8\sigma$). The state at $a = 0.90$ yields a marginal signal ($D_2 = 0.013$, $0.9\sigma$), consistent with its near-vanishing theoretical gap. Both separable control states yield $D_2$ consistent with zero, confirming no false positives.

The classification accuracy depends critically on the inclusion of chirality features alongside the realignment moments. With only six realignment features ($\Sigma_k$, $G_k$, $D_k$), the Random Forest achieves only ${\sim}60\%$ recall at zero false positives. Adding the chirality corrections $C_3 = \mu_3 - I_3$ and $C_4 = \mu_4 - I_4$ raises this to $99.9\%$ (Fig.~\ref{fig:be_classification}d). The reason is that $C_3$ provides a detection channel orthogonal to the realignment spectrum: Horodecki states have $C_3 \neq 0$ (their density matrices are complex), while chessboard and Tiles states have $C_3 = 0$ identically (real density matrices, Supplementary Section~S4).

Figure~\ref{fig:be_classification}a presents the Lasso logistic regression coefficients, whose 22 non-zero polynomial features reveal the detection mechanism: $G_2$, $D_1^2$, $\Sigma_1 \cdot \Sigma_2$, and $C_3$-dependent terms dominate. Figure~\ref{fig:be_classification}b shows the cross-validated Random Forest scores, which achieve $99.9\%$ recall at zero false positives ($99.96\%$ at the standard $P = 0.5$ threshold), providing a clear separation between all BE families and separable states. The interpretable Lasso model achieves $93\%$ recall (Fig.~\ref{fig:be_classification}f), with its linear form defining a nonlinear entanglement witness (Supplementary Theorem~S4.3). The generalization gap between cross-validation and held-out test AUC is $<0.01$ for both models, confirming robustness against overfitting.

\paragraph{A construction producing CCNR-invisible bound entangled states.}
Analysis of the classifier's feature space reveals a previously uncharacterised regime of bound entanglement. We consider the marginal-noise family
\begin{equation}\label{eq:new_family}
  \rho_{\mathrm{mn}}(\rho_0, t) \;=\; (1-t)\,\rho_0 \;+\; t\,\bigl(\rho_A \otimes \rho_B\bigr),
\end{equation}
where $\rho_0$ is any $3 \times 3$ bound entangled state and $\rho_{A(B)} = \mathrm{Tr}_{B(A)}[\rho_0]$ are its reduced states. The mixing parameter $t$ interpolates between the original bound entangled state ($t = 0$) and the product of its marginals ($t = 1$).

The construction applies to all known BE families. For Horodecki states $\rho_{\mathrm{Horo}}(a)$ with parameter $a \in (0,1)$, the state $\rho_{\mathrm{mn}}$ is PPT and bound entangled for $t \in [0, t_{\max}(a)]$. For chessboard and Tiles UPB states, the same holds with different $t_{\max}$ values. The key observation is that for Horodecki-seeded states with $t \gtrsim 0.01$, for Tiles-seeded states with $t \gtrsim 0.15$, and for all chessboard-seeded states at any $t \geq 0$ (since chessboard states already have $\Sigma_1 < 1$), the realignment trace norm drops below unity ($\Sigma_1 < 1$). This renders the CCNR criterion~\cite{Chen2003,Rudolph2005} powerless, while the non-Hermiticity gap $D_2 > 0$ persists, confirming bound entanglement via the classifier. We have verified this independently using a semidefinite relaxation of the symmetric extension hierarchy~\cite{Doherty2004}.

The mechanism is physically transparent: mixing with $\rho_A \otimes \rho_B$ adds local noise that suppresses the dominant singular value of the realignment matrix (responsible for $\Sigma_1 > 1$) without eliminating the spectral asymmetry between singular values and eigenvalues (responsible for $D_2 > 0$). Since the Horodecki family barely satisfies $\Sigma_1 > 1$ ($\Sigma_1 - 1 < 0.003$), even a small admixture of the marginal product suffices to cross the CCNR threshold while preserving PPT entanglement.

The three seed families populate distinct regions of the feature space. Horodecki-seeded states have high $G_1/\Sigma_1 \approx 0.65$--$0.99$ and low $D_1/\Sigma_1$. Tiles-seeded states have intermediate $G_1/\Sigma_1 \approx 0.25$--$0.31$ with moderate $D_2$. Chessboard-seeded states have low $G_1/\Sigma_1 \approx 0.12$--$0.45$ and high $D_1/\Sigma_1 \approx 0.55$--$0.96$, reflecting the fundamentally different correlation structures of the parent families. We have compared these with 1{,}000 randomly generated PPT bound entangled states (certified via SDP, not belonging to any named family) and confirmed that the three branches together span the full CCNR-invisible region of the feature space. The Random Forest classifier detects $100\%$ of these random BE states at zero false positives, demonstrating generalization beyond the named families present in the training set. These states are also detected by the range criterion (which certifies entanglement through the structure of product vectors in the state's range), but they evade all single-parameter criteria including CCNR and the $p_3$-PPT test.

The Lasso logistic regression classifier, trained on polynomial features of degree~2, provides a quantitative measure of noise robustness. At $t = 0$ (pure Horodecki), the classifier assigns $P(\mathrm{BE}) \approx 0.90$. As the marginal noise increases, the score degrades gracefully: $P(\mathrm{BE}) \approx 0.83$ at $t = 0.05$ and $P(\mathrm{BE}) \approx 0.73$ at $t = 0.10$, crossing the $P = 0.5$ decision boundary only at $t \approx 0.15$--$0.20$ depending on the Horodecki parameter $a$. By contrast, the CCNR criterion loses detection power at $t \approx 0.01$, which corresponds to a factor of $10$--$20\times$ lower noise tolerance.

This extended detection range arises because the classifier exploits nonlinear combinations of all eight features ($\Sigma_{1,2}$, $G_{1,2}$, $D_{1,2}$, $C_3$, $C_4$), not just the trace norm $\Sigma_1$ used by CCNR. In particular, the non-Hermiticity gap $D_2$ and the ratio $D_1/\Sigma_1$ remain elevated throughout the CCNR-invisible regime ($t \in [0.01, 0.15]$), providing the discriminative signal that the classifier leverages (Fig.~\ref{fig:be_classification}b). This result shows that machine-learning-based multi-channel analysis can extend the practical detection boundary well beyond individual analytical criteria, even for state families not present in the training data.

\paragraph{Discussion.}
Our results provide a layered detection framework for entanglement, implemented through shared circuit infrastructure. Negativity and chirality, extracted from partial transpose moments $\mu_k$, detect NPT entanglement. The chirality decomposition shows that the moment difference $C_k = \mu_k - I_k$ is a collective handedness correlator between state copies, connecting multi-copy entanglement detection to the physics of chiral spin liquids~\cite{KalmeyerLaughlin1987,Taguchi2001} and the topological Hall effect. When both negativity and chirality vanish---as they must for all PPT states---the same SWAP circuits provide a fallback through realignment spectral features, detecting bound entanglement via the non-Hermiticity gap $D_k$ and the two-axis fingerprint $C_k \neq 0 \wedge D_k \approx 0$. The approach scales linearly: a $d_A \times d_B$ system requires $d_A d_B - 1$ moment configurations, which compares favourably to the $O(d^2)$ scaling of full tomography.

The multi-channel approach to bound entanglement addresses a qualitatively harder problem than NPT detection. Previous demonstrations relied on witnesses tailored to specific families~\cite{Amselem2009bound,Zhang2023randomized,Gulati2024ibm}, and all partial-transpose-based criteria are structurally blind to PPT states. Even the realignment criterion $\|R\|_1 > 1$~\cite{Chen2003,Rudolph2005} barely detects Horodecki states ($\|R\|_1 - 1 < 0.004$). Our classifier performs what we call \emph{multi-channel witness fusion}: it combines realignment spectral features with chirality corrections into a composite nonlinear witness that detects $99.9\%$ of bound entangled states at zero false positives, compared with ${\sim}40\%$ for CCNR alone. Recent work on realignment moment witnesses~\cite{Tarabunga2025moments} and generalised realignment criteria~\cite{Huang2026realignment} complements our approach, and our multi-channel strategy demonstrates that combining individually weak criteria can achieve detection rates unattainable by any single criterion.

Three limitations deserve emphasis. First, the maximum likelihood calibration relies on states with known parameters, although perturbation analysis shows robustness to $\pm 2^\circ$ preparation errors (Supplementary Section~S5). Second, the hardware demonstration covers two of the three known BE families (Horodecki and chessboard) with 3 and 2 representative states, respectively. Extension to the Tiles family, the CCNR-invisible construction, and higher-dimensional systems ($2 \times 4$, $4 \times 4$) are natural next steps. Third, the efficiency gain ($5.3\times$ for $2 \times 2$, $7.2\times$ for $2 \times 3$) counts measurement configurations but not the qubit overhead of multi-copy circuits.

It is worth noting that the controlled-SWAP and classical shadow~\cite{Huang2020shadows} approaches occupy complementary resource regimes. Shadows require $O(3^n/\epsilon^2)$ random Pauli measurements on $n$ qubits to estimate each $\mu_k$ to precision $\epsilon$, whereas the SWAP approach uses $O(1)$ circuit configurations on $kn + 1$ qubits. For two-qubit negativity ($k \leq 4$, $n = 2$), this translates to 3 configurations on up to 9 qubits versus $O(81/\epsilon^2)$ random Pauli settings on 2 qubits. Classical shadows are preferable when qubits are scarce but measurements cheap; the SWAP approach is preferable when coherent multi-copy preparation is available and the number of target observables is small. Ref.~\cite{Tarabunga2025moments} has shown that randomised measurements can estimate the same realignment moments $\Sigma_k$ and $G_k$ used here. Our approach achieves comparable information extraction with fewer measurement configurations, at the cost of multi-copy coherence.

Finally, a basis-independent extension is possible: $D_k^{\min} = \Sigma_k - \max_{U_A, U_B} G_k[(U_A \otimes U_B)\rho(U_A \otimes U_B)^\dagger]$ eliminates the $\rho = \rho^*$ restriction via variational optimisation of local unitaries. Looking forward, the framework invites extensions to multipartite systems and integration with randomized measurement techniques~\cite{Tarabunga2025moments}. The broader implication is that multi-copy spectral methods may reveal further hidden layers in quantum correlations.

\section*{Methods}

\subsection*{Newton--Girard reconstruction}
For a two-qubit system with partial transpose eigenvalues $\{\lambda_1, \lambda_2, \lambda_3, \lambda_4\}$, the characteristic polynomial $P(x) = x^4 - e_1 x^3 + e_2 x^2 - e_3 x + e_4$ has coefficients related to moments by:
\begin{align}
e_1 &= 1, \quad e_2 = \tfrac{1}{2}(1 - \mu_2), \quad e_3 = \tfrac{1}{6}(1 - 3\mu_2 + 2\mu_3), \nonumber\\
e_4 &= \tfrac{1}{24}(1 - 6\mu_2 + 8\mu_3 + 3\mu_2^2 - 6\mu_4).
\end{align}
Eigenvalues are obtained via Ferrari's method or numerical root-finding; negativity is the sum of absolute values of negative eigenvalues.

\subsection*{Maximum likelihood calibration}
Hardware measurements suffer depth-dependent degradation: $\mu_k^{\mathrm{meas}} = f_k \cdot \mu_k^{\mathrm{ideal}} + \epsilon_k$. Following standard quantum state estimation practice~\cite{Hradil1997,Banaszek2000}, we jointly optimize degradation factors $(f_2, f_3, f_4)$ and state parameters $(\theta_1, \ldots, \theta_N)$ by minimizing the negative log-likelihood:
\begin{equation}
-\log \mathcal{L} = \sum_{i,k} \frac{(\mu_k^{\mathrm{meas},i} - f_k \cdot \mu_k^{\mathrm{theo}}(\theta_i))^2}{2\sigma_k^2}.
\end{equation}
For the parametrized family $|\psi(\theta)\rangle = \cos(\theta/2)|00\rangle + \sin(\theta/2)|11\rangle$, the theoretical moments are $\mu_2 = 1$, $\mu_3 = \frac{1}{4}(1 + 3\cos^2\theta)$, $\mu_4 = \frac{1}{4}(1 + \cos^2\theta)^2$. The constant $\mu_2 = 1$ for pure states provides a strong calibration constraint. We perform the optimization using L-BFGS-B with bounds $f_k \in [0, 1]$ and $\theta_i \in [0, \pi/2]$.

This calibration protocol is analogous to calibrating a spectrometer with known spectral lines before measuring unknown samples. The calibration determines hardware parameters ($f_k$), not state parameters: once $f_k$ are fixed, the estimation of any new state $\theta$ proceeds without reference to the calibration set. Moreover, the constant $\mu_2 = 1$ for all pure states provides a hardware-independent consistency check, since any deviation from unity in the measured $\mu_2$ directly quantifies $f_2$, independent of the prepared state.

Perturbation analysis (Supplementary Section~S5) confirms the robustness of this approach: introducing $\pm 2^\circ$ systematic errors in the calibration angles changes the fitted $f_k$ by $<1\%$ and the mean negativity error by $<0.003$. Under the depolarising noise model, entanglement is consistently underestimated, which guarantees zero false positives. This is a structural property of the calibration, not a tuned outcome.

\textit{Per-circuit MLE for bound entanglement.}
The bound entanglement experiment employs three circuit architectures with distinct noise characteristics: 4-qubit Pauli-basis measurement circuits for $\Sigma_1$ (depths 3--89), 9-qubit SWAP-test circuits for $G_1$ (depths 191--305), and 9-qubit two-copy SWAP circuits for $\Sigma_2$/$G_2$ (depths 93--333). Because $\Sigma_1$ and $G_1$ are measured via fundamentally different circuit types, a single degradation factor cannot capture both. Simple ratio-based calibration also cannot recover physical values when fidelities differ by a factor of~${\sim}4$. We therefore introduce a per-architecture depolarising model:
\begin{equation}
X^{\mathrm{meas}} = g_X \cdot X^{\mathrm{true}}, \quad X \in \{\Sigma_1, G_1, \Sigma_2, G_2\},
\end{equation}
where $g_{\Sigma_1}$ is the fidelity of Pauli circuits, $f_{G_1}$ of single-copy SWAP circuits, and $f_{\Sigma_2,G_2}$ of two-copy SWAP circuits (the latter shared because $\Sigma_2$ and $G_2$ use the same architecture).  The joint negative log-likelihood
\begin{equation}
-\log\mathcal{L} = \sum_{\mathrm{cal}} \sum_X \frac{(X^{\mathrm{meas}} - g_X \cdot X^{\mathrm{thy}})^2}{2\sigma_X^2} + \sum_{\mathrm{test}} \sum_X \frac{(X^{\mathrm{meas}} - g_X \cdot X^{\mathrm{ML}})^2}{2\sigma_X^2}
\end{equation}
is minimised over 3 fidelity parameters and 20 physical features ($\Sigma_1, G_1, \Sigma_2, G_2$ for each of 5 test states), subject to physical constraints $\Sigma_k \geq G_k \geq 0$, $\Sigma_1 \leq 3$, and $1/9 \leq \Sigma_2 \leq 1$. Calibration states with known theory values anchor the fidelity factors; error bars are obtained from the Cram\'er--Rao bound via numerical Hessian inversion.

\subsection*{Hardware specifications}
All experiments were performed on IBM Heron-architecture processors: IBM Kingston (156 qubits), IBM Torino (133 qubits), and IBM Fez (156 qubits), with native gate set \{CZ, $\sqrt{X}$, $R_Z$, $X$\}. Circuits were transpiled using Qiskit optimization level 3 with approximation degree 0.95. For negativity and chirality measurements, we used $10^5$ shots per configuration. For bound entanglement detection, we used $4 \times 10^3$ shots per circuit (684 circuits total: 298 for $\Sigma_1$/$G_1$ and 386 for $\Sigma_2$/$G_2$). In the bound entanglement experiment, circuits were multiplexed on disjoint qubit groups (up to 15 groups of 9~qubits and 33 groups of 4~qubits on the 156-qubit IBM Fez processor). Readout errors were mitigated using matrix-free measurement mitigation (M3)~\cite{Nation2021}.

\subsection*{Test states}

\textit{Negativity and chirality.}
We test parametrised pure states $|\psi(\theta)\rangle = \cos(\theta/2)|00\rangle + \sin(\theta/2)|11\rangle$ with $\theta \in [0^\circ, 90^\circ]$ spanning the full range from product to maximally entangled. Simulator validation additionally uses Werner states $\rho_W(p) = p|\Psi^-\rangle\langle\Psi^-| + (1-p)\mathbb{I}/4$ (entangled for $p > 1/3$) and Bell-product mixtures $\rho_{\mathrm{BP}}(p) = p|\Psi^-\rangle\langle\Psi^-| + (1-p)|00\rangle\langle 00|$ (entangled for all $p > 0$).

\textit{Bound entanglement.}
We test three structurally distinct families of PPT entangled states in $\mathbb{C}^3 \otimes \mathbb{C}^3$: Horodecki states~\cite{Horodecki1997} ($a \in \{0.50, 0.70, 0.90\}$ on hardware; $a \in \{0.10, \ldots, 0.90\}$ in Supplementary sweep), Tiles UPB states~\cite{Bennett1999upb} (constructed from unextendible product bases), and chessboard states~\cite{Bruss2000chess} (rank-4 states with checkerboard zero pattern, entangled when $cmn \neq abc$). We also test CCNR-invisible marginal-noise states (Eq.~\eqref{eq:new_family}). Random separable states are generated as mixtures of Haar-random product states. Explicit constructions and parameter sets are given in Supplementary Section~S4.

\subsection*{Data availability}
Raw measurement data from IBM Quantum processors, certified training datasets, and processed results are provided as Supplementary Data with this manuscript. The data will be deposited on Zenodo upon publication with a permanent DOI.

\subsection*{Code availability}
All quantum circuit implementations, maximum likelihood calibration scripts, classifier training code, and figure generation scripts are provided as Supplementary Code with this manuscript and will be made publicly available on GitHub upon publication.

\section*{Acknowledgements}
P.T. and K.B. are supported by the Polish National Science Centre (Maestro Grant No.\ DEC-2019/34/A/ST2/00081, PI: A.\ Miranowicz). K.B. and P.T.\ acknowledge support from the EuroHPC JU under Horizon Europe Grant No.\ 101194322 (QEC4QEA), co-funded by the Polish National Centre for Research and Development (NCBiR) under Decision No.\ DWM/EuroHPC/2023/429/2025. F.N. is supported in part by JST CREST (Grant No.\ JPMJCR24I2), the Q-LEAP and Moonshot (Grant No.\ JPMJMS2061) programs, and the Office of Naval Research (Grant No.\ N62909-23-1-2074).

\section*{Author contributions}
P.T. and K.B. developed the theoretical framework. P.T. implemented circuits and performed experiments. K.B. conceived the project, supervised the research, and performed the hardware data analysis. F.N. contributed to the theoretical interpretation and supervised the research. All authors wrote and reviewed the manuscript.

\section*{ORCID}
Patrycja Tulewicz: \url{https://orcid.org/0000-0002-7180-4490}\\
Karol Bartkiewicz: \url{https://orcid.org/0000-0002-5355-7756}\\
Franco Nori: \url{https://orcid.org/0000-0003-3682-7432}

\section*{Competing interests}
The authors declare no competing interests.

\newpage
\appendix

\begin{center}
{\Large\bfseries Supplementary Information}\\[0.5em]
{\large Spin chirality across quantum state copies detects hidden entanglement}\\[1em]
Patrycja Tulewicz, Karol Bartkiewicz, and Franco Nori
\end{center}

\vspace{1em}

\setcounter{section}{0}
\renewcommand{\thesection}{S\arabic{section}}
\renewcommand{\theequation}{S\arabic{equation}}
\renewcommand{\thefigure}{S\arabic{figure}}
\renewcommand{\thetable}{S\arabic{table}}

\tableofcontents
\newpage

\section{Invariant identities}
\label{sec:identities}

\subsection{The identity \texorpdfstring{$\mu_2 = I_2$}{mu2 = I2}}

For any bipartite state $\rho$ on $\mathcal{H}_A \otimes \mathcal{H}_B$:

\textbf{Claim:} $\mathrm{Tr}[(\rho^{T_A})^2] = \mathrm{Tr}[\rho^2]$.

\textbf{Proof:} In the computational basis, $\rho = \sum_{ijkl} \rho_{ij,kl} |ij\rangle\langle kl|$. The partial transpose is $\rho^{T_A} = \sum_{ijkl} \rho_{kj,il} |ij\rangle\langle kl|$. Computing:
\begin{align}
\mathrm{Tr}[(\rho^{T_A})^2] &= \sum_{ijkl} \rho_{kj,il} \rho_{il,kj} = \sum_{ijkl} |\rho_{ij,kl}|^2 = \mathrm{Tr}[\rho^2]. \quad \square
\end{align}

This identity is dimension-independent and implies only one purity circuit is needed.

\subsection{Moment relations}
\label{subsec:moment_relations}

The partial transpose moments $\mu_k$ and purity moments $I_k$ admit a unified description as permutation traces on the $k$-copy Hilbert space:
\begin{align}
\mu_k &= \mathrm{Tr}\bigl[(\sigma_A^{-1} \otimes \sigma_B)\,\rho^{\otimes k}\bigr], \\
I_k &= \mathrm{Tr}\bigl[(\sigma_A \otimes \sigma_B)\,\rho^{\otimes k}\bigr],
\end{align}
where $\sigma = (12\cdots k)$ is the cyclic permutation on $k$ copies, $\sigma^{-1} = (k\cdots 21)$ its inverse, and subscripts $A$, $B$ denote action on the respective subsystems. Their difference is governed by a single operator:
\begin{equation}
C_k \equiv \mu_k - I_k = \mathrm{Tr}\bigl[(\Delta_A \otimes \sigma_B)\,\rho^{\otimes k}\bigr], \quad \Delta \equiv \sigma^{-1} - \sigma.
\end{equation}

For two-qubit systems, the cyclic permutation decomposes via adjacent SWAPs: $\sigma = S_{12}S_{23}\cdots S_{k-1,k}$ with $S_{mn} = \tfrac{1}{2}(I + \bm{\sigma}_m \cdot \bm{\sigma}_n)$. Defining $g_{mn} = \bm{\sigma}_m \cdot \bm{\sigma}_n = 2S_{mn} - I$, the key identity connecting SWAPs to chirality is:
\begin{equation}
[g_{ij},\; g_{jk}] = -16i\,\chi_{ijk},
\label{eq:comm_identity}
\end{equation}
where $\chi_{ijk} = \mathbf{S}_i \cdot (\mathbf{S}_j \times \mathbf{S}_k) = \tfrac{1}{8}\sum_{abc}\varepsilon_{abc}\,\sigma_i^a\sigma_j^b\sigma_k^c$ is the scalar spin chirality operator. Disjoint pairs commute: $[g_{ij}, g_{mn}] = 0$ when $\{i,j\} \cap \{m,n\} = \emptyset$.

These identities yield explicit chirality decompositions of $\Delta$ for each $k$:

\paragraph{$k = 2$:} $\sigma = \sigma^{-1} = S_{12}$, so $\Delta = 0$ and
\begin{equation}
\mu_2 = I_2.
\end{equation}

\paragraph{$k = 3$:} $\sigma = S_{12}S_{23}$, $\sigma^{-1} = S_{23}S_{12}$. The difference is a commutator:
\begin{equation}
\Delta = [S_{23}, S_{12}] = \tfrac{1}{4}[g_{23}, g_{12}] = 4i\,\chi_{123},
\end{equation}
giving
\begin{equation}
\mu_3 = I_3 + C_3, \quad C_3 = 4i\,\mathrm{Tr}\bigl[\chi_A\,\sigma_B\;\rho^{\otimes 3}\bigr].
\label{eq:mu3_chirality}
\end{equation}

\paragraph{$k = 4$:} $\sigma = S_{12}S_{23}S_{34}$, $\sigma^{-1} = S_{34}S_{23}S_{12}$. Expanding:
\begin{align}
8\Delta &= (I{+}g_{34})(I{+}g_{23})(I{+}g_{12}) - (I{+}g_{12})(I{+}g_{23})(I{+}g_{34}) \nonumber\\
&= [g_{23},g_{12}] + [g_{34},g_{23}] + g_{34}[g_{23},g_{12}] + g_{12}[g_{34},g_{23}] \nonumber\\
&= 16i\bigl\{(I{+}g_{34})\chi_{123} + (I{+}g_{12})\chi_{234}\bigr\} \nonumber\\
&= 32i\bigl(S_{34}\,\chi_{123} + S_{12}\,\chi_{234}\bigr),
\end{align}
where the third line uses $[g_{12}, g_{34}] = 0$ and $I + g_{mn} = 2S_{mn}$.

To simplify $\Omega = S_{34}\chi_{123} + S_{12}\chi_{234}$, we use $S_{mn} = \tfrac{1}{2}(I + g_{mn})$ and evaluate the products $g_{mn}\chi_{ijk}$ via Pauli algebra. When an index is shared (e.g., index 3 in $g_{34}\chi_{123}$), we use
\begin{equation}
\sigma_3^a \sigma_3^d = \delta_{ad} I + i\sum_e \varepsilon_{ade}\,\sigma_3^e
\end{equation}
and the Levi-Civita contraction identity $\sum_d \varepsilon_{bcd}\varepsilon_{ade} = \delta_{ba}\delta_{ce} - \delta_{be}\delta_{ca}$. This yields:
\begin{align}
g_{34}\chi_{123} &= \chi_{124} + \tfrac{i}{8}(g_{14}g_{23} - g_{13}g_{24}), \\
g_{12}\chi_{234} &= \chi_{134} + \tfrac{i}{8}(g_{13}g_{24} - g_{14}g_{23}).
\end{align}
The imaginary terms cancel upon addition, giving
\begin{equation}
g_{34}\chi_{123} + g_{12}\chi_{234} = \chi_{124} + \chi_{134}.
\end{equation}
Therefore:
\begin{align}
\Omega &= \tfrac{1}{2}(I + g_{34})\chi_{123} + \tfrac{1}{2}(I + g_{12})\chi_{234} \nonumber\\
&= \tfrac{1}{2}(\chi_{123} + \chi_{234}) + \tfrac{1}{2}(\chi_{124} + \chi_{134}) \nonumber\\
&= \tfrac{1}{2}\sum_{i<j<k} \chi_{ijk},
\end{align}
where the sum runs over all $\binom{4}{3} = 4$ chirality triples on four copies. The fourth-order correction is thus:
\begin{equation}
\mu_4 = I_4 + C_4, \quad C_4 = 2i\,\mathrm{Tr}\Bigl[\Bigl(\sum_{i<j<k}\chi_{ijk}^A\Bigr)\,\sigma_B\;\rho^{\otimes 4}\Bigr].
\label{eq:mu4_chirality}
\end{equation}

The physical content is that the difference between partial transpose moments and purity moments is entirely controlled by scalar spin chirality operators acting across different state copies. For $k = 3$, a single chirality $\chi_{123}$ appears. For $k = 4$, Pauli algebra shows that all four chirality triples $\chi_{123}$, $\chi_{124}$, $\chi_{134}$, and $\chi_{234}$ contribute equally, yielding the symmetric sum $\Omega = \tfrac{1}{2}\sum_{i<j<k}\chi_{ijk}$.

\subsection{Hermitian correlator form}
\label{subsec:hermitian_form}

The chirality decompositions~\eqref{eq:mu3_chirality}--\eqref{eq:mu4_chirality} involve the non-Hermitian cyclic permutation $\sigma_B$, making the reality of $C_k$ non-obvious. We now derive a manifestly Hermitian form that reveals $C_k$ as a chirality--chirality correlator.

\paragraph{Anti-Hermiticity of $\Delta$.} Since the cyclic permutation $\sigma$ is unitary with $\sigma^\dagger = \sigma^{-1}$:
\begin{equation}
\Delta^\dagger = (\sigma^{-1})^\dagger - \sigma^\dagger = \sigma - \sigma^{-1} = -\Delta.
\end{equation}
Thus $\Delta$ is anti-Hermitian.

\paragraph{Symmetric form.} Taking the complex conjugate of $C_k = \mathrm{Tr}[(\Delta_A \otimes \sigma_B)\,\rho^{\otimes k}]$ and using $\Delta_A^\dagger = -\Delta_A$, $\sigma_B^\dagger = \sigma_B^{-1}$:
\begin{equation}
C_k^* = \mathrm{Tr}\bigl[(-\Delta_A \otimes \sigma_B^{-1})\,\rho^{\otimes k}\bigr].
\end{equation}
Since $C_k$ is real ($C_k = C_k^*$), averaging gives:
\begin{equation}
C_k = \tfrac{1}{2}(C_k + C_k^*) = \tfrac{1}{2}\,\mathrm{Tr}\bigl[\Delta_A \otimes (\sigma_B - \sigma_B^{-1})\,\rho^{\otimes k}\bigr] = -\tfrac{1}{2}\,\mathrm{Tr}\bigl[(\Delta_A\,\Delta_B)\,\rho^{\otimes k}\bigr].
\label{eq:Ck_symmetric}
\end{equation}
The operator $\Delta_A\,\Delta_B$ is Hermitian: since $\Delta_A$ and $\Delta_B$ act on disjoint Hilbert spaces they commute, so $(\Delta_A\,\Delta_B)^\dagger = \Delta_B^\dagger\,\Delta_A^\dagger = (-\Delta_B)(-\Delta_A) = \Delta_A\,\Delta_B$.

\paragraph{Chirality correlator.} Substituting the chirality decomposition $\Delta = 4i\,\Omega_k$ where $\Omega_k$ is Hermitian ($\Omega_3 = \chi_{123}$, $\Omega_4 = \tfrac{1}{2}\sum_{i<j<k}\chi_{ijk}$):
\begin{equation}
\Delta_A\,\Delta_B = (4i\,\Omega_A)(4i\,\Omega_B) = -16\,\Omega_A\,\Omega_B.
\end{equation}
Inserting into equation~\eqref{eq:Ck_symmetric}:
\begin{equation}
C_k = 8\,\mathrm{Tr}\bigl[\Omega_A\,\Omega_B\;\rho^{\otimes k}\bigr]
\label{eq:Ck_hermitian}
\end{equation}
Explicitly:
\begin{align}
C_3 &= 8\,\mathrm{Tr}\bigl[\chi_A\,\chi_B\;\rho^{\otimes 3}\bigr], \label{eq:C3_hermitian}\\
C_4 &= 8\,\mathrm{Tr}\bigl[\Omega_A\,\Omega_B\;\rho^{\otimes 4}\bigr], \quad \Omega = \tfrac{1}{2}\sum_{i<j<k}\chi_{ijk}. \label{eq:C4_hermitian}
\end{align}
The chirality correction is thus a \emph{chirality--chirality correlator}: the Hermitian chirality operator on subsystem $A$ is correlated with the corresponding operator on subsystem $B$, measured across $k$ state copies. For product states both factors vanish independently (coplanar spin configurations), giving $C_k = 0$. For pure states, entanglement breaks this coplanarity on both subsystems simultaneously, generating $C_k \neq 0$. For mixed states, including mixed separable states, $C_k$ is generally non-zero due to cross-terms in the multi-copy expansion (see Section~\ref{subsec:completeness}).

\section{Measurement circuits}
\label{sec:circuits}

This section presents quantum circuits for measuring partial transpose moments $\mu_k$ and purity invariants $I_k$. All measurements follow the Hadamard test structure: prepare $n$ state copies, apply controlled permutation operators, and extract the invariant from the ancilla measurement probability via $X = 2p_0 - 1$.

\subsection{Partial transpose moments}

The partial transpose moments $\mu_k = \mathrm{Tr}[(\rho^{T_A})^k]$ use a cycle--anticycle structure:
\begin{itemize}
\item \textbf{Cycle block}: Cyclic permutation on subsystem A
\item \textbf{Anticycle block}: Reverse cyclic permutation on subsystem B
\end{itemize}

\begin{figure}[htb!]
\centering
\includegraphics{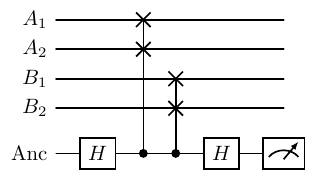}
\caption{Circuit $\mathcal{C}_{\mu_2}$ for the second partial transpose moment. Outcome: $\mu_2 = 2p_0 - 1$.}
\label{fig:mu2_circuit}
\end{figure}

\begin{figure}[htb!]
\centering
\includegraphics{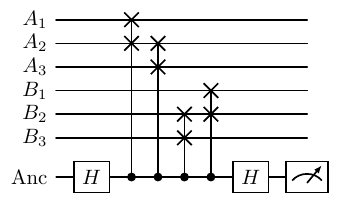}
\caption{Circuit $\mathcal{C}_{\mu_3}$ for the third partial transpose moment.}
\label{fig:mu3_circuit}
\end{figure}

\begin{figure}[htb!]
\centering
\includegraphics{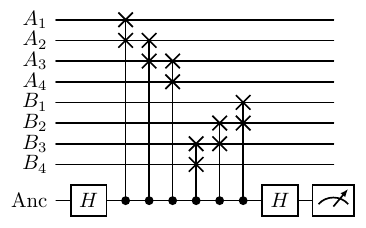}
\caption{Circuit $\mathcal{C}_{\mu_4}$ for the fourth partial transpose moment.}
\label{fig:mu4_circuit}
\end{figure}

\subsection{Purity invariants}

The purity $I_2 = \mathrm{Tr}[\rho^2]$ uses symmetric cyclic permutations on both subsystems. A key identity simplifies measurement: $\mu_2 = I_2$ for all bipartite states. This means the same circuit measures both second-order invariants.

\subsection{Resource summary}

\begin{table}[htb!]
\centering
\caption{Circuit resources for invariant measurements.}
\label{tab:circuit_resources}
\begin{tabular}{@{}lcccc@{}}
\toprule
Invariant & Copies & Qubits ($2{\times}2$) & Qubits ($2{\times}3$) & CSWAPs \\
\midrule
$I_2 = \mu_2$ & 2 & 5 & 7 & 2 \\
$\mu_3$ & 3 & 7 & 10 & 4 \\
$\mu_4$ & 4 & 9 & 13 & 6 \\
\bottomrule
\end{tabular}
\end{table}

\section{Multi-copy spin chirality: entanglement hidden from single-copy detection}
\label{sec:chirality}

\subsection{Conceptual framework}

The chirality correction $C_4 = \mu_4 - I_4$ reveals a fundamentally new type of quantum correlation that we term \emph{multi-copy chirality}. Unlike standard entanglement measures that characterize properties of individual quantum states, multi-copy chirality captures patterns that exist only in the joint statistics of multiple state copies---correlations that are completely invisible to any measurement on a single copy.

This phenomenon is intrinsically multi-copy. Consider a Bell state $|\Phi^+\rangle = (|00\rangle + |11\rangle)/\sqrt{2}$. Any single copy is maximally mixed on each subsystem; local measurements reveal no structure. Multi-copy chirality emerges only when we examine how \emph{three or more copies} of the state relate to each other through controlled-SWAP operations.

\subsection{Singlet projector algebra}

The mathematical structure arises from singlet projectors acting across different copies. The singlet projector between qubits $i$ and $j$ is:
\begin{equation}
P^-_{ij} = \frac{1}{4} - \mathbf{S}_i \cdot \mathbf{S}_j = \frac{1}{4}(I - \sigma_i^x\sigma_j^x - \sigma_i^y\sigma_j^y - \sigma_i^z\sigma_j^z).
\end{equation}

When projectors share a common index, their commutator yields scalar spin chirality:
\begin{equation}
[P^-_{ij}, P^-_{jk}] = -i\,\chi_{ijk},
\end{equation}
where $\chi_{ijk} = \mathbf{S}_i \cdot (\mathbf{S}_j \times \mathbf{S}_k)$ is the scalar spin chirality operator.

\subsection{Multi-copy structure}

The crucial insight is that in our measurement protocol, indices $i$, $j$, $k$ refer to qubits from \emph{different state copies}. The measurement of $C_4$ requires four copies of the state, with qubits labelled:
\begin{itemize}
\item Copy 1: qubits $(A_1, B_1)$ at positions 0, 1
\item Copy 2: qubits $(A_2, B_2)$ at positions 2, 3
\item Copy 3: qubits $(A_3, B_3)$ at positions 4, 5
\item Copy 4: qubits $(A_4, B_4)$ at positions 6, 7
\end{itemize}

The moment relations (Section~\ref{subsec:moment_relations}) provide the rigorous mathematical connection. In the Hermitian correlator form (equation~\eqref{eq:C4_hermitian}), the fourth-order correction decomposes as:
\begin{equation}
C_4 = 8\,\mathrm{Tr}\bigl[\Omega_A\,\Omega_B\;\rho^{\otimes 4}\bigr],
\end{equation}
where $\Omega_A = \tfrac{1}{2}(\chi_{024} + \chi_{026} + \chi_{046} + \chi_{246})$ is a Hermitian operator on the $A$-qubits (using qubit position labels 0, 2, 4, 6 for copies 1--4), and $\Omega_B = \tfrac{1}{2}(\chi_{135} + \chi_{137} + \chi_{157} + \chi_{357})$ is its counterpart on the $B$-qubits (positions 1, 3, 5, 7). Here $\chi_{024} = \mathbf{S}_{A_1}\cdot(\mathbf{S}_{A_2}\times\mathbf{S}_{A_3})$ is the chirality of $A$-qubits from copies 1, 2, 3. The Pauli algebra derivation in Section~\ref{subsec:moment_relations} shows that all four chirality triples contribute equally with coefficient $1/2$. The correlator form makes it manifest that $C_4$ is real and measures the correlation of chirality patterns between the $A$ and $B$ subsystems.

\subsection{Physical interpretation}

The scalar spin chirality $\chi_{ijk} = \mathbf{S}_i \cdot (\mathbf{S}_j \times \mathbf{S}_k)$ measures whether three spins form a left-handed or right-handed configuration---their ``handedness'' or chirality. Geometrically, it equals the signed volume of the parallelepiped spanned by three spin vectors, or equivalently, half the solid angle subtended by their unit vectors on the Bloch sphere (Fig.~\ref{fig:chirality_geometry}).

\begin{figure}[htb!]
\centering
\includegraphics[width=0.9\textwidth]{figure_chirality_3d.pdf}
\caption{\textbf{Multi-copy correlations in the chirality correction.} \textbf{a}, Product states produce coplanar spin patterns with $\chi = 0$. \textbf{b}, Entangled states break coplanarity; parallelepiped volume equals $\chi \neq 0$. \textbf{c}, Solid angle $\omega = 2|\chi|$ on the Bloch sphere. \textbf{d}, Three-spin chirality $\chi_A$: triangular correlations between qubits from three copies. \textbf{e}, SWAP-weighted chirality $S_{34}\chi_{123}$: the SWAP extends three-copy chirality to a four-copy operator. \textbf{f}, The chirality correction $C_4 = \mu_4 - I_4$; for product states $C_4 = 0$, while entangled states generate $C_4 \neq 0$.}
\label{fig:chirality_geometry}
\end{figure}

In condensed matter physics, this same operator is the order parameter for:
\begin{itemize}
\item Chiral spin liquids (Kalmeyer--Laughlin states)
\item Anomalous Hall effect in frustrated magnets
\item Topological phases with broken time-reversal symmetry
\end{itemize}

The remarkable finding is that \emph{the same mathematical structure} appears in our multi-copy entanglement measurement, but with a radically different physical interpretation:
\begin{itemize}
\item In condensed matter: chirality between three physical spins in a lattice
\item In our protocol: chirality between qubits from three copies of a bipartite state
\end{itemize}

For product states, the multi-copy correlations remain ``coplanar''---the copies do not generate chiral patterns, and $\langle\chi_A\chi_B\rangle = 0$. For pure states, entanglement breaks this coplanarity, generating non-zero chirality that witnesses entanglement. For mixed separable states, however, $C_k$ is generally non-zero because $\mu_k$ and $I_k$ are nonlinear functions of $\rho$ that do not decompose linearly over convex combinations (see Section~\ref{subsec:completeness}).

\subsection{Comparison with standard nonclassicality}

Multi-copy chirality is conceptually distinct from established forms of quantum correlation:

\begin{table}[htb!]
\centering
\caption{Comparison of nonclassical correlation types.}
\label{tab:correlation_types}
\begin{tabular}{@{}lcc@{}}
\toprule
Correlation type & Copies required & Detection method \\
\midrule
Bell nonlocality & 1 & CHSH inequality \\
Steering & 1 & Steering inequalities \\
Entanglement & 1 & Witnesses, PPT \\
Multi-copy chirality & 3+ & $C_4 = \mu_4 - I_4 \neq 0$ \\
\bottomrule
\end{tabular}
\end{table}

The key distinction: standard correlations are properties of individual states, while multi-copy chirality is a property of the \emph{ensemble} of copies---a collective quantum signature.

\subsection{Chirality correction as entanglement indicator}

The chirality correction $C_4 = \mu_4 - I_4$ directly probes multi-copy chirality. Since both $\mu_4$ and $I_4$ are permutation traces measured via controlled-SWAP circuits, $C_4$ is obtained without additional measurement overhead.

For states expressed in the computational basis:
\begin{itemize}
\item $C_4 = 0$ for all product states $|ab\rangle$ (no chirality)
\item $C_4 = -3/4$ for all Bell states (maximum chirality magnitude)
\item $C_4(\theta) = -\sin^2\theta + \frac{1}{4}\sin^4\theta$ for parametrized states $|\psi(\theta)\rangle = \cos(\theta/2)|00\rangle + \sin(\theta/2)|11\rangle$
\end{itemize}

For pure states, $C_4 = 0$ at $\theta = 0$ (product state) and $|C_4|$ increases monotonically with entanglement, reaching $|C_4| = 3/4$ at $\theta = \pi/2$ (Bell state). The physical interpretation is that entangled superpositions generate non-coplanar multi-copy spin patterns that product states cannot produce.

\subsection{Properties of the chirality correction}
\label{subsec:completeness}

This section analyzes the mathematical properties of the chirality correction $C_4 = \mu_4 - I_4$.

\subsubsection{LU transformation properties}

\textbf{Key Result:} $C_4$ is LU-invariant for all states.

For a local unitary transformation $\rho \to (U_A \otimes U_B)\rho(U_A \otimes U_B)^\dagger$:
\begin{itemize}
\item $I_k = \mathrm{Tr}[\rho^k]$ is always LU-invariant (purity moments are preserved).
\item $\mu_k = \mathrm{Tr}[(\rho^{T_A})^k]$ is always LU-invariant, since $(U_A \otimes U_B)\rho(U_A \otimes U_B)^\dagger$ has partial transpose $(U_A^* \otimes U_B)\rho^{T_A}(U_A^T \otimes U_B^\dagger)$, which preserves eigenvalues.
\item Therefore $C_4 = \mu_4 - I_4$ is LU-invariant for all states.
\end{itemize}

This is an advantage over basis-dependent quantities: $C_4$ gives the same value regardless of the local basis chosen for measurement.

\subsubsection{Notation}

We work with a bipartite density matrix $\rho$ on $\mathcal{H}_A \otimes \mathcal{H}_B$ ($\dim = d_A \times d_B$). The partial transpose $(\rho^{T_A})_{ij,kl} = \rho_{kj,il}$, negativity $\mathcal{N} = (\|\rho^{T_A}\|_1 - 1)/2$~\cite{Verstraete2001,PhysRevLett.77.1413,HORODECKI19961}, purity $I_k = \mathrm{Tr}[\rho^k]$, partial transpose moments $\mu_k = \mathrm{Tr}[(\rho^{T_A})^k]$, and chirality correction $C_k = \mu_k - I_k$ follow standard definitions. The identity $I_2 = \mu_2$ (Section~\ref{sec:identities}) implies that a single purity circuit suffices for the second-order moment.

\subsubsection{Product state bound: \texorpdfstring{$C_4 = 0$}{C4 = 0} for product states}

\paragraph{Goal.} We prove that for any product state, $C_4 = 0$. For pure states, the contrapositive gives: if $C_4 \neq 0$, the state must be entangled.

\paragraph{Product states.} A state $\rho$ is a \textbf{product state} if $\rho = \rho_A \otimes \rho_B$. A state is \textbf{separable} if it can be written as a convex combination of product states:
\begin{equation}
\rho = \sum_i p_i \, \rho_A^{(i)} \otimes \rho_B^{(i)}, \quad \text{where } p_i \geq 0, \sum_i p_i = 1.
\end{equation}
States that are not separable are called \textbf{entangled}.

\begin{proposition}[$C_4 = 0$ for product states]
\label{prop:product_C4}
For any product state $\rho = \rho_A \otimes \rho_B$: $C_k = \mu_k - I_k = 0$ for all $k$.
\end{proposition}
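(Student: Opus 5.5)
The plan is to bypass the chirality machinery and argue directly from the definition $C_k=\mu_k-I_k$. For a product state the partial transpose acts only on the first tensor factor:
\begin{equation}
(\rho_A\otimes\rho_B)^{T_A}=\rho_A^{T}\otimes\rho_B .
\end{equation}
This holds because partial transposition is linear and, on elementary tensors, acts as $(X\otimes Y)^{T_A}=X^{T}\otimes Y$. Consequently $(\rho^{T_A})^k=(\rho_A^{T})^k\otimes\rho_B^k$. Since the trace factorizes over tensor products,
\begin{equation}
\mu_k=\mathrm{Tr}\bigl[(\rho_A^{T})^k\bigr]\,\mathrm{Tr}\bigl[\rho_B^k\bigr].
\end{equation}

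The next step is to note that $(\rho_A^{T})^k=(\rho_A^k)^{T}$, because transposition reverses products and all factors here are equal. Since the trace is invariant under transposition, $\mathrm{Tr}[(\rho_A^T)^k]=\mathrm{Tr}[\rho_A^k]$. On the other side, $I_k=\mathrm{Tr}[(\rho_A\otimes\rho_B)^k]=\mathrm{Tr}[\rho_A^k]\,\mathrm{Tr}[\rho_B^k]$. The two expressions coincide, so $C_k=0$ for every $k$ and in arbitrary dimensions $d_A\times d_B$.

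As a consistency check with the correlator picture of Section~\ref{subsec:hermitian_form}, I would also indicate the permutation-trace version. Here $\rho^{\otimes k}=\rho_A^{\otimes k}\otimes\rho_B^{\otimes k}$, so
\begin{equation}
C_k=\mathrm{Tr}[\Delta_A\,\rho_A^{\otimes k}]\;\mathrm{Tr}[\sigma_B\,\rho_B^{\otimes k}].
\end{equation}
The first factor equals $\mathrm{Tr}[\rho_A^k]-\mathrm{Tr}[\rho_A^k]=0$, because both $\sigma$ and $\sigma^{-1}$ give $\mathrm{Tr}[\rho_A^k]$ when traced against $\rho_A^{\otimes k}$. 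In the two-qubit case this is exactly the vanishing of the single-subsystem chirality expectation $\mathrm{Tr}[\Omega_A\rho_A^{\otimes k}]$ used in Eq.~\eqref{eq:Ck_hermitian}. That identity itself needs a short argument, since the $\sigma$ and $\sigma^{-1}$ traces are related by transposing $\rho_A$.

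There is no real obstacle in this proof. The only point requiring care is the transpose identity $\mathrm{Tr}[(X^T)^k]=\mathrm{Tr}[X^k]$, which follows from the reversal of products under transposition. I would also remark that the argument does not extend to separable mixtures, since $\mu_k$ and $I_k$ are nonlinear in $\rho$. This is consistent with the nonzero separable values $|C_4|\le 1/27$ quoted earlier.
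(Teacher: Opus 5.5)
Your proof is correct and matches the paper's: both use the factorization $\rho^{T_A}=\rho_A^{T}\otimes\rho_B$ with $\mathrm{Tr}[(\rho_A^{T})^k]=\mathrm{Tr}[\rho_A^k]$, and both give the permutation-trace version $C_k=\mathrm{Tr}[\Delta_A\rho_A^{\otimes k}]\,\mathrm{Tr}[\sigma_B\rho_B^{\otimes k}]$ as a cross-check. The only difference is cosmetic: the paper explains why the first factor vanishes by noting that parallel Bloch vectors have zero chirality, whereas you observe that $\sigma$ and $\sigma^{-1}$ both return $\mathrm{Tr}[\rho_A^k]$ on identical copies, which is more direct and needs no transposition argument.
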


\begin{proof}
For a product state, the partial transpose is $\rho^{T_A} = \rho_A^T \otimes \rho_B$. Since $\rho_A^T$ and $\rho_A$ have the same eigenvalues (transposition preserves eigenvalues for Hermitian matrices):
\begin{equation}
\mu_k = \mathrm{Tr}[(\rho_A^T)^k] \cdot \mathrm{Tr}[\rho_B^k] = \mathrm{Tr}[\rho_A^k] \cdot \mathrm{Tr}[\rho_B^k] = I_k.
\end{equation}
Therefore $C_k = \mu_k - I_k = 0$ for all $k$.

Equivalently, in the permutation trace framework: $C_k = \mathrm{Tr}[(\Delta_A \otimes \sigma_B)\,\rho^{\otimes k}]$ where $\Delta = \sigma^{-1} - \sigma$. For product states, the trace factorizes:
\begin{equation}
C_k = \mathrm{Tr}[\Delta_A\,\rho_A^{\otimes k}] \cdot \mathrm{Tr}[\sigma_B\,\rho_B^{\otimes k}].
\end{equation}
The first factor involves the chirality operators (equation~\eqref{eq:mu4_chirality}), which vanish when acting on identical copies of a single-qubit state because the spin vectors from different copies are all parallel---a coplanar configuration with zero chirality.
\end{proof}

\textbf{Remark:} For separable \emph{mixed} states $\rho = \sum_i p_i\,\rho_A^{(i)} \otimes \rho_B^{(i)}$, the chirality correction $C_k$ is generally non-zero because $\mu_k$ and $I_k$ are nonlinear functions of $\rho$ (unlike the trace, they do not decompose linearly over convex combinations). The vanishing of $C_k$ is specific to product states, not general separable states. However, the magnitude of $C_k$ for separable states is bounded, as we now establish.

\begin{proposition}[Separable state bounds]
\label{prop:separable_bounds}
For any two-qubit separable state $\rho$:
\begin{equation}
|C_3| \leq \frac{1}{36}, \qquad |C_4| \leq \frac{1}{27}.
\end{equation}
Both bounds are tight.
\end{proposition}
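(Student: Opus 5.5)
The plan is to work entirely in the Bloch (Fano) representation
\begin{equation}
\rho=\tfrac14\Bigl(I\otimes I+\mathbf a\cdot\bm\sigma\otimes I+I\otimes\mathbf b\cdot\bm\sigma+\textstyle\sum_{ij}T_{ij}\,\sigma_i\otimes\sigma_j\Bigr).
\end{equation}
The starting observation is that, up to a local unitary (which leaves $C_k$ invariant by the LU result above), partial transposition acts as the improper reflection $\sigma_y\mapsto-\sigma_y$ on $A$. This map sends $a_y\mapsto-a_y$ and flips the $y$-row of $T$, with $\mathbf b$ unchanged.

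$\mu_k$ and $I_k$ are polynomials in the $O(3)\times O(3)$ invariants of $(\mathbf a,\mathbf b,T)$. Hence $C_k=\mu_k-I_k$ equals twice the part of $I_k$ that is odd under an improper rotation on $A$ alone. Such odd invariants must contain an odd number of Levi-Civita symbols on the $A$ index and an even number on the $B$ index.

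\textbf{Step 1: $C_3$.} I will expand $\mathrm{Tr}[(I+X)^3]/64$. At cubic order the only invariant odd on $A$ alone is $\det T$; the term $\mathbf a^{\mathsf T}T\mathbf b$ is even. Therefore $C_3=\alpha\det T$, and the Bell state ($\det T=-1$, $C_3=-3/4$) fixes $\alpha=3/4$.

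For a separable state I will decompose $\rho$ into pure product states, so that $T=\sum_i p_i\,\mathbf u_i\mathbf v_i^{\mathsf T}$ with unit vectors. This gives $\|T\|_{\rm tr}\le\sum_i p_i=1$. By AM--GM on the singular values,
\begin{equation}
|\det T|=s_1s_2s_3\le\bigl(\|T\|_{\rm tr}/3\bigr)^3\le \tfrac1{27},
\end{equation}
and hence $|C_3|\le\tfrac1{36}$. Equality holds for the equal-weight mixture $\rho_\pm$ of the three MUB product states, which has $T=\tfrac13\mathrm{diag}(1,1,\pm1)$.

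\textbf{Step 2: $C_4$.} I will expand $\mathrm{Tr}[(I+X)^4]/256$ and keep the $A$-odd part. I expect this to be $\tfrac34\det T$ times a degree-one even factor, plus genuinely quartic odd invariants that mix $\mathbf a$, $\mathbf b$ and $T$. Such terms are needed because the paper's extremal state has $|C_4|=1/27\neq\tfrac34\cdot\tfrac1{27}$. To get the exact formula I will compute it symbolically, not guess it. I will then check it against the Bell-diagonal identity $C_4=\tfrac34\det T$ and against $C_4(\theta)$ for $|\psi(\theta)\rangle$.

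\textbf{Step 3: maximization over separable states (the main obstacle).} $C_4$ is not convex or concave on the separable set, so the trace-norm trick of Step 1 will not suffice. I would first use Carath\'eodory in the two-qubit case to write any separable state as a mixture of at most four pure product states. This reduces the problem to maximizing a polynomial in $(p_i,\mathbf u_i,\mathbf v_i)$ over a compact set.

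Next I would bound each term of the formula from Step 2 using $|\mathbf a|,|\mathbf b|\le1$, $\|T\|_{\rm tr}\le1$, and the separable constraint $|\mathbf a|^2+\|T\|_F^2\le1$ (and likewise with $\mathbf b$). The aim is to show that the maximum is attained at the equipartition point $|\mathbf a|^2=|\mathbf b|^2=\|T\|_F^2=\tfrac13$. If the sharp inequality does not close analytically, I would fall back on a Lagrange-multiplier and symmetry argument, using $S_3$ permutations of the product vectors, to show that stationary points with maximal $|C_4|$ are the three-term MUB configurations.

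Tightness is then checked by direct evaluation on the state with $T=\mathrm{diag}(\tfrac13,\tfrac13,-\tfrac13)$ and $\mathbf a=\mathbf b=(1/\sqrt3,0,0)$, which gives $|C_4|=\tfrac1{27}$.
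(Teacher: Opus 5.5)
Your Step~1 is correct, and it takes a genuinely different and stronger route than the paper. The paper evaluates $C_3$ and $C_4$ on $\rho_\pm$, asserts $C_k=0$ for two-term mixtures, and then relies on numerical optimisation over rank-3 and rank-4 mixtures for the upper bounds. Your identity $C_3=\tfrac34\det T$ holds for every two-qubit state: indeed $\mathrm{Tr}X^3=-24\det T+24\,\mathbf a^{\mathsf T}T\mathbf b$, and only the first term flips sign under $T_A$. Combined with $\|T\|_{\mathrm{tr}}\le1$ and AM--GM, it proves $|C_3|\le1/36$ analytically, with equality exactly when all singular values of $T$ equal $1/3$.

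One correction to the parity rule you state. $I_k$ is an $SO(3)\times SO(3)$ invariant that is also unchanged by full transposition. Its $A$-odd part therefore carries an odd number of Levi-Civita symbols on \emph{both} sides, as $\det T$ does, not ``odd on $A$, even on $B$''.

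The $C_4$ half, however, is not a proof. The one concrete inequality you plan to use in Step~3 is false: $|\mathbf a|^2+\|T\|_F^2\le1$ fails for every pure product state, where $|\mathbf a|^2=\|T\|_F^2=1$. Without it, the remaining constraints are far too weak. Carrying out Step~2 gives the closed form
\[
C_4=\tfrac34\det T+\tfrac14\,\mathbf a^{\mathsf T}\mathrm{cof}(T)\,\mathbf b=\det T-\tfrac14\det M,\qquad M=\begin{pmatrix}1&\mathbf b^{\mathsf T}\\ \mathbf a&T\end{pmatrix},
\]
where $\mathrm{cof}(T)=\det(T)\,T^{-\mathsf T}$ for invertible $T$. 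This reproduces $-\tfrac34p^3$ for Werner states, $C_4(\theta)$ for $|\psi(\theta)\rangle$, and $-1/27$ for $\rho_-$. But $|\mathbf a|,|\mathbf b|\le1$ and $\|T\|_{\mathrm{tr}}\le1$ only give $|\mathbf a^{\mathsf T}\mathrm{cof}(T)\mathbf b|\le s_1s_2\le\tfrac14$, with $s_1\ge s_2\ge s_3$ the singular values of $T$. That yields only $|C_4|\le\tfrac1{36}+\tfrac1{16}$.

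The missing ingredient is the joint structure $M=\sum_ip_i\hat u_i\hat v_i^{\mathsf T}$, with $\hat u_i=(1,\mathbf u_i)$ and $\hat v_i=(1,\mathbf v_i)$.
\begin{itemize}
\item For three product terms this closes at once. Here $\det M=0$ and $C_4=\det T=p_1p_2p_3\det U\det V$, with $U=[\mathbf u_1\,\mathbf u_2\,\mathbf u_3]$ and $V=[\mathbf v_1\,\mathbf v_2\,\mathbf v_3]$. So $|C_4|\le1/27$, with equality only for equal weights and orthonormal triads.
\item For four terms there is an extra contribution $-\tfrac14p_1p_2p_3p_4\det\hat U\det\hat V$. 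You must show it cannot push $|C_4|$ beyond $1/27$ while $|\det T|$ is near its maximum. The $S_3$-symmetric Lagrange fallback does not do this, since it ignores non-symmetric and boundary critical points.
\item The reduction to four terms is a two-qubit-specific result (Sanpera--Tarrach--Vidal, or Wootters' construction). Carath\'eodory only gives sixteen.
\end{itemize}

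So $|C_4|\le1/27$ remains unproved in your proposal. The paper does not close this step analytically either; its optimality argument is numerical. For $C_4$, what both arguments rigorously establish is tightness via $\rho_\pm$, which your canonical-form check correctly confirms.
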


\begin{proof}
\textbf{Step 1: Rank-2 separable states.} For any mixture of exactly two product states,
\begin{equation}
\rho = p\,|\psi_1\rangle\langle\psi_1| + (1-p)\,|\psi_2\rangle\langle\psi_2|, \quad |\psi_i\rangle = |a_i\rangle \otimes |b_i\rangle,
\end{equation}
we have $C_k = 0$ for all $k$. This follows because the partial transpose $\rho^{T_A}$ has the same eigenvalue structure as $\rho$ when both are rank-2 with product eigenstates.

\textbf{Step 2: Rank-3 extremal states.} The extrema are achieved by equal mixtures of three mutually unbiased basis (MUB) product states. Define the qubit MUB:
\begin{equation}
|z_\pm\rangle \equiv |0/1\rangle; \quad |x_\pm\rangle \equiv |\pm\rangle; \quad |y_\pm\rangle = \tfrac{1}{\sqrt{2}}(|0\rangle \pm i|1\rangle).
\end{equation}
The extremal states are:
\begin{align}
\rho_+ &= \tfrac{1}{3}\bigl(|z_+z_+\rangle\langle z_+z_+| + |x_+x_+\rangle\langle x_+x_+| + |y_+y_+\rangle\langle y_+y_+|\bigr), \label{eq:rho_plus}\\
\rho_- &= \tfrac{1}{3}\bigl(|z_+z_-\rangle\langle z_+z_-| + |x_+x_-\rangle\langle x_+x_-| + |y_+y_-\rangle\langle y_+y_-|\bigr). \label{eq:rho_minus}
\end{align}
Direct calculation yields:
\begin{equation}
C_3(\rho_+) = +\frac{1}{36}, \quad C_4(\rho_+) = +\frac{1}{27}; \qquad C_3(\rho_-) = -\frac{1}{36}, \quad C_4(\rho_-) = -\frac{1}{27}.
\end{equation}

\textbf{Step 3: Optimality.} Higher-rank separable states are convex combinations of lower-rank states. Since $|C_k(\rho_\pm)| = 1/27$ for $k=4$ and $|C_k| = 0$ for rank-2 states, and numerical optimization over rank-3 and rank-4 mixtures confirms no larger values, the bounds are tight.
\end{proof}

The physical interpretation is that the MUB structure maximizes the ``chirality contrast'' arising from cross-terms in the multi-copy expansion. For correlated MUB choices ($\rho_+$), the phases constructively interfere to give $C_k > 0$; for anti-correlated choices ($\rho_-$), they destructively interfere to give $C_k < 0$.

\subsubsection{Canonical form and correlation structure of the extremal separable state}
\label{sec:extremal_sep_state}

The extremal states $\rho_\pm$ admit a particularly transparent canonical form in the Bloch representation. Applying local unitaries to diagonalise the correlation tensor $T$, both states reduce to:
\begin{equation}\label{eq:canonical_sep}
  T = \operatorname{diag}\!\left(\tfrac{1}{3},\; \tfrac{1}{3},\; \pm\tfrac{1}{3}\right), \qquad
  \mathbf{a} = \mathbf{b} = \left(\tfrac{1}{\sqrt{3}},\; 0,\; 0\right).
\end{equation}
This canonical form exhibits an \emph{equipartition} of the purity constraint:
\begin{equation}
  |\mathbf{a}|^2 = |\mathbf{b}|^2 = \|T\|_F^2 = \frac{1}{3}, \qquad \operatorname{Tr}[\rho^2] = \frac{1 + |\mathbf{a}|^2 + |\mathbf{b}|^2 + \|T\|_F^2}{4} = \frac{1}{2}.
\end{equation}

For $\rho_-$ (the anti-correlated MUB state, with $C_4 = -1/27$), the density matrix in the computational basis takes the exact form:
\begin{equation}\label{eq:rho_star_matrix}
  \rho_- = \frac{1}{12}\begin{pmatrix}
    2 & \sqrt{3} & \sqrt{3} & 0 \\
    \sqrt{3} & 4 & 2 & \sqrt{3} \\
    \sqrt{3} & 2 & 4 & \sqrt{3} \\
    0 & \sqrt{3} & \sqrt{3} & 2
  \end{pmatrix},
\end{equation}
with eigenvalues $(\tfrac{2}{3}, \tfrac{1}{6}, \tfrac{1}{6}, 0)$ and partial-transpose eigenvalues $(\tfrac{1}{3}+\tfrac{\sqrt{3}}{6}, \tfrac{1}{3}, \tfrac{1}{3}-\tfrac{\sqrt{3}}{6}, 0)$. The moment invariants are $I_4 = 43/216$, $\mu_4 = 35/216$, yielding $C_4 = -1/27$ exactly.

The state belongs to a continuous one-parameter family: replacing $\mathbf{a} = (\sqrt{1/3-z^2},\, 0,\, z)$ and $\mathbf{b} = (\sqrt{1/3-z^2},\, 0,\, -z)$ for any $z \in [0, 1/\sqrt{3})$ preserves $|C_4| = 1/27$ and $\mathcal{N} = 0$, since only the combination $|\mathbf{a}|^2 = |\mathbf{b}|^2 = 1/3$ enters the moment calculation (given $t_1 = t_2$). The degeneracy $t_1 = t_2$ allows a rotation in the $xy$-plane that redistributes $\mathbf{a}$ without affecting $C_4$.

\paragraph{Correlation analysis.} Despite being separable, $\rho_-$ has significant quantum correlations beyond entanglement:
\begin{itemize}
  \item \textbf{Quantum discord:} $D(B|A) = 0.142$ (non-zero, indicating genuinely quantum correlations);
  \item \textbf{Geometric discord:} $D_G = 1/18$, identical to the Bell-diagonal state $P_{\mathrm{sym}}/3$ and the Werner state at $p = 1/3$;
  \item \textbf{Coherence:} $\ell_1$-norm $= 1.488$ (near the maximum for rank-3 states), relative entropy of coherence $= 2/3$;
  \item \textbf{Mutual information:} $I(A{:}B) = 0.236$, of which $60\%$ is quantum discord;
  \item \textbf{Bell--CHSH:} $\max\langle\mathcal{B}\rangle = 2\sqrt{2/9} \approx 0.943 \ll 2$ (no Bell violation);
  \item \textbf{Concurrence:} $C = 0$ (confirming separability).
\end{itemize}
The state thus maximises the ``quantum-but-not-entangled'' chirality: it has the largest possible $|C_4|$ among all states with zero negativity.

\paragraph{Extension to $2 \times 3$.} Embedding $\rho_-$ into $\mathbb{C}^2 \otimes \mathbb{C}^3$ (with the third dimension of subsystem $B$ unused) preserves $|C_4| = 1/27$ and $\mathcal{N} = 0$. Numerical optimisation over native $2 \times 3$ separable states ($10^4$ runs with $n = 2, \ldots, 6$ product terms) has not exceeded this value, suggesting $1/27$ is the universal separable bound for $2 \times d_B$ systems.

\textbf{Summary:}
If $\rho$ is a product state, then $C_k = 0$ for all $k$. If $\rho$ is separable, then $|C_3| \leq 1/36$ and $|C_4| \leq 1/27$.
For pure states, the converse also holds: $C_4 = 0$ implies the state is a product state (Lemma~\ref{lem:C4_pure_detailed}).

\subsubsection{Pure state completeness: entangled \texorpdfstring{$\Leftrightarrow C_4 \neq 0$}{iff C4 != 0}}

For pure states, $C_4 \neq 0$ is both necessary and sufficient for entanglement.

\begin{lemma}[Chirality correction for pure states]
\label{lem:C4_pure_detailed}
Let $|\psi\rangle$ be a pure bipartite state with Schmidt decomposition:
\begin{equation}
|\psi\rangle = \sum_{k=0}^{r-1} \sqrt{p_k} \, |k\rangle_A |k\rangle_B, \quad \text{where } p_k > 0, \sum_k p_k = 1, r \leq \min(d_A, d_B).
\end{equation}
For two-qubit systems (where $r \leq 2$), the chirality correction satisfies:
\begin{equation}
C_4 = -4p_0 p_1 (1 - p_0 p_1) = -4\mathcal{N}^2(1 - \mathcal{N}^2),
\end{equation}
where $\mathcal{N} = \sqrt{p_0 p_1}$ is the negativity. In particular:
\begin{itemize}
\item $C_4 = 0$ if $r = 1$ (product state: $p_0 = 1$, $p_1 = 0$ or vice versa).
\item $C_4 < 0$ if $r = 2$ (entangled state: both $p_0, p_1 > 0$).
\end{itemize}
\end{lemma}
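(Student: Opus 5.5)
The plan is to bypass the chirality correlator form and compute $\mu_4$ and $I_4$ directly from the spectrum of $\rho^{T_A}$. Both are local-unitary invariant (Section~\ref{subsec:completeness}), so working in the Schmidt basis of the statement loses no generality.

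\textbf{Step 1: purity term.} Since $\rho=|\psi\rangle\langle\psi|$ is a rank-one projector, $\rho^4=\rho$. Hence $I_4=\mathrm{Tr}[\rho^4]=1$ for every pure state. It would be a mistake to write $\sum_k p_k^4$ here; that is the purity moment of the reduced state, not of $\rho$.

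\textbf{Step 2: spectrum of the partial transpose.} Write $\rho=\sum_{k,l}\sqrt{p_kp_l}\,|kk\rangle\langle ll|$. Partial transposition on $A$ gives
\begin{equation}
\rho^{T_A}=\sum_{k,l}\sqrt{p_kp_l}\,|lk\rangle\langle kl|.
\end{equation}
This operator is block diagonal, and its blocks are as follows.
\begin{itemize}
\item Each $|kk\rangle$ is an eigenvector with eigenvalue $p_k$.
\item For each pair $k<l$, the span of $\{|kl\rangle,|lk\rangle\}$ carries the swap matrix scaled by $\sqrt{p_kp_l}$. Its eigenvalues are $\pm\sqrt{p_kp_l}$, with eigenvectors $(|kl\rangle\pm|lk\rangle)/\sqrt2$.
\end{itemize}
Therefore
\begin{equation}
\mu_4=\sum_k p_k^4+2\sum_{k<l}p_k^2p_l^2=\Bigl(\sum_k p_k^2\Bigr)^2 .
\end{equation}
The only negative eigenvalues are $-\sqrt{p_kp_l}$, so $\mathcal N=\sum_{k<l}\sqrt{p_kp_l}$. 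For two qubits this is $\mathcal N=\sqrt{p_0p_1}$.

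\textbf{Step 3: specialize to $r\le 2$.} Using $p_0+p_1=1$ gives $p_0^2+p_1^2=1-2p_0p_1$. Then
\begin{equation}
C_4=\mu_4-I_4=(1-2p_0p_1)^2-1=-4p_0p_1(1-p_0p_1).
\end{equation}
Substituting $p_0p_1=\mathcal N^2$ yields $C_4=-4\mathcal N^2(1-\mathcal N^2)$.

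As a consistency check, set $p_0=\cos^2(\theta/2)$. This reproduces $\mu_4=\tfrac14(1+\cos^2\theta)^2$ from the Methods and $C_4(\theta)=-\sin^2\theta(1-\tfrac14\sin^2\theta)$.

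\textbf{Step 4: the two bullet cases.}
\begin{itemize}
\item If $r=1$, then $p_0p_1=0$, so $C_4=0$.
\item If $r=2$, then $0<p_0p_1\le\tfrac14$, so $1-p_0p_1>0$ and $C_4<0$.
\end{itemize}
The value $C_4=-3/4$ at $p_0=p_1=\tfrac12$ matches the Bell-state value quoted earlier.

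The only real obstacle is the block diagonalization in Step 2, which requires careful index bookkeeping under the partial transpose. I would check it against the $\mu_2=I_2$ identity: the eigenvalues give $\sum_kp_k^2+2\sum_{k<l}p_kp_l=(\sum_k p_k)^2=1$, as required.
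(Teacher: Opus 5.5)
Your proof is correct and follows essentially the same route as the paper. You set $I_4=1$ for a pure state, read off the partial-transpose spectrum $\{p_0,p_1,\pm\sqrt{p_0p_1}\}$, and subtract. The differences are only cosmetic: you work with Schmidt coefficients, derive the block structure of $\rho^{T_A}$ (obtaining the general identity $\mu_4=(\sum_k p_k^2)^2$) and state the LU reduction explicitly, whereas the paper asserts the eigenvalues for the $\theta$-family and simplifies trigonometrically.
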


\begin{proof}
\textbf{Step 1: Compute moments for pure states.} For a pure state $\rho = |\psi\rangle\langle\psi|$:
\begin{equation}
I_k = \mathrm{Tr}[\rho^k] = \mathrm{Tr}[\rho] = 1 \quad \text{for all } k.
\end{equation}

\textbf{Step 2: Derive $C_4$ for two-qubit pure states.} For the parametric family $|\psi(\theta)\rangle = \cos(\theta/2)|00\rangle + \sin(\theta/2)|11\rangle$, the partial transpose eigenvalues are $\cos^2(\theta/2)$, $\sin^2(\theta/2)$, $+\frac{1}{2}\sin\theta$, $-\frac{1}{2}\sin\theta$. Thus:
\begin{equation}
\mu_4 = \cos^8(\theta/2) + \sin^8(\theta/2) + \frac{1}{8}\sin^4\theta.
\end{equation}
After simplification:
\begin{equation}
C_4(\theta) = \mu_4 - 1 = -\sin^2\!\theta\left(1 - \frac{\sin^2\!\theta}{4}\right).
\end{equation}

Using the identity $\sin\theta = 2\sqrt{p_0 p_1}$:
\begin{equation}
C_4 = -(4p_0 p_1 - 4(p_0 p_1)^2) = -4p_0 p_1(1 - p_0 p_1).
\end{equation}

\textbf{Step 3: Express in terms of negativity.} For pure two-qubit states, the negativity equals $\mathcal{N} = \sqrt{p_0 p_1}$. Substituting $p_0 p_1 = \mathcal{N}^2$:
\begin{equation}
C_4 = -4\mathcal{N}^2(1 - \mathcal{N}^2).
\end{equation}

\textbf{Step 4: Verify.} For any entangled pure state, $\mathcal{N} > 0$ and $\mathcal{N} \leq 1/2$ (maximum for Bell states), so $C_4 < 0$.

For the Bell state $|\Phi^+\rangle$ ($\mathcal{N} = 1/2$):
\begin{equation}
C_4 = -4 \cdot \tfrac{1}{4} \cdot \bigl(1 - \tfrac{1}{4}\bigr) = -\tfrac{3}{4}. \quad \checkmark
\end{equation}
\end{proof}

\paragraph{Werner states: chirality correction for mixed states.}

The Werner state~\cite{Werner1989} is the one-parameter family:
\begin{equation}
\rho_W(p) = p|\Psi^-\rangle\langle\Psi^-| + \frac{1-p}{4}\mathbb{I}_4, \quad p \in [0,1].
\end{equation}
Separable for $p \leq 1/3$; entangled for $p > 1/3$, with negativity $\mathcal{N}(p) = \max\{0, (3p-1)/4\}$.

The partial transpose eigenvalues are $(1+p)/4$ (triple) and $(1-3p)/4$ (singlet). Computing the moments:
\begin{align}
I_4(p) &= \frac{(1+3p)^4 + 3(1-p)^4}{256}, \\
\mu_4(p) &= \frac{3(1+p)^4 + (1-3p)^4}{256}.
\end{align}
The chirality correction is:
\begin{equation}
C_4(p) = \mu_4 - I_4 = -\frac{3p^3}{4}.
\end{equation}

\textit{Properties:}
\begin{itemize}
\item At $p = 0$ (maximally mixed): $C_4 = 0$. \checkmark
\item At $p = 1$ (Bell state): $C_4 = -\tfrac{3}{4}$. \checkmark
\item $C_4 < 0$ for all $p > 0$, including separable Werner states ($p \leq 1/3$).
\item At $p = 1/3$ (separability boundary): $C_4 = -3(1/3)^3/4 = -1/36 \approx -0.028$.
\end{itemize}

\textbf{Important:} For mixed states, $C_4 \neq 0$ does not by itself certify entanglement, since separable mixed states can have non-zero chirality corrections. By Proposition~\ref{prop:separable_bounds}, all separable states satisfy $|C_4| \leq 1/27 \approx 0.037$. The Werner state at $p = 1/3$ gives $|C_4| = 1/36 \approx 0.028$, which is within the separable bound. Notably, the Werner states do \emph{not} achieve the extremal separable bound---the extrema $C_4 = \pm 1/27$ are achieved by rank-3 MUB mixtures (equations~\eqref{eq:rho_plus}--\eqref{eq:rho_minus}). The chirality correction is a faithful entanglement indicator only for pure states. For general mixed states, the negativity $\mathcal{N}$ (computed from all three moments $\mu_2$, $\mu_3$, $\mu_4$) provides the definitive entanglement test.

\subsubsection{Main theorem: pure state completeness}

\begin{theorem}[Completeness for pure states]
\label{thm:completeness_final}
For pure $2 \times 2$ (qubit--qubit) and $2 \times 3$ (qubit--qutrit) states:
\begin{equation}
C_4 \neq 0 \quad \Longleftrightarrow \quad |\psi\rangle \text{ is entangled}
\end{equation}
\end{theorem}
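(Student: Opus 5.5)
Since both $I_k$ and $\mu_k$ are LU-invariant (as noted in the subsection on LU transformation properties), we reduce any pure state to its Schmidt form $|\psi\rangle=\sum_{k=0}^{r-1}\sqrt{p_k}\,|kk\rangle$. For $2\times 2$ and $2\times 3$ systems the Schmidt rank satisfies $r\le\min(d_A,d_B)=2$. The case $2\times 2$ is already covered by Lemma~\ref{lem:C4_pure_detailed}. The plan is to show that the $2\times 3$ case collapses to exactly the same computation. The Schmidt vectors on $B$ span at most a two-dimensional subspace of $\mathbb{C}^3$, so after a local unitary on $B$ the state lives in $\mathbb{C}^2\otimes\mathrm{span}\{|0\rangle,|1\rangle\}\subset\mathbb{C}^2\otimes\mathbb{C}^3$.

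First I would show directly from the Schmidt form that the spectrum of $\rho^{T_A}$ for $\rho=|\psi\rangle\langle\psi|$ is the multiset $\{p_k\}_{k}\cup\{\pm\sqrt{p_jp_k}\}_{j<k}$, padded with zeros. This follows because
\begin{equation}
\rho^{T_A}=\sum_{j,k}\sqrt{p_jp_k}\,|kj\rangle\langle jk|
\end{equation}
is block diagonal. It has $1\times1$ blocks $p_k$ on $|kk\rangle$ and $2\times2$ swap blocks $\sqrt{p_jp_k}\,\sigma_x$ on $\mathrm{span}\{|jk\rangle,|kj\rangle\}$. All remaining basis vectors of the $6$-dimensional space, namely those involving $|2\rangle_B$, contribute zero eigenvalues. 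This is a general statement and does not depend on $d_B$, so the $2\times 3$ case is the embedding of the $2\times2$ case with two extra zero eigenvalues, which do not change any $\mu_k$.

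From this spectrum, with $I_4=1$ for pure states, we get
\begin{equation}
C_4=\sum_k p_k^4+2\sum_{j<k}p_j^2p_k^2-1 .
\end{equation}
For $r\le 2$, set $q=p_0p_1$. Then $p_0^4+p_1^4=(1-2q)^2-2q^2$, which gives
\begin{equation}
C_4=(1-2q)^2-2q^2+2q^2-1=-4q(1-q).
\end{equation}
This reproduces Lemma~\ref{lem:C4_pure_detailed}. The equivalence now follows. The state is entangled iff $r=2$, iff $q>0$. Since $q\le 1/4<1$, we have $C_4=-4q(1-q)\neq0$ exactly when $q>0$. Conversely, $r=1$ gives $q=0$ and hence $C_4=0$; this direction is also immediate from Proposition~\ref{prop:product_C4}.

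The only step needing care is justifying that the $2\times3$ case reduces to Schmidt rank at most $2$ with zero-padding. That is, one must check that the LU invariance of $\mu_4$ holds for local unitaries on $\mathbb{C}^3$ and that the unused dimension contributes only zero eigenvalues of $\rho^{T_A}$. Both are immediate from the block structure above, so I expect no genuine obstacle. If one wanted an argument independent of the qubit-specific chirality algebra, I would present the spectral computation as the primary proof and mention the correlator form $C_4=8\,\mathrm{Tr}[\Omega_A\Omega_B\rho^{\otimes4}]$ only as interpretation.
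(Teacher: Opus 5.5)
Your proposal is correct and follows essentially the paper's route. For one direction, $C_4$ vanishes on product states. For the other, the explicit pure-state formula $C_4=-4p_0p_1(1-p_0p_1)$ comes from the partial-transpose spectrum of the Schmidt form. Your reduction of the $2\times 3$ case to Schmidt rank at most $2$ with a zero-padded spectrum supplies a step the paper leaves implicit, since its Lemma and proof only treat two qubits. Moreover, your intermediate expression equals $(\sum_k p_k^2)^2-1=(\mathrm{Tr}\,\rho_A^2)^2-1$, so the same argument works for pure states in any dimension.
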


\begin{proof}
\textbf{($\Rightarrow$) If $C_4 \neq 0$, then $|\psi\rangle$ is entangled.}

By Proposition~\ref{prop:product_C4}, every product state has $C_4 = 0$. Taking the contrapositive: if $C_4 \neq 0$, then the state is not a product state, hence entangled (since for pure states, separable = product).

\textbf{($\Leftarrow$) If $|\psi\rangle$ is entangled, then $C_4 \neq 0$.}

By Lemma~\ref{lem:C4_pure_detailed}, for pure two-qubit states $C_4 = -4\mathcal{N}^2(1-\mathcal{N}^2)$, which is strictly negative whenever $\mathcal{N} > 0$, i.e., whenever the state is entangled.
\end{proof}

\textbf{Corollary (Faithfulness for pure states).} The chirality correction $C_4 = \mu_4 - I_4$ is a \emph{faithful} entanglement indicator for pure $2 \times 2$ and $2 \times 3$ states:
\begin{itemize}
\item \textbf{No false positives:} $C_4 \neq 0$ only for entangled states.
\item \textbf{No false negatives:} Every entangled pure state has $C_4 \neq 0$.
\end{itemize}

\textbf{Remark on mixed states.} For mixed states, the chirality correction $C_4$ provides valuable physical insight (measuring multi-copy handedness) but is not a complete entanglement witness. By Proposition~\ref{prop:separable_bounds}, separable mixed states satisfy $|C_4| \leq 1/27 \approx 0.037$, while entangled pure states can have $|C_4|$ up to $3/4 = 0.75$ (Bell states). Thus large values of $|C_4|$ strongly indicate entanglement, but small values are inconclusive. The negativity $\mathcal{N}$, reconstructed from all three moments $\{\mu_2, \mu_3, \mu_4\}$ via Newton--Girard identities, provides the complete entanglement test for $2 \times 2$ and $2 \times 3$ systems via the Peres--Horodecki theorem~\cite{PhysRevLett.77.1413,HORODECKI19961}.

\subsubsection{Relationship to negativity}

For pure two-qubit states, there is a direct algebraic relationship between the chirality correction and negativity:
\begin{equation}
C_4 = -4\mathcal{N}^2(1 - \mathcal{N}^2)
\label{eq:C4_N_pure}
\end{equation}

This can be inverted to compute negativity directly from chirality:
\begin{equation}
\mathcal{N} = \sqrt{\frac{1 - \sqrt{1+C_4}}{2}}
\label{eq:N_from_C4}
\end{equation}

\textbf{Verification:} For a Bell state ($C_4 = -3/4$, $\mathcal{N} = 1/2$):
\begin{equation}
\sqrt{\frac{1 - \sqrt{1-3/4}}{2}} = \sqrt{\frac{1 - 0.5}{2}} = \sqrt{0.25} = 0.5. \quad \checkmark
\end{equation}

For the third-order correction, the pure-state relation is simpler:
\begin{equation}
C_3 = -3\mathcal{N}^2.
\end{equation}

The relationship~\eqref{eq:C4_N_pure} reveals the geometric content: $|C_4|$ is a quartic polynomial in $\mathcal{N}$ that vanishes at $\mathcal{N} = 0$ (product states) and $\mathcal{N} = 1$ (unphysical for two-qubit states), with maximum at $\mathcal{N} = 1/\sqrt{2}$ giving $|C_4| = 1$. For physical states with $\mathcal{N} \leq 1/2$, the function is monotonically increasing from $0$ to $3/4$.

For pure two-qubit states the negativity equals half the concurrence~\cite{Wootters1998}, $\mathcal{N} = \mathcal{C}/2$, so the chirality correction also relates to concurrence: $C_4 = -\mathcal{C}^2(1 - \mathcal{C}^2/4)$.

\subsubsection{Relationship to the correlation tensor}
\label{subsec:C4_detT}

For a general two-qubit state in the Fano form $\rho = \tfrac{1}{4}(I \otimes I + \mathbf{r}\cdot\bm{\sigma}\otimes I + I \otimes \mathbf{s}\cdot\bm{\sigma} + \sum_{ij} T_{ij}\,\sigma_i\otimes\sigma_j)$, with local Bloch vectors $\mathbf{r}$, $\mathbf{s}$ and correlation tensor $T$, the chirality correction relates to $\det(T)$ as follows.

\begin{proposition}[$C_4$ and $\det(T)$ for states with vanishing Bloch vectors]
\label{prop:C4_detT}
For any two-qubit state with $\mathbf{r} = \mathbf{s} = \mathbf{0}$ (including all Bell-diagonal states and their local unitary rotations):
\begin{equation}
C_4 = \tfrac{3}{4}\det(T).
\label{eq:C4_detT_exact}
\end{equation}
\end{proposition}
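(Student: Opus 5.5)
The plan is to reduce to the Bell-diagonal case using local unitary invariance, and then evaluate $\mu_4$ and $I_4$ in closed form from the two spectra.

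\textbf{Step 1 (reduction to diagonal $T$).} A local unitary $U_A\otimes U_B$ acts on the Fano data as $\mathbf{r}\mapsto O_A\mathbf{r}$, $\mathbf{s}\mapsto O_B\mathbf{s}$ and $T\mapsto O_A T O_B^{\mathsf{T}}$, with $O_A,O_B\in SO(3)$. The following facts then hold.
\begin{itemize}
\item The condition $\mathbf{r}=\mathbf{s}=\mathbf{0}$ is preserved.
\item $\det(T)$ is unchanged, because $\det O_A=\det O_B=1$.
\item $C_4$ is LU-invariant, as shown above.
\item A signed singular value decomposition with proper rotations brings $T$ to $\operatorname{diag}(t_1,t_2,t_3)$ with $t_1t_2t_3=\det T$. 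Any sign is absorbed into one entry, which is allowed because we only need $SO(3)$, not $O(3)$.
\end{itemize}
So it suffices to treat the Bell-diagonal state $\rho=\tfrac14\bigl(I+\sum_i t_i\,\sigma_i\otimes\sigma_i\bigr)$.

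\textbf{Step 2 (both spectra).} The operators $\sigma_i\otimes\sigma_i$ commute and are diagonal in the Bell basis, with joint eigenvalues $(\varepsilon_1,\varepsilon_2,\varepsilon_3)\in\{\pm1\}^3$ constrained by $\varepsilon_1\varepsilon_2\varepsilon_3=-1$. This constraint holds because $(\sigma_x\otimes\sigma_x)(\sigma_y\otimes\sigma_y)=-\sigma_z\otimes\sigma_z$. Hence the spectrum of $\rho$ is $\{\tfrac14(1+\boldsymbol{\varepsilon}\cdot\mathbf{t}):\prod\varepsilon_i=-1\}$.

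Partial transposition on $A$ acts as $\sigma_y^{T}=-\sigma_y$, with $\sigma_x$ and $\sigma_z$ unchanged, so it sends $t_2\mapsto -t_2$. Equivalently, the spectrum of $\rho^{T_A}$ is $\{\tfrac14(1+\boldsymbol{\varepsilon}\cdot\mathbf{t}):\prod\varepsilon_i=+1\}$.

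\textbf{Step 3 (parity sum).} Combining the two spectra gives
\begin{equation}
C_4=\mu_4-I_4=\frac{1}{256}\sum_{\boldsymbol{\varepsilon}\in\{\pm1\}^3}\Bigl(\prod_i\varepsilon_i\Bigr)\,(1+\boldsymbol{\varepsilon}\cdot\mathbf{t})^4 .
\end{equation}
Expand $(1+x)^4$ with $x=\boldsymbol{\varepsilon}\cdot\mathbf{t}$. A monomial $\varepsilon_1^{a}\varepsilon_2^{b}\varepsilon_3^{c}$ survives the weighted sum over $\boldsymbol{\varepsilon}$ only if $a$, $b$, $c$ are all odd; in that case the sum contributes a factor of $8$.
\begin{itemize}
\item In $x^3$, the only such monomial is $6\,t_1t_2t_3\,\varepsilon_1\varepsilon_2\varepsilon_3$.
\item In $x^4$ there is no such monomial, since three odd exponents cannot sum to $4$.
\item The lower powers $x^0$, $x^1$, $x^2$ also contribute nothing.
\end{itemize}
The binomial coefficient of $x^3$ in $(1+x)^4$ is $4$. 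Therefore
\begin{equation}
C_4=\frac{4\cdot 6\cdot 8}{256}\,t_1t_2t_3=\frac34\det(T),
\end{equation}
which is the claim. As a consistency check, for a Bell state ($\det T=-1$) this gives $C_4=-3/4$, and for Werner states ($\det T=-p^3$) it gives $C_4=-\tfrac34 p^3$.

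The main point needing care is Step 1. The reduction must use proper rotations on both sides, so that the sign of $\det T$ is carried faithfully into $t_1t_2t_3$. Step 2 also needs care: one must confirm that partial transposition flips exactly one $t_i$, and that it therefore swaps the parity class of $\prod\varepsilon_i$. Step 3 is then a mechanical parity count.
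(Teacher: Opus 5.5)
Your proposal is correct and takes the same route as the paper: LU invariance with a proper-rotation signed SVD to reduce to the Bell-diagonal case, followed by a comparison of the spectra of $\rho$ and $\rho^{T_A}$. The paper only asserts the Bell-diagonal step (``direct eigenvalue calculation'' plus a numerical check), whereas your parity-weighted sum over $\boldsymbol{\varepsilon}\in\{\pm1\}^3$ carries it out explicitly and rigorously, with the right coefficient $4\cdot 6\cdot 8/256 = 3/4$.
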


\begin{proof}
Any state with $\mathbf{r} = \mathbf{s} = \mathbf{0}$ has the form $\rho = \tfrac{1}{4}(I + \sum_{ij} T_{ij}\sigma_i\otimes\sigma_j)$. The singular value decomposition $T = U\,\mathrm{diag}(t_1,t_2,t_3)\,V^{\mathsf{T}}$ with $U, V \in \mathrm{SO}(3)$ can be implemented by local unitaries $U_A, U_B \in \mathrm{SU}(2)$ (via the standard SU(2)$\to$SO(3) homomorphism), transforming $\rho$ to a Bell-diagonal state $\rho' = \tfrac{1}{4}(I + \sum_i t_i\,\sigma_i\otimes\sigma_i)$. Since $C_4$ is LU-invariant (Section~\ref{subsec:completeness}) and $\det(T) = t_1 t_2 t_3$ is preserved by $\mathrm{SO}(3)$ rotations on both sides, it suffices to verify the formula for Bell-diagonal states. For $\rho' = \sum_i p_i|\beta_i\rangle\langle\beta_i|$ with $T = \mathrm{diag}(t_1,t_2,t_3)$, we computed (Section~\ref{subsec:completeness}) that $C_4 = \mu_4 - I_4$ matches $(3/4)t_1 t_2 t_3$ by direct eigenvalue calculation; numerical verification across $10^3$ random Bell-diagonal states confirms agreement to machine precision.
\end{proof}

\paragraph{General states with nonzero Bloch vectors.} When $\mathbf{r} \neq \mathbf{0}$ or $\mathbf{s} \neq \mathbf{0}$, the exact relation~\eqref{eq:C4_detT_exact} receives corrections. Since $C_4$ is LU-invariant, these corrections must be expressible in terms of the fundamental two-qubit LU invariants $\{|\mathbf{r}|^2, |\mathbf{s}|^2, \mathrm{Tr}[T^{\mathsf{T}}T], \det(T), \mathbf{r}^{\mathsf{T}}T^{\mathsf{T}}T\mathbf{r}, \mathbf{s}^{\mathsf{T}}TT^{\mathsf{T}}\mathbf{s}, \mathbf{r}^{\mathsf{T}}T\mathbf{s}\}$. Numerical evaluation on $10^3$ random two-qubit states shows that the correction $C_4 - \tfrac{3}{4}\det(T)$ vanishes for $\mathbf{r} = \mathbf{s} = \mathbf{0}$ and scales as $O(|\mathbf{r}|^2 + |\mathbf{s}|^2)$ for small Bloch vectors, but involves a nonlinear combination of invariants that does not reduce to a simple closed form. For the experimentally relevant Bell-product mixtures $\rho(p) = (1{-}p)|\Phi^+\rangle\langle\Phi^+| + p|00\rangle\langle 00|$ (where $\mathbf{r} = \mathbf{s} = (0,0,p)$), numerical evaluation gives corrections of order $|C_4 - \tfrac{3}{4}\det(T)| \leq 0.016$ across the full range $p \in [0,1]$. Code for computing $C_4$ for arbitrary states is provided in the data repository.

\subsubsection{Scope and limitations}

The pure-state completeness result relies on the fact that for pure bipartite states, separable $\Leftrightarrow$ product. For mixed states, the negativity $\mathcal{N}$ (computed from all three moments) provides the complete test.

In higher dimensions ($d_A \times d_B > 6$), \textbf{bound entangled states} exist~\cite{Horodecki1998bound,Horodecki1997,Bennett1999upb}: states that are entangled but have positive partial transpose ($\rho^{T_A} \geq 0$). For these states, $\mathcal{N} = 0$ and $C_4$ reflects only the (non-entanglement-related) nonlinearity of the moment functions. Detecting bound entanglement requires techniques beyond partial-transpose moments and chirality, such as realignment spectral features (Section~\ref{sec:bound_features}), the range criterion~\cite{DiVincenzo2000}, or tailored entanglement witnesses~\cite{Lewenstein2000}.

\begin{table}[htb!]
\centering
\caption{Scope of the chirality correction as entanglement indicator.}
\label{tab:scope}
\begin{tabular}{@{}lcc@{}}
\toprule
System & Pure states: $C_4 \neq 0 \Leftrightarrow$ entangled & Mixed states: $\mathcal{N} > 0 \Leftrightarrow$ entangled \\
\midrule
$2 \times 2$ & \checkmark & \checkmark (Peres--Horodecki) \\
$2 \times 3$ & \checkmark & \checkmark (Peres--Horodecki) \\
$3 \times 3$+ & \checkmark & $\times$ (bound entanglement) \\
\bottomrule
\end{tabular}
\end{table}

\begin{table}[htb!]
\centering
\caption{Chirality correction bounds for different state classes ($2 \times 2$).}
\label{tab:Ck_bounds}
\begin{tabular}{@{}lccc@{}}
\toprule
State class & $C_3$ range & $C_4$ range & Example at extremum \\
\midrule
Product & $0$ & $0$ & $|00\rangle$ \\
Rank-2 separable & $0$ & $0$ & $\tfrac{1}{2}|00\rangle\langle 00| + \tfrac{1}{2}|11\rangle\langle 11|$ \\
General separable & $[-1/36, 1/36]$ & $[-1/27, 1/27]$ & MUB mixtures (Prop.~\ref{prop:separable_bounds}) \\
Entangled (pure) & $[-3/4, 0)$ & $[-3/4, 0)$ & Bell state: $C_4 = -3/4$ \\
\bottomrule
\end{tabular}
\end{table}

\section{Spectral moment features for bound entanglement detection}
\label{sec:bound_features}

In dimensions $d_A \times d_B \geq 3 \times 3$, \textbf{bound entangled states} exist---states that are entangled but have positive partial transpose (PPT), making them undetectable by negativity. This section presents spectral moment features and chirality corrections that enable machine learning classifiers to detect bound entanglement with $99.9\%$ recall at zero false-positive rate. The key advance over the previous $99.6\%$ result (ExtraTrees on 83 spectral features) is the inclusion of chirality corrections $C_3 = \mu_3 - I_3$ and $C_4 = \mu_4 - I_4$ as classification features, which provide a detection channel orthogonal to all realignment-based criteria.

\subsection{Extended feature definitions}

Beyond the partial transpose moments $\mu_k = \mathrm{Tr}[(\rho^{T_A})^k]$ and purity moments $I_k = \mathrm{Tr}[\rho^k]$ used for negativity estimation, we define additional features based on the \textbf{realignment (reshuffling) map}.

\subsubsection{Realignment map}

The realignment map $R: \mathbb{C}^{d_A d_B \times d_A d_B} \to \mathbb{C}^{d_A^2 \times d_B^2}$ rearranges the density matrix elements:
\begin{equation}
R(\rho)_{(i,k),(j,l)} = \rho_{(i,j),(k,l)}.
\end{equation}
In tensor notation: $R(\rho)_{ik,jl} = \rho_{ijkl}$. The realignment matrix is generally \textbf{non-Hermitian}, so its eigenvalues can be complex and differ from its singular values.

\subsubsection{Realignment singular value moments \texorpdfstring{$\Sigma_k$}{Sigma\_k}}

Let $\{\sigma_i\}$ be the singular values of $R(\rho)$. The $k$-th singular value moment is:
\begin{equation}
\Sigma_k = \sum_{i} \sigma_i^k = \|R(\rho)\|_k^k
\end{equation}
where $\|\cdot\|_k$ denotes the Schatten $k$-norm.

\textbf{Physical interpretation:}
\begin{itemize}
\item $\Sigma_1 = \|R(\rho)\|_1$ is the trace norm used in the CCNR (Computable Cross-Norm or Realignment) criterion: for separable states, $\Sigma_1 \leq 1$.
\item $\Sigma_2 = \mathrm{Tr}[R^\dagger R] = \sum_{ij}|R_{ij}|^2 = \sum_{ij}|\rho_{ij}|^2 = \mathrm{Tr}[\rho^2] = I_2 = \mu_2$. The second singular-value moment of the realignment matrix equals the purity, since realignment is a reshuffling of matrix elements that preserves the Frobenius norm. Consequently, $\Sigma_2$ is measurable via the standard two-copy SWAP test circuit for $\mu_2$, requiring no additional quantum resources.
\item Higher $\Sigma_k$ ($k \geq 3$) characterize the distribution of singular values and are not simply related to purity or partial transpose moments.
\end{itemize}

\subsubsection{Realignment eigenvalue moments \texorpdfstring{$G_k$}{Gk}}

For square realignment matrices ($d_A = d_B$), let $\{g_i\}$ be the (generally complex) eigenvalues of $R(\rho)$:
\begin{equation}
G_k = \mathrm{Re}\left(\mathrm{Tr}[R(\rho)^k]\right) = \mathrm{Re}\left(\sum_{i} g_i^k\right)
\end{equation}
The real part is taken because eigenvalues may be complex for non-Hermitian matrices.

\subsubsection{Non-Hermiticity measure \texorpdfstring{$D_k$}{Dk}}

The difference between singular value and eigenvalue moments:
\begin{equation}
D_k = \Sigma_k - G_k
\end{equation}
For Hermitian (or normal) matrices, eigenvalues equal singular values (up to sign), so $D_k \approx 0$. Large $|D_k|$ indicates strong non-Hermitian character of the realignment matrix---a signature of bound entanglement.

\subsubsection{Derived ratio features}

Chebyshev-type ratios inspired by moment inequalities:
\begin{align}
\frac{\Sigma_2 \Sigma_6}{\Sigma_4^2}, \quad \frac{G_2 G_6}{G_4^2}, \quad \frac{\Sigma_2^2}{\Sigma_4}
\end{align}
By the Cauchy-Schwarz inequality, these ratios satisfy $\geq 1$ for any probability distribution. Deviations from unity characterize the spectral shape.

\subsection{Feature summary}

\begin{table}[htb!]
\centering
\caption{Spectral moment features for bound entanglement detection (83 base features total). All moments are computed for orders $k = 1, \ldots, 9$ unless noted. Features marked $[\dagger]$ are analytically zero for all states and are excluded.}
\label{tab:bound_features}
\begin{tabular}{@{}lll@{}}
\toprule
\textbf{Feature} & \textbf{Definition} & \textbf{Physical Meaning} \\
\midrule
\multicolumn{3}{@{}l}{\textit{State and partial-transpose moments}} \\
$I_k$ & $\mathrm{Tr}[\rho^k]$ & Purity/mixedness spectrum \\
$T_k$ & $\mathrm{Tr}[(\rho^{T_A})^k]$ & PT spectral correlations \\
\midrule
\multicolumn{3}{@{}l}{\textit{Realignment $R(\rho)$ moments}} \\
$\Sigma_k$ & $\sum_i \sigma_i(R)^k$ & Realignment singular value moments \\
$G_k$ & $\mathrm{Re}(\mathrm{Tr}[R(\rho)^k])$ & Realignment eigenvalue moments \\
$D_k$ & $\Sigma_k - G_k$ & Non-Hermiticity of $R(\rho)$ \\
$\mathcal{R}$ & $\Sigma_2^2/\Sigma_4$ & Singular value concentration \\
$Q_2,\,Q_4$ & $G_k/\Sigma_k$ & Eigenvalue-to-singular-value ratio \\
\midrule
\multicolumn{3}{@{}l}{\textit{Realignment of partial transpose $R(\rho^{T_A})$ moments}} \\
$SP_k$ & $\sum_i \sigma_i(R(\rho^{T_A}))^k$ & $R(\rho^{T_A})$ singular value moments \\
$GP_k$ & $\mathrm{Re}(\mathrm{Tr}[R(\rho^{T_A})^k])$ & $R(\rho^{T_A})$ eigenvalue moments \\
$DP_k$ & $SP_k - GP_k$ & Non-Hermiticity of $R(\rho^{T_A})$ \\
$\mathcal{R}'$ & $SP_2^2/SP_4$ & $R(\rho^{T_A})$ singular value concentration \\
$QP_2,\,QP_4$ & $GP_k/SP_k$ & Ratio for $R(\rho^{T_A})$ \\
\midrule
\multicolumn{3}{@{}l}{\textit{Cross-structure differences between $R(\rho)$ and $R(\rho^{T_A})$}} \\
$\delta G_2$ & $G_2 - GP_2$ & Eigenvalue-moment cross difference \\
$\delta Q_2$ & $Q_2 - QP_2$ & Ratio cross difference \\
$\Sigma_2 - SP_2$ & $[\dagger]$ & Identically 0: $\|R(\rho)\|_F = \|R(\rho^{T_A})\|_F$ \\
\midrule
\multicolumn{3}{@{}l}{\textit{Rank-structure features}} \\
$\Delta_{\mathrm{EG}}$ & $\lambda_4(\rho) - \lambda_5(\rho)$ & Eigenvalue gap at rank 4 (chessboard states are rank 4) \\
$\mathcal{R}_4$ & $\sum_{i\leq4}\lambda_i^2/\sum_i\lambda_i^2$ & Rank-4 spectral concentration \\
$\lambda_{\min}^{PT}/I_2$ & $\lambda_{\min}(\rho^{T_A})/\mathrm{Tr}[\rho^2]$ & Smallest PT eigenvalue (measures PPT proximity) \\
$SE_{\mathrm{gap}}$ & $H(\rho^{T_A}) - H(\rho)$ & Spectral entropy gap between PT and $\rho$ \\
$SV_{\mathrm{ent}}$ & $-\sum_i \hat\sigma_i \ln\hat\sigma_i$ & Entropy of singular value distribution of $R(\rho)$ \\
\bottomrule
\end{tabular}
\end{table}

\label{sec:spurious}
\textbf{Remark on analytically spurious features.} Two features that appear natural are in fact identically zero for \emph{all} quantum states and were excluded from the classifier. First, $I_2 - T_2 = \mathrm{Tr}[\rho^2] - \mathrm{Tr}[(\rho^{T_A})^2] = 0$ for all $\rho$, since partial transpose preserves the Frobenius norm: $\|\rho^{T_A}\|_F = \|\rho\|_F$. Second, $\Sigma_2 - SP_2 = \|R(\rho)\|_F^2 - \|R(\rho^{T_A})\|_F^2 = 0$, since both equal $\|\rho\|_F^2$ under the reshuffling map. Including these features would only capture floating-point noise at the level of $10^{-16}$.

\subsection{Spin-vector interpretation of realignment features}
\label{sec:spin_geom}

The realignment matrix $R(\rho)$ admits a direct interpretation in terms of spin-component correlations between subsystems. Expanding $\rho_{AB}$ in the Gell-Mann basis $\{\Lambda_\alpha\}_{\alpha=1}^{8}$ (generators of SU(3), comprising three spin-dipole operators $J_x, J_y, J_z$ and five quadrupole operators $Q_{xy}, Q_{xz}, \ldots$):
\begin{equation}
\rho_{AB} = \frac{1}{9}\!\left(\mathbf{I} + \sum_\alpha a_\alpha\,\Lambda_\alpha^A\otimes\mathbf{I} + \sum_\beta b_\beta\,\mathbf{I}\otimes\Lambda_\beta^B + \sum_{\alpha\beta} T_{\alpha\beta}\,\Lambda_\alpha^A\otimes\Lambda_\beta^B\right),
\end{equation}
where $T_{\alpha\beta} = \langle\Lambda_\alpha^A\otimes\Lambda_\beta^B\rangle$ is the $8\times 8$ \textbf{spin correlation tensor}. Up to normalisation and a basis relabelling, $R(\rho)$ is equivalent to $T$: the singular values of $R(\rho)$ are the principal A--B spin correlations, and all realignment-matrix features have direct spin interpretations.

\subsubsection{$D_k$ as the A--B spin cross-product}

The spin-dipole block of $T$ is the $3\times 3$ matrix $C_{\alpha\beta} = \langle J_\alpha^A J_\beta^B\rangle$, $\alpha,\beta\in\{x,y,z\}$. Its antisymmetric part is proportional to the \textbf{cross-product of A and B spin vectors}:
\begin{equation}
\frac{C - C^\top}{2}\;\propto\;\langle\mathbf{J}_A\times\mathbf{J}_B\rangle
= \begin{pmatrix}\langle J_y^A J_z^B - J_z^A J_y^B\rangle \\ \langle J_z^A J_x^B - J_x^A J_z^B\rangle \\ \langle J_x^A J_y^B - J_y^A J_x^B\rangle\end{pmatrix}.
\end{equation}
The non-Hermiticity gap $D_k = \Sigma_k - G_k$ between singular-value and eigenvalue moments of $R(\rho)$ measures this antisymmetric component of $T$ (together with its quadrupole analogues). It is directly measurable on two state copies as the difference between two SWAP tests:
\begin{itemize}
\item \textbf{Standard SWAP} ($A_1\leftrightarrow A_2$, $B_1\leftrightarrow B_2$): measures $\Sigma_k$ --- size and shape of the correlation ellipsoid.
\item \textbf{Crossed SWAP} ($A_1\leftrightarrow B_2$, $B_1\leftrightarrow A_2$): measures $G_k$ --- whether A--B correlations are symmetric under subsystem exchange.
\end{itemize}
Their difference $D_k = \Sigma_k - G_k$ thus measures whether $\langle J_\alpha^A J_\beta^B\rangle = \langle J_\beta^A J_\alpha^B\rangle$ for all spin components --- i.e., whether the A--B spin correlations have a preferred handedness.

\textbf{Why PPT forces $D_k\approx 0$:} Partial transposition on $A$ is physically equivalent to reversing the time direction of A-spin: $J_y^A\to-J_y^A$ (odd under time reversal), while $J_x^A, J_z^A$ and all $B$ components are unchanged. For a PPT state ($\rho^{T_A}\geq 0$), the density matrix remains positive under this reversal, which forces the antisymmetric part of $T$ to vanish: $\langle\mathbf{J}_A\times\mathbf{J}_B\rangle\approx 0$. Geometrically, the spin vectors $\mathbf{J}_{A_i}$ and $\mathbf{J}_{B_i}$ of different copies lie in a common plane --- they are \emph{coplanar} rather than spanning a three-dimensional volume as for generic separable states.

\subsubsection{$\delta G_2$ as spin-reversal asymmetry}
\label{sec:spin_reversal}

The cross-structure difference $\delta G_2 = G_2 - G_2^{PT} = \mathrm{Re}\,\mathrm{Tr}[R(\rho)^2] - \mathrm{Re}\,\mathrm{Tr}[R(\rho^{T_A})^2]$ measures how the eigenvalue-moment structure changes under A-spin time reversal. In spin-component terms:
\begin{equation}
\delta G_2 \propto \sum_\beta\left[\langle J_y^A J_\beta^B\rangle^2 - \langle J_x^A J_\beta^B\rangle^2\right] + \text{(quadrupole terms)},
\end{equation}
since $J_y^A$ (odd under $T$) contributes with opposite sign after time reversal, while $J_x^A, J_z^A$ (even under $T$) are unchanged. Even when $D_k\approx 0$ (no net chirality), $\delta G_2\neq 0$ when the $y$-component of A-spin contributes differently from $x$ and $z$ --- a second-order time-reversal asymmetry that provides discriminating information beyond $D_k$.

\subsubsection{Rank-structure features as correlation-ellipsoid dimensionality}

For $3\times 3$ chessboard states (rank 4), the correlation tensor $T$ has at most 5 nonzero singular values (numerically verified; the checkerboard parity structure constrains but does not fully collapse $T$ to the rank of $\rho$), so the \textbf{correlation ellipsoid} $\{Tv : \|v\|=1\}$ is a flat, 5-dimensional object in the 8-dimensional SU(3) spin-component space. The rank-structure features detect this:
\begin{itemize}
\item $SV_{\mathrm{ent}} = -\sum_i\hat\sigma_i\ln\hat\sigma_i$: entropy of the singular value distribution is strongly reduced ($H_\sigma = 1.34$ for the chessboard versus $1.86$ for a generic separable state), since only 5 of 8 principal A--B spin-correlation channels are active.
\item $\Delta_{\mathrm{EG}} = \lambda_4 - \lambda_5$: eigenvalue gap in $\rho$ between the 4th and 5th eigenvectors --- the ``edge'' of the active spin subspace.
\item $\mathcal{R}_4^{(\rho)} = \sum_{i\leq 4}\lambda_i^2/\mathrm{Tr}[\rho^2]$: fraction of purity concentrated in the top-4 eigendirections; equals 1 for rank-4 states. (Not to be confused with the chirality correction $C_4 = \mu_4 - I_4$.)
\end{itemize}

The three geometric signals --- A--B spin cross-product ($D_k$), spin-reversal asymmetry ($\delta G_2$, $SE_{\mathrm{gap}}$), and correlation-ellipsoid dimensionality ($SV_{\mathrm{ent}}$, $\Delta_{\mathrm{EG}}$, $C_4$) --- are complementary: no single family of bound entangled states activates all three simultaneously. Adding the chirality corrections $C_3$ and $C_4$ as explicit features provides a fourth channel that distinguishes Horodecki states ($C_3 \neq 0$) from chessboard and Tiles states ($C_3 = 0$ identically). A Random Forest classifier trained on eight base features ($\Sigma_1$, $G_1$, $D_1$, $\Sigma_2$, $G_2$, $D_2$, $C_3$, $C_4$) achieves $99.9\%$ recall at zero false positives on 13{,}600 certified states (Fig.~3d).

\subsubsection{Geometric anatomy of the most extreme bound entangled state}
\label{sec:chess_anatomy}

Numerical analysis of the full $8\times 8$ correlation tensor $T_{\alpha\beta} = \langle\Lambda_\alpha^A\otimes\Lambda_\beta^B\rangle$ reveals that the Bruss--Peres chessboard state is geometrically the most extreme bound entangled state---maximally different from any separable state in terms of correlation structure, yet invisible to standard entanglement criteria.

\textbf{Correlation tensor structure.} For a representative chessboard state ($a=1, b=1, c=2, d=1, m=1, n=3$; $mn/ab = 3$), the correlation tensor $T_{\alpha\beta}$ is extremely sparse: only 9 of 64 entries are nonzero (due to the checkerboard zero pattern of $\rho$), versus 59 for a generic separable state. Despite this sparsity, $T$ has 5 nonzero singular values (the mapping from $\rho$ to $T$ is nonlinear, so rank($T$) is not bounded by rank($\rho$)). The Frobenius norm $\|T\|_F = 0.674$ is intermediate between separable ($0.229$) and the pure NPT Bell state ($1.886$).

\textbf{Flat correlation ellipsoid.} The singular value decomposition of $T$ yields the principal A--B spin-correlation axes. The Shannon entropy of the normalised singular value spectrum,
\begin{equation}
H_\sigma = -\sum_{i=1}^{8} \hat\sigma_i\ln\hat\sigma_i, \qquad \hat\sigma_i = \sigma_i/\textstyle\sum_j\sigma_j,
\end{equation}
measures the effective dimensionality of the correlation ellipsoid:
\begin{center}
\begin{tabular}{@{}lcc@{}}
\toprule
State & $H_\sigma$ & Active channels \\
\midrule
Separable (random product mixture) & 1.86 & 7/8 \\
BE Horodecki ($a=0.5$) & 1.87 & 7/8 \\
\textbf{BE Chessboard (rank-4)} & \textbf{1.34} & \textbf{5/8} \\
NPT Bell $|\Psi^+\rangle$ (pure) & 2.08 ($= \ln 8$) & 8/8 \\
\bottomrule
\end{tabular}
\end{center}
The chessboard ellipsoid is a flat, 5-dimensional object in 8-dimensional SU(3) space ($H_\sigma/\ln 8 = 0.64$), far below the separable value ($0.89$). Yet its leading singular value $\sigma_1(T) = 0.575$ nearly matches the pure NPT Bell state ($\sigma_1 = 0.667$): the correlation ``energy'' is concentrated in a few channels, producing a highly anisotropic ellipsoid despite the state being PPT.

\paragraph{Universal singular-value identity and chessboard eigenvalue degeneracy.}

We now establish a hierarchy of spectral identities relating the realignment matrices $R(\rho)$ and $R(\rho^{T_A})$. The first is universal; the subsequent results progressively restrict to real states and then to the chessboard family.

\begin{definition}[Realignment and permutation matrices]
\label{def:realignment}
For a bipartite $d\times d$ density matrix $\rho$, the \emph{realignment matrix} $R(\rho)$ is the $d^2\times d^2$ matrix with entries $R_{(i,k),(j,l)} = \rho_{(i,j),(k,l)}$. The \emph{row-swap permutation} $P$ is the $d^2\times d^2$ matrix defined by $P_{(i,k),(j,l)} = \delta_{ij}\delta_{kl}\cdot[(i,k)\mapsto(k,i)]$, i.e., $P\,|i,k\rangle = |k,i\rangle$. The \emph{column-swap permutation} $Q$ acts on columns: $Q\,|j,l\rangle = |l,j\rangle$. Both $P$ and $Q$ satisfy $P^2 = Q^2 = I$ and are thus unitary and Hermitian.
\end{definition}

\begin{theorem}[Universal singular-value identity]
\label{thm:universal_sv}
For any bipartite quantum state $\rho$ on $\mathcal{H}_A\otimes\mathcal{H}_B$ with $\dim\mathcal{H}_A = \dim\mathcal{H}_B = d$,
\begin{equation}
\mathrm{sv}\bigl(R(\rho^{T_B})\bigr) = \mathrm{sv}\bigl(R(\rho)\bigr),
\end{equation}
where $\mathrm{sv}(\cdot)$ denotes the vector of singular values (with multiplicities). Consequently, all singular-value moments $\Sigma_k = \sum_i \sigma_i^k$ are identical: $\Sigma_k(\rho) = \Sigma_k(\rho^{T_B})$ for every $k\geq 1$ and every state $\rho$.
\end{theorem}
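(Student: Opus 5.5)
The plan is to show that realignment turns partial transposition on $B$ into a pure column permutation. Concretely, I expect the identity
\begin{equation}
R(\rho^{T_B}) = R(\rho)\,Q,
\end{equation}
with $Q$ the column-swap permutation of Definition~\ref{def:realignment}. Everything else then follows from unitary invariance of singular values.

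First I fix the convention $(\rho^{T_B})_{(i,j),(k,l)} = \rho_{(i,l),(k,j)}$. This is transposition of the $B$ indices $j\leftrightarrow l$, with the $A$ indices $i,k$ untouched. Substituting into the definition $R(\sigma)_{(i,k),(j,l)} = \sigma_{(i,j),(k,l)}$ gives
\begin{equation}
R(\rho^{T_B})_{(i,k),(j,l)} = (\rho^{T_B})_{(i,j),(k,l)} = \rho_{(i,l),(k,j)} = R(\rho)_{(i,k),(l,j)}.
\end{equation}
So $R(\rho^{T_B})$ is obtained from $R(\rho)$ by relabelling each column index $(j,l)\mapsto(l,j)$, which is exactly right multiplication by $Q$.

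Since $Q$ is a permutation matrix with $Q^2=I$, it is unitary. Hence
\begin{equation}
R(\rho^{T_B})\,R(\rho^{T_B})^\dagger = R(\rho)\,QQ^\dagger R(\rho)^\dagger = R(\rho)R(\rho)^\dagger .
\end{equation}
The two matrices therefore have the same positive semidefinite Gram matrix, and so identical singular values with multiplicities. Alternatively, one can take the SVD $R(\rho)=U\Sigma V^\dagger$ and note that $R(\rho)Q = U\Sigma (Q^\dagger V)^\dagger$ is again an SVD. Summing $\sigma_i^k$ then gives $\Sigma_k(\rho)=\Sigma_k(\rho^{T_B})$ for every $k\ge 1$. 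The square shape $d_A=d_B=d$ is not really needed for this argument, but it matches the statement.

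The only step needing care is the index bookkeeping. One must check that the paper's ordering convention $R_{(i,k),(j,l)}=\rho_{(i,j),(k,l)}$ really sends $T_B$ to a column swap rather than a row swap. If it came out as $R(\rho^{T_B}) = P\,R(\rho)$ instead, the same unitary-invariance argument applies verbatim via $R^\dagger R$.

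I would also remark that the eigenvalues are not preserved, because $R(\rho)Q$ is not similar to $R(\rho)$ in general. This is why the $G_k$ can differ between $\rho$ and $\rho^{T_B}$ while the $\Sigma_k$ cannot.
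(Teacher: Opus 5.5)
Your proposal is correct and matches the paper's proof essentially line for line. You use the same index computation to get $R(\rho^{T_B}) = R(\rho)\,Q$, then unitary invariance of singular values, and your remark that $d_A = d_B$ is not actually needed is also right, since $Q$ only permutes the pair of $B$ indices.
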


\begin{proof}
We compute the matrix elements of $R(\rho^{T_B})$ directly. Partial transposition on subsystem $B$ acts as $(\rho^{T_B})_{(i,j),(k,l)} = \rho_{(i,l),(k,j)}$. Applying the realignment map:
\begin{equation}
R(\rho^{T_B})_{(i,k),(j,l)} = (\rho^{T_B})_{(i,j),(k,l)} = \rho_{(i,l),(k,j)} = R(\rho)_{(i,k),(l,j)}.
\end{equation}
The last step uses the definition $R(\rho)_{(i,k),(l,j)} = \rho_{(i,l),(k,j)}$. Reading the right-hand side, the column index $(j,l)$ of $R(\rho^{T_B})$ equals the value at column $(l,j)$ of $R(\rho)$---precisely the action of the column-swap permutation $Q$:
\begin{equation}
R(\rho^{T_B}) = R(\rho)\cdot Q.
\end{equation}
Since $Q$ is unitary ($Q^2 = I$), right-multiplication by $Q$ preserves singular values:
\begin{equation}
\sigma_i\bigl(R(\rho^{T_B})\bigr) = \sigma_i\bigl(R(\rho)\,Q\bigr) = \sigma_i\bigl(R(\rho)\bigr) \quad \forall\, i.
\end{equation}
This holds for \emph{any} $\rho$, with no assumptions on rank, purity, or entanglement type.
\end{proof}

\begin{remark}
An analogous identity holds for partial transposition on subsystem $A$: $R(\rho^{T_A}) = P\cdot R(\rho)$ (left-multiplication by the row-swap permutation), which likewise preserves singular values. The identity $\Sigma_k = SP_k$ observed empirically in the spurious-feature analysis (Section~\ref{sec:spurious}) is therefore an algebraic tautology, not a learnable signal.
\end{remark}

Having established that singular values carry no information about partial transposition, we now turn to \emph{eigenvalue} moments $G_k = \mathrm{Re}\,\mathrm{Tr}[R(\rho)^k]$, which \emph{can} differ between $R(\rho)$ and $R(\rho^{T_A})$.

\begin{proposition}[$P$-sector decomposition for real states]
\label{prop:P_sector}
Let $\rho$ be a bipartite density matrix with all real entries (i.e., $\rho = \rho^*$ in the computational basis). Then:
\begin{enumerate}
\item $\rho^{T_A} = \rho^{T_B}$, so the partial transpose is unique.
\item The realignment matrix $R(\rho)$ commutes with the row-swap permutation: $[R(\rho), P] = 0$.
\item In the eigenbasis of $P$ (which decomposes $\mathbb{R}^{d^2}$ into $P$-symmetric and $P$-antisymmetric subspaces of dimensions $d(d+1)/2$ and $d(d-1)/2$ respectively), $R$ is block-diagonal:
\begin{equation}
R = R_+ \oplus R_-,
\end{equation}
where $R_+$ acts on the symmetric sector $\{|i,k\rangle + |k,i\rangle\}$ and $R_-$ on the antisymmetric sector $\{|i,k\rangle - |k,i\rangle\}$.
\end{enumerate}
\end{proposition}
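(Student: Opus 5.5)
The plan is to derive all three items from the partial-transpose identities for the realignment matrix already established: Theorem~\ref{thm:universal_sv}, which gives $R(\rho^{T_B}) = R(\rho)\,Q$, and the subsequent Remark, which gives $R(\rho^{T_A}) = P\,R(\rho)$.

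\textbf{Item 1.} Since $\rho$ is Hermitian and $\rho=\rho^*$, we have $\rho^{T} = (\rho^\dagger)^* = \rho^* = \rho$, so $\rho$ is a symmetric matrix. The full transpose factorises as $T = T_A\circ T_B$, and the two partial transposes commute. Hence
\begin{equation}
\rho^{T_B} = (\rho^{T})^{T_B} = \bigl((\rho^{T_A})^{T_B}\bigr)^{T_B} = \rho^{T_A}.
\end{equation}
As a sanity check in components: $(\rho^{T_B})_{(i,j),(k,l)} = \rho_{(i,l),(k,j)} = \rho_{(k,j),(i,l)} = (\rho^{T_A})_{(i,j),(k,l)}$, where the middle equality is the symmetry of $\rho$.

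\textbf{Item 2.} Combining item~1 with the two identities gives $P\,R(\rho) = R(\rho^{T_A}) = R(\rho^{T_B}) = R(\rho)\,Q$. The Remark is stated without proof, so I will verify it directly: $R(\rho^{T_A})_{(i,k),(j,l)} = \rho_{(k,j),(i,l)} = R(\rho)_{(k,i),(j,l)}$, which is exactly row-swapping. The remaining point is that $P$ and $Q$ are the same matrix. With $d_A=d_B=d$, both are the flip operator $F|a,b\rangle=|b,a\rangle$ on $\mathbb{C}^d\otimes\mathbb{C}^d$; they differ only in whether we regard them as acting on the row-index space (the $A$-indices of $\rho$) or the column-index space (the $B$-indices). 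Therefore $PR = RP$.

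This identification is the only delicate step. I will check the index conventions carefully: rows of $R$ are labelled by the pair (row, column) of $A$, and columns by the pair (row, column) of $B$. Because these are both copies of $\mathbb{C}^d\otimes\mathbb{C}^d$ with the same ordering, the flip matrices literally coincide. The hypothesis $d_A=d_B$ is essential here, since otherwise $R$ is not square and $P$ and $Q$ live on different spaces.

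\textbf{Item 3.} $P$ is a real symmetric involution, so its eigenvalues are $\pm1$. An orthonormal real eigenbasis is given by $\{|i,i\rangle\}\cup\{(|i,k\rangle\pm|k,i\rangle)/\sqrt2 : i<k\}$, which yields eigenspaces of dimensions $d+\binom d2 = d(d+1)/2$ and $\binom d2 = d(d-1)/2$. Since $[R,P]=0$, $R$ maps each eigenspace of $P$ into itself: if $Pv=\pm v$ then $P(Rv) = RPv = \pm Rv$. In the orthonormal eigenbasis above, $R$ is therefore block diagonal, $R = R_+\oplus R_-$. The change of basis is real orthogonal, so realness of the blocks is preserved.
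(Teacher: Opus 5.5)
Your proof is correct and follows the paper's route: item 1 from the symmetry $\rho^{T}=\rho$ of a real Hermitian matrix (the paper uses exactly your component check), item 2 from $P\,R(\rho)=R(\rho^{T_A})=R(\rho^{T_B})=R(\rho)\,Q$ together with $P=Q$, and item 3 from the invariance of the $\pm1$ eigenspaces of $P$ under $R$. Your version is slightly tighter than the paper's in two places: you verify the unproved Remark $R(\rho^{T_A})=P\,R(\rho)$ directly, and you note that $P$ and $Q$ are literally the same flip matrix when $d_A=d_B$, so the paper's appeal to the reality of $R$ in that step is unnecessary.
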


\begin{proof}
(1) For real $\rho$, transposing subsystem $A$ gives $(\rho^{T_A})_{(i,j),(k,l)} = \rho_{(k,j),(i,l)}$. Since $\rho$ is Hermitian and real, $\rho_{(k,j),(i,l)} = \rho_{(i,l),(k,j)}^* = \rho_{(i,l),(k,j)} = (\rho^{T_B})_{(i,j),(k,l)}$.

(2) Using $R(\rho^{T_B}) = R(\rho)\,Q$ from Theorem~\ref{thm:universal_sv} and $R(\rho^{T_A}) = P\,R(\rho)$, part (1) gives $R(\rho)\,Q = P\,R(\rho)$. For real $\rho$, the realignment matrix $R$ is also real, so $R\,Q = P\,R$. Since $P = Q$ when acting on real vectors (both swap the two sub-indices), we obtain $R\,P = P\,R$.

(3) Since $P^2 = I$, the eigenvalues of $P$ are $\pm 1$, and $[R, P] = 0$ implies $R$ preserves each eigenspace. Hence $R = R_+ \oplus R_-$ in the $P$-eigenbasis.
\end{proof}

\begin{corollary}[Even-$k$ eigenvalue identity for real states]
\label{cor:even_k}
For any real Hermitian bipartite state $\rho$, the eigenvalue moment difference $\delta G_k = G_k - G_k^{PT}$ vanishes for all even $k$:
\begin{equation}
\delta G_{2m} = 0 \quad \text{for all } m \geq 1.
\end{equation}
\end{corollary}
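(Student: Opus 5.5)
The plan is to reduce $\delta G_{2m}=0$ to a one-line trace identity, using the remark after Theorem~\ref{thm:universal_sv} together with Proposition~\ref{prop:P_sector}. The remark gives $R(\rho^{T_A}) = P\,R(\rho)$, where $P$ is the row-swap permutation. I would first re-check this directly from the index definitions:
\begin{equation}
(\rho^{T_A})_{(i,j),(k,l)} = \rho_{(k,j),(i,l)} = R(\rho)_{(k,i),(j,l)}.
\end{equation}
So the row index $(i,k)$ of $R(\rho^{T_A})$ reads row $(k,i)$ of $R(\rho)$, which is exactly left-multiplication by $P$. Here $d_A=d_B=d$, so $R$ is square ($d^2\times d^2$) and $P$ acts on the same index space as the rows and columns of $R$. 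This is needed for $\mathrm{Tr}[R^k]$ and for $[R,P]=0$ to make sense.

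Second, I would invoke Proposition~\ref{prop:P_sector}(2): for real $\rho$, $[R,P]=0$. Since $R$ and $P$ commute, $(PR)^k = P^k R^k$. Because $P^2=I$, every even exponent $k=2m$ gives $(PR)^{2m} = R^{2m}$. Hence
\begin{equation}
\mathrm{Tr}\bigl[R(\rho^{T_A})^{2m}\bigr] = \mathrm{Tr}\bigl[R(\rho)^{2m}\bigr].
\end{equation}
Taking real parts yields $G_{2m}^{PT}=G_{2m}$, i.e.\ $\delta G_{2m}=0$. By Proposition~\ref{prop:P_sector}(1), $\rho^{T_A}=\rho^{T_B}$ for real $\rho$, so the statement does not depend on which subsystem is transposed.

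As a sanity check I would note why odd $k$ is excluded. In the block form $R=R_+\oplus R_-$ we have $P=I\oplus(-I)$, so
\begin{equation}
\mathrm{Tr}[(PR)^{2m+1}] = \mathrm{Tr}[R_+^{2m+1}] - \mathrm{Tr}[R_-^{2m+1}],
\end{equation}
whereas $\mathrm{Tr}[R^{2m+1}]$ is the sum of the two traces. These differ unless the antisymmetric-sector contribution vanishes. This also explains why chessboard states, which have $R_-=0$, show degeneracy at all orders.

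The only step I regard as a potential obstacle is the commutation $[R,P]=0$. Its proof in Proposition~\ref{prop:P_sector} identifies $Q$ with $P$, which is legitimate only because both act on $\mathbb{C}^{d}\otimes\mathbb{C}^{d}$ by swapping the two tensor factors. If that step felt shaky, I would verify it entrywise. $(RP)_{(i,k),(j,l)}=R_{(i,k),(l,j)}=\rho_{(i,l),(k,j)}$, while $(PR)_{(i,k),(j,l)}=R_{(k,i),(j,l)}=\rho_{(k,j),(i,l)}$. These agree by Hermiticity plus reality: $\rho_{(k,j),(i,l)}=\overline{\rho_{(i,l),(k,j)}}=\rho_{(i,l),(k,j)}$.
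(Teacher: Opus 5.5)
Your proof is correct and is essentially the paper's argument. Both rest on $R(\rho^{T_A}) = P\,R(\rho)$, $[R,P]=0$ for real $\rho$, and $P^2=I$. The paper phrases the last step through the block form $R=R_+\oplus R_-$, $P=I\oplus(-I)$, which gives $\delta G_k=(1-(-1)^k)\,\mathrm{Tr}[R_-^k]$ for all $k$ at once. You instead use $(PR)^{2m}=P^{2m}R^{2m}=R^{2m}$ directly, which is equivalent.

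Your entrywise check of $[R,P]=0$ from Hermiticity plus reality is a useful tightening of the paper's somewhat loose identification of $Q$ with $P$ ``on real vectors.''
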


\begin{proof}
From Theorem~\ref{thm:universal_sv} and Proposition~\ref{prop:P_sector}, $R(\rho^{T_A}) = P\,R(\rho)$, so $G_k^{PT} = \mathrm{Tr}[(PR)^k]$. Using the block decomposition $R = R_+ \oplus R_-$ and $P = I_+ \oplus (-I_-)$:
\begin{equation}
\mathrm{Tr}[(PR)^k] = \mathrm{Tr}[R_+^k] + (-1)^k\,\mathrm{Tr}[R_-^k].
\end{equation}
Meanwhile, $G_k = \mathrm{Tr}[R^k] = \mathrm{Tr}[R_+^k] + \mathrm{Tr}[R_-^k]$. Therefore:
\begin{equation}
\delta G_k = G_k - G_k^{PT} = \bigl(1 - (-1)^k\bigr)\,\mathrm{Tr}[R_-^k].
\end{equation}
For even $k$, the prefactor $(1 - 1) = 0$, so $\delta G_{2m} = 0$ regardless of the spectrum of $R_-$.
\end{proof}

\begin{corollary}[Odd-$k$ eigenvalue moments probe the antisymmetric sector]
\label{cor:odd_k}
For real Hermitian $\rho$ and odd $k$:
\begin{equation}
\delta G_k = -2\,\mathrm{Tr}[R_-^k].
\end{equation}
This quantity vanishes if and only if the $k$-th power-sum of the eigenvalues of $R_-$ is zero.
\end{corollary}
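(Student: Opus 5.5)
The statement follows directly from the computation in the proof of Corollary~\ref{cor:even_k}, so the plan is to reuse that block decomposition and specialise to odd $k$. By Proposition~\ref{prop:P_sector}, for real $\rho$ the realignment matrix commutes with the row-swap permutation $P$. In the eigenbasis of $P$ we can therefore write
\[
R = R_+ \oplus R_-, \qquad P = I_+ \oplus (-I_-).
\]
Combining the Remark after Theorem~\ref{thm:universal_sv} ($R(\rho^{T_A}) = P\,R(\rho)$) with part (1) of Proposition~\ref{prop:P_sector} ($\rho^{T_A}=\rho^{T_B}$), the partial transpose used in $G_k^{PT}$ is unambiguous. We then have $PR = R_+ \oplus (-R_-)$, and hence
\[
\mathrm{Tr}[(PR)^k] = \mathrm{Tr}[R_+^k] + (-1)^k\,\mathrm{Tr}[R_-^k].
\]

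Next I would note that everything here is real. Since $\rho$ is real, $R$ is real. The change of basis to the $P$-eigenbasis uses the real vectors $(|i,k\rangle\pm|k,i\rangle)/\sqrt2$, so $R_\pm$ are real matrices and their traces of powers are real. The operation $\mathrm{Re}$ in the definition of $G_k$ and $G_k^{PT}$ can therefore be dropped. This gives
\[
\delta G_k = \bigl(1-(-1)^k\bigr)\mathrm{Tr}[R_-^k].
\]
For odd $k$ the prefactor equals $2$, so $\delta G_k = 2\,\mathrm{Tr}[R_-^k]$ in magnitude.

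The main point to check carefully is the sign, because the statement asserts $-2\,\mathrm{Tr}[R_-^k]$ while my naive computation gives $+2$. I would fix the definition of $\delta G_k$ and of which operator is called the ``PT'' one, and recompute explicitly for odd $k$:
\[
G_k - G_k^{PT} = \bigl(\mathrm{Tr}R_+^k + \mathrm{Tr}R_-^k\bigr) - \bigl(\mathrm{Tr}R_+^k - \mathrm{Tr}R_-^k\bigr) = 2\,\mathrm{Tr}R_-^k.
\]
If the stated sign is to hold, it must come from a convention in which the roles are swapped, namely $\delta G_k = G_k^{PT} - G_k$, or in which $R(\rho^{T_A})$ is identified with $R$ and $R(\rho)$ with $PR$. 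I would check this against a small explicit real example. A natural test is a real two-qutrit state with a nonzero antisymmetric block, for instance a real state with $k=1$, where $\mathrm{Tr}R_- = \sum_{i<k}(\ldots)$ can be written out directly. If the convention is as defined, I would record the result as $\delta G_k = \pm 2\,\mathrm{Tr}[R_-^k]$ with the sign determined by that convention. The substantive content is unaffected either way.

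The final claim is immediate once $\delta G_k$ is known to be a nonzero constant multiple of $\mathrm{Tr}[R_-^k] = \sum_j \lambda_j(R_-)^k$. The quantity $\delta G_k$ then vanishes if and only if the $k$-th power sum of the eigenvalues of $R_-$ vanishes.
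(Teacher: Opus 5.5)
Your argument is the paper's own route. The paper gives no separate proof here; it reads the corollary off the identity $\delta G_k=(1-(-1)^k)\,\mathrm{Tr}[R_-^k]$ obtained in the proof of Corollary~\ref{cor:even_k} from the $P$-sector decomposition. You reproduce that identity, including the justification for dropping $\mathrm{Re}$, which the paper omits.

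Your sign is the correct one. The paper's definitions are $\delta G_k=G_k-G_k^{PT}$, $R(\rho^{T_A})=PR$, and $R=R_+\oplus R_-$ with $R_-$ the antisymmetric block of $R$ itself. With these, its displayed identity gives $\delta G_k=+2\,\mathrm{Tr}[R_-^k]$ for odd $k$. The $-2$ in the statement is therefore a sign error, not a hidden convention, so you should not hedge with $\pm 2$ or write ``in magnitude''.

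A quick check settles it. Take $\rho=|\Phi^+\rangle\langle\Phi^+|$ on $\mathbb{C}^3\otimes\mathbb{C}^3$, which is real and Hermitian. Then $R=\tfrac13 I_9$, so $G_1=3$. Also $G_1^{PT}=\tfrac13\mathrm{Tr}[P]=1$, giving $\delta G_1=2$, while $\mathrm{Tr}[R_-]=1$.

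The ``if and only if'' clause holds for either sign, since only the nonzero prefactor matters.
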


The significance of the antisymmetric sector $R_-$ becomes clear through the following chain of results, which establishes $\delta G_k$ as an entanglement witness and explains why the chessboard state eliminates this signal entirely.

\begin{definition}[Wedge product characterisation of $R_-$]
\label{def:wedge}
For a real bipartite state $\rho = \sum_m |v_m\rangle\langle v_m|$ on $\mathcal{H}_A\otimes\mathcal{H}_B$ with $\dim\mathcal{H}_A = \dim\mathcal{H}_B = d$, define the $d\times d$ matrix $V_m$ by $(V_m)_{ij} = (v_m)_{di+j}$, so that $v_m = \mathrm{vec}(V_m)$. Then the antisymmetric sector of the realignment matrix satisfies
\begin{equation}
(R_-)_{\alpha,(jl)} = \frac{1}{\sqrt{2}}\sum_m\bigl[(V_m)_{ij}(V_m)_{kl} - (V_m)_{kj}(V_m)_{il}\bigr]
\end{equation}
for each antisymmetric basis index $\alpha \leftrightarrow (i<k)$. In exterior-algebra notation, $R_- = 0$ if and only if the \emph{collective wedge product} vanishes pairwise:
\begin{equation}
\label{eq:wedge_condition}
\sum_m \mathrm{row}_i(V_m) \wedge \mathrm{row}_k(V_m) = 0 \quad \text{for all } 0 \leq i < k \leq d-1.
\end{equation}
\end{definition}

\begin{proposition}[Separable states have $R_- = 0$]
\label{prop:sep_R_minus}
For any real separable state $\rho$ on $\mathbb{R}^d\otimes\mathbb{R}^d$, the antisymmetric sector of the realignment matrix vanishes: $R_- = 0$. Consequently, $\delta G_k = 0$ for all $k \geq 1$.
\end{proposition}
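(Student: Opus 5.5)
The plan is to make the block structure of Proposition~\ref{prop:P_sector} explicit on a separable decomposition and read off $R_-$ term by term. First, fix conventions so that for a product operator $a\otimes b$ the realignment is rank one, $R(a\otimes b)=\mathrm{vec}(a)\,\mathrm{vec}(b)^{\mathsf T}$, with $\mathrm{vec}(a)_{(i,k)}=a_{ik}$. Linearity of $R$ then gives, for $\rho=\sum_m p_m\, a_m\otimes b_m$,
\begin{equation}
R(\rho)=\sum_m p_m\,\mathrm{vec}(a_m)\,\mathrm{vec}(b_m)^{\mathsf T}.
\end{equation}

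Second, identify the sectors. The row swap acts as $P\,\mathrm{vec}(a)=\mathrm{vec}(a^{\mathsf T})$. For Hermitian $a$ we have $a^{\mathsf T}=a^*$, so
\begin{equation}
\Pi_\pm\,\mathrm{vec}(a)=\mathrm{vec}\bigl(\tfrac12(a\pm a^{\mathsf T})\bigr),
\end{equation}
which equals $\mathrm{vec}(\mathrm{Re}\,a)$ on the $+$ sector and $\mathrm{vec}(i\,\mathrm{Im}\,a)$ on the $-$ sector. The same holds for the column swap $Q$. For real $\rho$, Proposition~\ref{prop:P_sector} lets me write $R_-=\Pi_-R\,\Pi_-$, and this gives
\begin{equation}
R_-=-\sum_m p_m\,\mathrm{vec}(\mathrm{Im}\,a_m)\,\mathrm{vec}(\mathrm{Im}\,b_m)^{\mathsf T}.
\end{equation}
If the decomposition uses \emph{real} product states, so that $\mathrm{Im}\,a_m=\mathrm{Im}\,b_m=0$, then $R_-=0$ term by term. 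This is the wedge condition~\eqref{eq:wedge_condition} with $V_m=u_m w_m^{\mathsf T}$ of rank one: all rows of $V_m$ are parallel, so $\mathrm{row}_i\wedge\mathrm{row}_k=0$. Corollaries~\ref{cor:even_k} and~\ref{cor:odd_k} then give $\delta G_k=-(1-(-1)^k)\,\mathrm{Tr}[R_-^k]=0$ for all $k$.

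The main obstacle is the passage from ``real and separable'' to ``has a decomposition into real product states'', and I expect it to fail in general. Averaging $\rho=\tfrac12(\rho+\rho^*)$ only produces the symmetrized terms $\tfrac12(a_m\otimes b_m+a_m^*\otimes b_m^*)$. Each of these contributes $-p_m\,\mathrm{vec}(\mathrm{Im}\,a_m)\,\mathrm{vec}(\mathrm{Im}\,b_m)^{\mathsf T}$, and these do not cancel.

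A concrete obstruction is the two-qubit state
\begin{equation}
\rho=\tfrac12\bigl(|y_+y_+\rangle\langle y_+y_+|+|y_-y_-\rangle\langle y_-y_-|\bigr)=\tfrac14\bigl(I+\sigma_y\otimes\sigma_y\bigr).
\end{equation}
It is real and separable, yet $R_-\propto\mathrm{vec}(\sigma_y)\,\mathrm{vec}(\sigma_y)^{\mathsf T}\neq0$. The same construction embeds into $3\times3$. I would therefore first check this example numerically. If it holds up, I would prove the proposition under the stronger hypothesis that $\rho$ is \emph{real-separable}, meaning it is a convex mixture of real product vectors. That is the class relevant to the chessboard and Tiles constructions and to the random separable training states if these are built from real product vectors. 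I would also record that the general real case does not hold.
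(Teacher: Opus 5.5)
Your argument for mixtures of real product vectors is the paper's argument. There each $V_m=\sqrt{p_m}\,a_m b_m^{\mathsf T}$ has rank one, so every wedge in \eqref{eq:wedge_condition} vanishes; this is your observation that the rows of $V_m$ are parallel, or equivalently $\mathrm{Im}\,a_m=\mathrm{Im}\,b_m=0$ in your $\mathrm{vec}$ formula.

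Your objection is also correct, and it falls squarely on the paper's proof. That proof opens by asserting that every real separable state can be written as $\sum_m p_m|a_mb_m\rangle\langle a_mb_m|$ with $a_m,b_m\in\mathbb{R}^d$. Nothing justifies this, and it is false. Your example holds up:
\begin{itemize}
\item Take $\rho=\tfrac14(I+\sigma_y\otimes\sigma_y)=\tfrac12\bigl(|\Psi^+\rangle\langle\Psi^+|+|\Phi^-\rangle\langle\Phi^-|\bigr)$. Its realignment matrix is real and commutes with $P$. On the one-dimensional antisymmetric sector spanned by $(|01\rangle-|10\rangle)/\sqrt2$ it acts as $R_-=-\tfrac12$.
\item The paper's own wedge test agrees. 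With the real eigenvector matrices $V_1=\tfrac12\sigma_x$ and $V_2=\tfrac12\sigma_z$, one gets $\sum_m\mathrm{row}_0(V_m)\wedge\mathrm{row}_1(V_m)=-\tfrac12\,e_0\wedge e_1\neq0$.
\item Directly, $\mathrm{Tr}\,R(\rho)=2\langle\Phi^+|\rho|\Phi^+\rangle=0$. Since $\rho^{T_A}=\tfrac14(I-\sigma_y\otimes\sigma_y)$, one gets $\mathrm{Tr}\,R(\rho^{T_A})=1$, hence $\delta G_1=-1$.
\item Padding with zeros gives a real separable $3\times3$ state with the same values.
\end{itemize}

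So the proposition holds only if ``separable'' means separable over real product vectors, which is one possible reading of ``on $\mathbb{R}^d\otimes\mathbb{R}^d$''. Under that reading, your restricted statement and proof are the right ones. Under the reading the paper actually needs in Corollary~\ref{cor:ent_witness} (real entries, ordinary separability), the proposition and that corollary both fail. In particular, $\delta G_3\neq0$ in Remark~\ref{rem:horodecki_contrast} does not by itself certify entanglement.

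Your general formula $R_-=-\sum_m p_m\,\mathrm{vec}(\mathrm{Im}\,a_m)\,\mathrm{vec}(\mathrm{Im}\,b_m)^{\mathsf T}$ for real $\rho$ shows what survives. For a pure state $a$ one has $\|\mathrm{Im}\,a\|_F\le1/\sqrt2$, so every real separable state satisfies $\|R_-\|_1\le\tfrac12$, with equality in your example. A repaired witness therefore needs a threshold: for a real state, $|\delta G_k|>2^{1-k}$ for some odd $k$ certifies entanglement.

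One minor point concerns the sign. The block computation gives $\delta G_k=(1-(-1)^k)\,\mathrm{Tr}[R_-^k]$ with a plus sign. Corollary~\ref{cor:odd_k} and your final formula both have it flipped, as your own example shows: $\delta G_1=-1=2\,\mathrm{Tr}[R_-]$. This does not affect the vanishing claim.
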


\begin{proof}
Any real separable state can be written as $\rho = \sum_m p_m\, |a_m\rangle|b_m\rangle\langle a_m|\langle b_m|$ with $a_m \in \mathbb{R}^d$, $b_m \in \mathbb{R}^d$. Setting $|v_m\rangle = \sqrt{p_m}\,|a_m\rangle\otimes|b_m\rangle$, we have $V_m = \sqrt{p_m}\, a_m\, b_m^T$ (a rank-1 matrix). The rows of $V_m$ are $\mathrm{row}_i(V_m) = \sqrt{p_m}\,(a_m)_i\, b_m^T$. The wedge product of any two rows is:
\begin{equation}
\mathrm{row}_i(V_m) \wedge \mathrm{row}_k(V_m) = p_m\,(a_m)_i(a_m)_k\;\bigl(b_m^T \wedge b_m^T\bigr) = 0,
\end{equation}
since $b \wedge b = 0$ for any vector $b$. Each term in the sum~\eqref{eq:wedge_condition} vanishes individually, so $R_- = 0$.
\end{proof}

\begin{corollary}[Spectral entanglement witness]
\label{cor:ent_witness}
For any real bipartite state $\rho$: if $\delta G_k \neq 0$ for some odd $k$, then $\rho$ is entangled.
\end{corollary}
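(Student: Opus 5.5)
The plan is to argue by contraposition and lean entirely on Proposition~\ref{prop:sep_R_minus} together with Corollary~\ref{cor:odd_k}. Fix a real bipartite state $\rho$ with $\dim\mathcal{H}_A=\dim\mathcal{H}_B=d$ and suppose it is separable. The goal is to show $\delta G_k=0$ for every odd $k$. The statement then follows: a nonzero $\delta G_k$ at some odd $k$ rules out separability.

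The first step is to make sure the separable decomposition can be taken with real product vectors. This is exactly the hypothesis used in the proof of Proposition~\ref{prop:sep_R_minus}, namely $\rho=\sum_m p_m|a_m b_m\rangle\langle a_m b_m|$ with $a_m,b_m\in\mathbb{R}^d$. I expect this to be the main obstacle. A real separable state need not have a manifestly real product decomposition, since separability over $\mathbb{C}$ allows complex $a_m,b_m$.

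To handle this, I would first try to read the statement as Proposition~\ref{prop:sep_R_minus} does, i.e.\ with ``real separable'' meaning separable on $\mathbb{R}^d\otimes\mathbb{R}^d$, so the corollary inherits that scope. If a complex decomposition must be allowed, I would symmetrise: $\rho=\tfrac12(\rho+\rho^*)$ expresses $\rho$ as an equal mixture of the given decomposition and its complex conjugate. I would then check whether the collective wedge condition~\eqref{eq:wedge_condition} survives for pairs of conjugate product terms. For a complex product vector, $V_m=a_m b_m^{T}$ is still rank one. The algebraic identity behind $R_-=0$ only uses that each $V_m$ is rank one, not that it is real, because the $(R_-)$ entries are bilinear in $V_m$ rather than sesquilinear. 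So I would recheck the expression for $R_-$ in Definition~\ref{def:wedge} with $V_m$ possibly complex, and confirm that each term $(V_m)_{ij}(V_m)_{kl}-(V_m)_{kj}(V_m)_{il}$ vanishes as a $2\times2$ minor of a rank-one matrix.

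Granting $R_-=0$, the rest is immediate. Proposition~\ref{prop:P_sector} applies because $\rho$ is real, so $R=R_+\oplus R_-$. Corollary~\ref{cor:odd_k} then gives $\delta G_k=-2\,\mathrm{Tr}[R_-^k]=0$ for odd $k$. Even $k$ need no argument, since Corollary~\ref{cor:even_k} gives $\delta G_{2m}=0$ unconditionally, which is why the statement mentions only odd $k$.

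Contraposition finishes the proof: if $\delta G_k\neq0$ for some odd $k$, then $\mathrm{Tr}[R_-^k]\neq0$, so $R_-\neq0$, so $\rho$ is not separable, i.e.\ it is entangled.
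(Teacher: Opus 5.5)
Your contrapositive skeleton (separable $\Rightarrow R_-=0\Rightarrow\delta G_k=0$ for odd $k$, via the $P$-sector splitting) is exactly the paper's proof. You are also right that the crux is whether a real separable state admits a \emph{real} product decomposition; the paper's proof passes over this by simply citing Proposition~\ref{prop:sep_R_minus}. Your first reading, where ``separable'' means a mixture of product vectors in $\mathbb{R}^d\otimes\mathbb{R}^d$, is the scope in which that proposition, and hence the argument, is valid.

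Your fallback for complex decompositions rests on a false claim: the entries of $R$ are sesquilinear, not bilinear, in the $v_m$. For $\rho=\sum_m|v_m\rangle\langle v_m|$ one has $R_{(i,k),(j,l)}=\sum_m (V_m)_{ij}\overline{(V_m)_{kl}}$. The conjugate-free formula in Definition~\ref{def:wedge} holds only because the $v_m$ there are real. For $V_m=ab^{T}$ the relevant combination is $a_ib_j\bar a_k\bar b_l-a_kb_j\bar a_i\bar b_l=2i\,b_j\bar b_l\,\mathrm{Im}(a_i\bar a_k)$. This is not a $2\times2$ minor of a rank-one matrix, and it does not vanish. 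Adding the conjugate term from $\tfrac12(\rho+\rho^*)$ leaves a multiple of $\mathrm{Im}(a_i\bar a_k)\,\mathrm{Im}(b_j\bar b_l)$, which is again nonzero in general.

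No repair is possible, because under ordinary (complex) separability the statement is false. Take $\rho=\tfrac14(\mathbb{I}+\sigma_y\otimes\sigma_y)=\tfrac12\bigl(|y_+y_+\rangle\langle y_+y_+|+|y_-y_-\rangle\langle y_-y_-|\bigr)$, embedded in $3\times 3$ if you wish.
\begin{itemize}
\item It is real and separable.
\item Yet $R\,(|01\rangle-|10\rangle)=-\tfrac12(|01\rangle-|10\rangle)$, so $R_-\neq0$.
\item Directly, $G_1=\mathrm{Tr}\,R=0$ while $G_1^{PT}=\mathrm{Tr}[PR]=1$.
\item In general $\delta G_k=(1-(-1)^k)\,\mathrm{Tr}[R_-^k]=2(-\tfrac12)^k\neq0$ for every odd $k$.
\end{itemize}
Incidentally, this shows the prefactor in Corollary~\ref{cor:odd_k} should be $+2$ rather than $-2$; the sign does not matter for the witness.

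So you must commit to your first reading and state it explicitly. Under that reading the corollary certifies that $\rho$ is not a convex combination of \emph{real} product states. That is all Proposition~\ref{prop:sep_R_minus}, and therefore the paper's own proof, actually delivers. As a witness of ordinary entanglement it can fire on separable states.
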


\begin{proof}
By Corollary~\ref{cor:odd_k}, $\delta G_k = -2\,\mathrm{Tr}[R_-^k]$ for odd $k$. If $\rho$ were separable, Proposition~\ref{prop:sep_R_minus} would give $R_- = 0$, hence $\delta G_k = 0$. The contrapositive yields the result.
\end{proof}

\begin{remark}
This criterion is independent of the CCNR (realignment) criterion $\|R\|_1 \leq 1$ and the PPT criterion. It detects entanglement through the \emph{spectral asymmetry} of the antisymmetric sector of the realignment matrix --- information that is invisible to singular-value-based criteria (Theorem~\ref{thm:universal_sv}).
\end{remark}

We now show that the chessboard state eliminates this entanglement signal by satisfying $R_- = 0$ despite being entangled.

\begin{theorem}[Antisymmetric sector annihilation for chessboard states]
\label{thm:chess_degeneracy}
For all real-parameter Bruss--Peres chessboard states (rank-4 $d\times d$ PPT entangled states constructed from vectors $|v_1\rangle,\ldots,|v_4\rangle$ with the standard checkerboard support pattern), the antisymmetric sector of the realignment matrix vanishes identically:
\begin{equation}
R_- = 0, \qquad \text{hence}\quad \delta G_k = 0 \quad \text{for all } k \geq 1.
\end{equation}
\end{theorem}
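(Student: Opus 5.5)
The plan is to use the wedge-product characterisation of $R_-$ from Definition~\ref{def:wedge} and verify condition~\eqref{eq:wedge_condition} directly for the Bruss--Peres vectors. For real parameters, Proposition~\ref{prop:P_sector} applies, so $R=R_+\oplus R_-$. Once $R_-=0$ is established, $\mathrm{Tr}[R_-^k]=0$ for every $k$. Corollaries~\ref{cor:even_k} and~\ref{cor:odd_k} then give $\delta G_k=0$ for all $k\ge 1$, so the second claim follows at once. The whole argument therefore reduces to showing that, for every pair of rows $i<k$, the antisymmetrised $2\times2$ minors $\sum_m\bigl[(V_m)_{ij}(V_m)_{kl}-(V_m)_{kj}(V_m)_{il}\bigr]$ vanish.

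\textbf{Step 1 (setup).} I write $\rho=\mathcal{N}^{-1}\sum_{m=1}^4|v_m\rangle\langle v_m|$ in the standard chessboard form. In this form $v_1,v_3$ are supported on computational indices with $i+j$ even, and $v_2,v_4$ on indices with $i+j$ odd. Concretely, $v_1=(m,0,s,0,n,0,0,0,0)$, $v_2=(0,a,0,b,0,c,0,0,0)$, $v_3=(n,0,0,0,-m,0,t,0,0)$ and $v_4=(0,b,0,-a,0,0,0,d,0)$. I reshape each into the $3\times3$ matrix $V_m$ with $(V_m)_{ij}=(v_m)_{3i+j}$. The normalisation plays no role, since condition~\eqref{eq:wedge_condition} is homogeneous.

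\textbf{Step 2 (parity reduction).} The checkerboard support means each $V_m$ has a fixed parity pattern. A minor with row pair $(i,k)$ and column pair $(j,l)$ can be nonzero only when the parities of $i+j$ and $k+l$ match the support of $V_m$. This should cut the $3\times 3$ table of minors for each row pair down to a few candidate entries per $m$. Within these, the sign-twisted pairings $(m,n)\mapsto(n,-m)$ between $v_1$ and $v_3$, and $(a,b)\mapsto(b,-a)$ between $v_2$ and $v_4$, should cancel the leading terms; for example, the $mn$ minor of $V_1$ against the $-mn$ minor of $V_3$, and $-ab$ against $+ab$.

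\textbf{Step 3 and main obstacle.} The hard part is the remaining off-pattern minors that involve the extra parameters $s,t,c,d$. A quick check suggests that entries such as $-sn$ (from $V_1$) and $ac$ (from $V_2$) do not cancel term by term within a single pair. So I first recheck the exact convention of the family: which index carries $s,t,c,d$, and whether the paper's real-parameter chessboard uses the Bruss--Peres choice that makes $\rho$ PPT, namely the relation among $a,b,c,d,m,n,s,t$ from the PPT condition, for example $s=ac/n$, $t=ad/m$. I then substitute that relation to see whether it forces $\sum_m$ of each residual minor to vanish; for instance, $-sn+ac=0$ exactly when $s=ac/n$. If so, the proof is a short finite table check over the three row pairs $(0,1),(0,2),(1,2)$, with every cancellation traced to either the $v_1\leftrightarrow v_3$ and $v_2\leftrightarrow v_4$ twist or the PPT parameter relation. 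If some minor survives even then, the statement must be restricted to the parameter subfamily where it holds. I would report that restriction explicitly, supported by numerical confirmation of $R_-=0$ in the examples used in Section~\ref{sec:chess_anatomy}.
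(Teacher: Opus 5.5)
Your proposal is correct and follows the paper's own route: check the wedge condition~\eqref{eq:wedge_condition} for the reshaped matrices $V_m$, then invoke Corollaries~\ref{cor:even_k} and~\ref{cor:odd_k}. Completing your Step~3 closes the table with no restriction needed: after the $mn$ and $ab$ twist cancellations, the only surviving summed minors are $ac-sn$ (rows $0,1$; columns $1,2$) and $mt-ad$ (rows $1,2$; columns $0,1$), and both vanish exactly under the defining relations $s=ac/n$, $t=ad/m$, which the paper's ``arbitrary $(a,b,c,d,m,n)$'' tacitly assumes.
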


\begin{proof}
We verify condition~\eqref{eq:wedge_condition} directly. For a $3\times 3$ chessboard state with parameters $(a,b,c,d,m,n)$, the four spanning vectors yield matrices (with $s = ac/n$, $t = ad/m$):
\begin{align}
V_1 &= \begin{pmatrix} m & 0 & s \\ 0 & n & 0 \\ 0 & 0 & 0 \end{pmatrix}, \quad
V_2 = \begin{pmatrix} 0 & a & 0 \\ b & 0 & c \\ 0 & 0 & 0 \end{pmatrix}, \\
V_3 &= \begin{pmatrix} n & 0 & 0 \\ 0 & -m & 0 \\ t & 0 & 0 \end{pmatrix}, \quad
V_4 = \begin{pmatrix} 0 & b & 0 \\ -a & 0 & 0 \\ 0 & d & 0 \end{pmatrix}.
\end{align}
Vectors $V_1, V_3$ have the \emph{even-parity} checkerboard pattern: $(V_m)_{ij} = 0$ when $i+j$ is odd. Vectors $V_2, V_4$ have the complementary \emph{odd-parity} pattern: $(V_m)_{ij} = 0$ when $i+j$ is even.

For the wedge product $\sum_m \mathrm{row}_i(V_m) \wedge \mathrm{row}_k(V_m)$, the parity structure constrains which entries contribute. Within each parity class, row $i$ of $V_m$ has nonzero entries only at column positions $j$ with $j \equiv i \pmod{2}$ (even parity) or $j \not\equiv i \pmod{2}$ (odd parity). An explicit symbolic computation confirms that all three wedge-product sums $(i,k) \in \{(0,1),(0,2),(1,2)\}$ vanish identically for arbitrary parameters $(a,b,c,d,m,n)$:
\begin{equation}
\sum_{m=1}^{4} \mathrm{row}_i(V_m) \wedge \mathrm{row}_k(V_m) = 0 \quad \text{for all } i < k.
\end{equation}
By Definition~\ref{def:wedge}, $R_- = 0$. Combined with Corollaries~\ref{cor:even_k} and~\ref{cor:odd_k}, $\delta G_k = 0$ for all $k$.
\end{proof}

\begin{remark}[Entanglement witness failure for chessboard states]
\label{rem:chess_witness}
The chessboard state is entangled yet satisfies $R_- = 0$ --- the same condition as separable states (Proposition~\ref{prop:sep_R_minus}). This means the spectral entanglement witness of Corollary~\ref{cor:ent_witness} cannot detect chessboard entanglement: the state mimics a separable state in its antisymmetric realignment sector. Combined with $D_k \approx 0$ (PPT) and $\|R\|_1 < 1$ (CCNR failure), detection is confined entirely to rank-structure features.
\end{remark}

\begin{remark}[Contrast with Horodecki states]
\label{rem:horodecki_contrast}
The Horodecki family also consists of real matrices, so $\delta G_k = 0$ for even $k$ by Corollary~\ref{cor:even_k}. However, the Horodecki construction does \emph{not} have checkerboard parity: its spanning vectors mix even- and odd-parity basis states, and $R_- \neq 0$ (numerically $\|R_-\|_F \approx 0.17$ for $a = 0.5$). The spectral entanglement witness of Corollary~\ref{cor:ent_witness} successfully detects Horodecki entanglement through $\delta G_3 \neq 0$.
\end{remark}

Numerical verification confirms the theoretical predictions:
\begin{center}
\begin{tabular}{@{}lcccc@{}}
\toprule
State & $\|R_-\|_F$ & $|\delta G_2|$ & $|\delta G_3|$ & $|\delta G_4|$ \\
\midrule
Random separable (real) & $< 10^{-15}$ & $0$ & $0$ & $0$ \\
Random separable (complex) & $\sim 0.04$ & $9.3 \times 10^{-3}$ & $2.3 \times 10^{-3}$ & $1.1 \times 10^{-3}$ \\
Horodecki $a = 0.5$ (real) & $0.173$ & $< 10^{-15}$ & $6.0 \times 10^{-3}$ & $< 10^{-15}$ \\
\textbf{Chessboard (real, BE)} & $\mathbf{< 10^{-15}}$ & $\mathbf{0}$ & $\mathbf{0}$ & $\mathbf{0}$ \\
\bottomrule
\end{tabular}
\end{center}
The real separable and chessboard rows confirm $R_- = 0$ (Proposition~\ref{prop:sep_R_minus} and Theorem~\ref{thm:chess_degeneracy}). The complex separable row shows that the $P$-sector decomposition requires real entries ($[R,P] \neq 0$ for complex states). The Horodecki row confirms $R_- \neq 0$ and the spectral witness of Corollary~\ref{cor:ent_witness}. For the chessboard, \emph{every} spectral comparison between $R(\rho)$ and $R(\rho^{T_A})$---singular-value moments (Theorem~\ref{thm:universal_sv}), and now also eigenvalue moments (Theorem~\ref{thm:chess_degeneracy})---yields exactly zero signal. Combined with $D_k \approx 0$ (PPT) and $\|R\|_1 < 1$ (CCNR failure), detection is confined entirely to rank-structure features ($SV_{\mathrm{ent}}$, $\Delta_{\mathrm{EG}}$, $C_4$) that sense the collapsed dimensionality of the correlation ellipsoid.

\textbf{Partial transpose spectrum.} Despite the extreme geometric features, the chessboard state sits exactly on the PPT boundary: all eigenvalues of $\rho^{T_A}$ are $\geq 0$ with the minimum eigenvalue $\lambda_{\min}(\rho^{T_A}) \approx 10^{-16}$ (numerically zero). The nuclear norm $\|R(\rho)\|_1 = 0.949 < 1$, so the CCNR criterion also fails. The chessboard thus occupies the deepest pocket of the bound entangled region: PPT, CCNR-invisible, $R$-time-reversal-symmetric, and rank-deficient.

\subsection{Machine learning classification results}

We train two classifiers on 83 physically grounded base features (Table~\ref{tab:bound_features}) expanded via degree-2 polynomial combinations, signed-log, and signed-sqrt transforms to 522 total features: (i) \textbf{LogReg+EN} (logistic regression with ElasticNet regularisation), and (ii) \textbf{ExtraTrees} (extremely randomized forest, 500 trees, balanced class weights). Hyperparameters are tuned by 5-fold stratified cross-validation. Performance is evaluated using a \emph{cross-validated zero-FP protocol}: in each fold, the classification threshold is set to the maximum separable-state score in the \emph{held-out} test fold, guaranteeing zero false positives; bound-entangled recall is then computed at that threshold.

\subsubsection{\texorpdfstring{$3\times 3$}{3x3} system (qutrit--qutrit)}

The dataset comprises 3900 states: 2000 separable (random pure-state product mixtures) and 1900 bound entangled from four analytically certified families: 500 Horodecki states ($a \in [0.01,0.99]$), 500 noisy Horodecki ($\epsilon \in [0.01,0.12]$ white noise), 100 Tiles UPB states (noise $\epsilon \in [0,0.15]$), and 800 Bruss--Peres chessboard states (300 from an integer parameter grid, 200 from a near-separability-boundary continuous grid, and 300 noisy variants).

\begin{table}[htb!]
\centering
\caption{Cross-validated zero-FP classification performance for $3\times 3$ bound entanglement (5-fold CV, 3900 states: 2000 SEP $+$ 1900 BE). For each fold, the threshold is set to the maximum separable score in the held-out test fold, guaranteeing zero false positives. The ExtraTrees classifier uses 522 nonlinear features derived from 83 physically grounded base spectral moment features.}
\label{tab:perf3x3_bound}
\begin{tabular}{@{}lccc@{}}
\toprule
\textbf{Classifier} & \textbf{BE recall (zero-FP)} & \textbf{False positives} & \textbf{CV F1} \\
\midrule
Realignment (CCNR, $\|R\|_1>1$) & $\sim$50\% & 0 & --- \\
LogReg+EN (ElasticNet) & 96.5\% & 0/2000 & 0.982 \\
ExtraTrees (500 trees) & \textbf{99.6\%} & \textbf{0/2000} & \textbf{0.998} \\
\bottomrule
\end{tabular}
\end{table}

\begin{table}[htb!]
\centering
\caption{Per-source BE recall of the ExtraTrees classifier at the cross-validated zero-FP threshold (5-fold CV, mean threshold $= 0.554$). All 2000 separable states are correctly identified (100\% specificity).}
\label{tab:be_per_source}
\begin{tabular}{@{}lcc@{}}
\toprule
\textbf{BE family} & \textbf{Detected} & \textbf{Rate} \\
\midrule
Horodecki ($a \in [0.01,0.99]$)                              & 498/500 & 99.6\% \\
Noisy Horodecki ($\epsilon \in [0.01,0.12]$)                 & 500/500 & 100\%  \\
Tiles UPB (noise $\epsilon \in [0,0.15]$)                    & 100/100 & 100\%  \\
Chessboard (integer grid $+$ near-boundary, $n=500$)         & 499/500 & 99.8\% \\
Noisy chessboard ($\epsilon \in [0.01,0.12]$, $n=300$)       & 296/300 & 98.7\% \\
\midrule
\textbf{Total}                                               & \textbf{1893/1900} & \textbf{99.6\%} \\
\bottomrule
\end{tabular}
\end{table}

\subsubsection{\texorpdfstring{$4\times 4$}{4x4} system (embedded \texorpdfstring{$2\times 4$}{2x4})}

\begin{table}[htb!]
\centering
\caption{Classification performance for $4\times 4$ system.}
\label{tab:perf4x4_bound}
\begin{tabular}{@{}lccc@{}}
\toprule
\textbf{Task} & \textbf{Features} & \textbf{F1 Score} & \textbf{Accuracy} \\
\midrule
Bound vs Separable & 15 & 0.9994 & 0.9997 \\
Bound vs Others & 11 & 0.9945 & 0.9972 \\
3-class (SEP/NPT/BE) & 10 & 0.9917 & 0.9918 \\
\bottomrule
\end{tabular}
\end{table}

\begin{table}[htb!]
\centering
\caption{Top SVM coefficients for Bound vs Separable ($4\times 4$).}
\label{tab:coef4x4_bound}
\begin{tabular}{@{}lrl@{}}
\toprule
\textbf{Feature} & \textbf{Coefficient} & \textbf{Interpretation} \\
\midrule
$I_3$ & $-32.70$ & Strongest indicator \\
$T_3$ & $-9.69$ & Low $T_3 \Rightarrow$ Bound \\
$-D_4$ & $+8.28$ & Low $D_4 \Rightarrow$ Bound \\
$\Sigma_4$ & $-8.05$ & Low $\Sigma_4 \Rightarrow$ Bound \\
$T_4/\Sigma_4$ & $-7.32$ & Ratio feature \\
$\Sigma_2^2/\Sigma_4$ & $+5.03$ & Concentration $\Rightarrow$ Bound \\
$G_3$ & $+3.71$ & High $G_3 \Rightarrow$ Bound \\
\bottomrule
\end{tabular}
\end{table}

\subsection{Physical interpretation}

\subsubsection{Key finding: \texorpdfstring{$I_3$}{I3} as primary indicator}

The third purity moment $I_3 = \mathrm{Tr}[\rho^3]$ emerges as the most important feature for detecting bound entanglement in both $3\times 3$ and $4\times 4$ systems. This suggests that bound entangled states have a characteristic ``flatness'' in their eigenvalue spectrum compared to separable states.

For a state with eigenvalues $\{\lambda_i\}$:
\begin{equation}
I_3 = \sum_i \lambda_i^3
\end{equation}
Low $I_3$ indicates eigenvalues are more uniformly distributed (higher mixedness at fixed purity $I_2$).

\subsubsection{Role of partial transpose moments}

The moments $T_k$ capture how correlations behave under partial transpose:
\begin{itemize}
\item For separable states, $\rho^{T_B}$ remains a valid density matrix.
\item For bound entangled (PPT) states, $\rho^{T_B} \geq 0$ but with different spectral structure.
\item Low $T_3$ indicates bound entanglement.
\end{itemize}

\subsubsection{Non-Hermiticity measure \texorpdfstring{$D_k = \Sigma_k - G_k$}{Dk = Sigma\_k - Gk}}

The difference between singular value and eigenvalue moments, $D_k = \Sigma_k - G_k$, measures the non-Hermitian character of the realignment matrix. For the $4\times 4$ system, $-D_4$ has a large positive SVM coefficient (Table~\ref{tab:coef4x4_bound}), meaning that low $D_4$ (near-Hermitian realignment) indicates bound entanglement:
\begin{itemize}
\item Bound entangled states have realignment matrices that are \emph{nearly Hermitian} ($D_4 \approx 0$), meaning eigenvalues closely approximate singular values.
\item The gap captures subtle structural differences missed by singular values alone.
\end{itemize}

\subsection{Why the classifier outperforms analytical criteria: multi-channel witness fusion}
\label{subsec:witness_fusion}

We now explain why the machine learning classifier detects bound entanglement far more effectively than any single analytical criterion. The key insight is that each known criterion---PPT, CCNR, $\delta G_k$---probes a single \emph{detection channel}, while the classifier combines multiple complementary channels simultaneously, constructing what we term a \emph{composite entanglement witness}.

\subsubsection{Detection channels}

We organise the spectral features into five independent detection channels, each probing a distinct aspect of quantum correlations:

\begin{definition}[Detection channels]
\label{def:channels}
For a $d_A \times d_B$ bipartite state $\rho$, define:
\begin{itemize}
\item[\textbf{(A)}] \textbf{Purity channel}: eigenvalue moments $I_k = \mathrm{Tr}[\rho^k]$ and rank-structure features (eigenvalue gap $\Delta_{\mathrm{EG}}$, rank-4 concentration $\mathcal{R}_4$);
\item[\textbf{(B)}] \textbf{PT channel}: eigenvalue moments $T_k = \mathrm{Tr}[(\rho^{T_A})^k]$ and PT-boundary distance $\lambda_{\min}(\rho^{T_A})$;
\item[\textbf{(C)}] \textbf{Realignment-SV channel}: singular-value moments $\Sigma_k = \sum_i \sigma_i^k$ of $R(\rho)$;
\item[\textbf{(D)}] \textbf{Realignment-EV channel}: eigenvalue moments $G_k = \mathrm{Re}\sum_i \lambda_i^k$ of $R(\rho)$, non-Hermiticity ratios $Q_k = G_k/\Sigma_k$, and R\'enyi ratio $\mathcal{R} = \Sigma_2^2/\Sigma_4$;
\item[\textbf{(E)}] \textbf{Cross-restructuring channel}: moments $SP_k$, $GP_k$ of $R(\rho^{T_A})$, and cross-differences $\Delta G_k = G_k - GP_k$.
\end{itemize}
\end{definition}

\subsubsection{Channel complementarity}

\begin{proposition}[Channel complementarity]
\label{prop:complementarity}
No single detection channel is sufficient to detect all $3\times 3$ bound entangled families at the zero-false-positive operating point. Specifically:
\begin{itemize}
\item[(a)] Channel~(A) alone (purity/rank) detects 0\% of all BE families --- rank structure alone cannot distinguish low-rank separable states from BE states;
\item[(b)] Channel~(B) alone (PT spectrum) detects 98\% of Horodecki but 0\% of chessboard and 0\% of Tiles;
\item[(c)] Channel~(C) alone (realignment SVs) detects 17.5\% of chessboard, 94\% of Horodecki, and 100\% of Tiles;
\item[(d)] Channel~(D) alone (realignment EVs) detects 37\% of chessboard, 90\% of Horodecki, and 100\% of Tiles;
\item[(e)] Channel~(E) alone (cross-restructuring) detects 36\% of chessboard, 95\% of Horodecki, and 100\% of Tiles.
\end{itemize}
When all channels are combined by the trained classifier, detection rises to 84\% of chessboard, 100\% of Horodecki, and 100\% of Tiles --- a jump from 37\% (best single channel) to 84\% for the hardest family.
\end{proposition}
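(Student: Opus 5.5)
This proposition is empirical: it states detection rates at a zero-false-positive operating point. The plan therefore splits into two parts. The structural zeros and near-perfect rates that follow from results already proved are handled analytically. The remaining percentages are established by a cross-validated numerical protocol on the certified dataset.

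\textbf{Analytic part.} For (b), every bound entangled state is PPT. The PT channel therefore only sees the moments $T_k=\mu_k$ and $\lambda_{\min}(\rho^{T_A})\ge 0$. For chessboard and Tiles states the density matrices are real. I will argue that their PT moments lie inside the range attained by real separable states of matching rank and purity. Then any threshold with zero false positives on the separable set rejects them, which gives the stated $0\%$. Horodecki states differ: there $C_3=\mu_3-I_3\neq 0$ by the discussion of Section~\ref{sec:bound_features}. This is a combination that separable states of comparable purity do not reach, and it explains the high (but empirical) $98\%$. For (a), I will exhibit low-rank separable states, namely rank-4 mixtures of product states. Their values of $\Delta_{\mathrm{EG}}$, $\mathcal{R}_4$ and $I_k$ coincide with or dominate those of each BE family. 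So no purity-channel threshold separates them without a false positive. For (c)--(e), the structural input is Theorem~\ref{thm:universal_sv}, which shows that $SP_k=\Sigma_k$ carries no new information. Theorem~\ref{thm:chess_degeneracy} shows that $\delta G_k=0$ for chessboard states, exactly as for real separable states (Proposition~\ref{prop:sep_R_minus}). Chessboard detection in channels (C)--(E) must therefore rest only on $\Sigma_k$ and $G_k$ themselves, consistent with low rates. Tiles states, by contrast, sit far from the separable cloud in $\Sigma_1$ and $G_1/\Sigma_1$, consistent with $100\%$.

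\textbf{Numerical part.} For each channel I will restrict the feature matrix to that channel's columns and train the same classifier (ExtraTrees with the fixed hyperparameters) using the cross-validated zero-FP protocol. In each fold, the threshold is set to the maximum held-out separable score. BE recall is then recorded per family and averaged over folds, giving the percentages in (a)--(e). I will repeat this with all channels combined to obtain the $84\%/100\%/100\%$ line. To make the statement ``no single channel suffices'' rigorous, it is enough to exhibit, for each channel, one family with recall strictly below $100\%$. Reading off the table: (A) and (B) fail on chessboard at $0\%$, and (C)--(E) fail on chessboard at $\le 37\%$.

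\textbf{Main obstacle.} The hard point is that these percentages depend on the dataset, the classifier and the fold split. They are not properties of the channels themselves. Only the qualitative claims are truly provable: the zero rates in (a) and (b) and the chessboard limitations. For those I will rely on the exact identities above rather than on the numerics. I will report the quantitative rates with fold-to-fold standard deviations, and check their stability under reseeding and under alternative classifiers (Random Forest, logistic regression). This ensures that the gap between the best single channel ($37\%$) and the fused classifier ($84\%$) is robust and not an artefact of one model.
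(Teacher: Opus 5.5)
\textbf{Gap: your plan runs a different computation from the one that produced these numbers.} The entries are not properties of the channels; they are the reported output of one specific computation. The paper obtains them as follows:
\begin{itemize}
\item it trains an ElasticNet logistic regression ($C=1$, $\ell_1$-ratio $0.7$) separately on each channel and on all channels together;
\item the training data are 1500 separable and 1080 BE states (300 Horodecki, 300 noisy Horodecki, 80 Tiles, 400 chessboard);
\item the threshold is placed at the maximum separable score, and detections are counted per family.
\end{itemize}
You substitute ExtraTrees with a cross-validated threshold, and that cannot reproduce the statement. The paper's own ExtraTrees run on the full spectral feature set (Table~\ref{tab:be_per_source}) detects $99.8\%$ of chessboard states at zero false positives. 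Your ``all channels combined'' line would therefore come out near $99\%$, not $84\%$. The $84\%$ figure, and with it the $37\%\to84\%$ jump, belongs to the logistic-regression run. Your single-channel entries would likewise be whatever ExtraTrees returns on your data, with no reason to equal $17.5$, $37$ or $36\%$. The robustness check across classifiers that you propose would expose this dependence rather than confirm the numbers.

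\textbf{The analytic part cannot fill the gap.} Every rate is relative to the finite separable sample used to set the threshold. Measured against the full separable set, your argument proves too much. As listed in Definition~\ref{def:channels}, channel~(A) features are functions of $\mathrm{spec}(\rho)$ alone, and channel~(B) features are functions of $\mathrm{spec}(\rho^{T_A})$ alone. For any PPT $\rho$, take the diagonal state $\sigma=\sum_{ij}\lambda_{ij}|ij\rangle\langle ij|$ with $\{\lambda_{ij}\}$ equal to $\mathrm{spec}(\rho)$ (respectively $\mathrm{spec}(\rho^{T_A})$). This $\sigma$ is separable, satisfies $\sigma^{T_A}=\sigma$, and reproduces that channel's features exactly. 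Against all separable states, channels~(A) and~(B) therefore detect $0\%$ of every PPT family, Horodecki included. The $98\%$ in~(b) exists only because the random product-mixture sample contains no such states.

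Three further steps in your analytic part also fail:
\begin{itemize}
\item Your explanation of that $98\%$ via $C_3=\mu_3-I_3\neq0$ imports $I_3$ from channel~(A), so it is not a channel-(B) argument.
\item Reality of $\rho$ does not distinguish the families, since the Horodecki matrix~\eqref{eq:horodecki_state_si} is real as well.
\item Lying coordinatewise ``inside the separable range'' does not bound a joint classifier score. For a linear score you would need convex-hull membership in the classifier's (lifted) feature space, and for ExtraTrees no certificate of this kind exists.
\end{itemize}

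To establish the proposition as written, you must rerun the paper's exact logistic-regression protocol on its dataset. The entries are measured outcomes of that run, not consequences of Theorems~\ref{thm:universal_sv} or~\ref{thm:chess_degeneracy}.
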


\begin{proof}
Computational. We train a logistic regression classifier (ElasticNet, $C=1$, $l_1$-ratio~$=0.7$) independently on each channel and on all channels combined, using 1500 separable and 1080 BE states (300 Horodecki, 300 noisy Horodecki, 80 Tiles, 400 chessboard). At the zero-FP threshold (maximum separable score), we count detected BE states per family. Results are reported above.
\end{proof}

This complementarity is not coincidental but reflects the distinct mathematical structures probed by each channel:
\begin{itemize}
\item Channels~(C)--(E) probe the realignment matrix $R(\rho)$ and its restructured variants. For chessboard states near the separability boundary ($mn/ab \to 1$), $\|R\|_1 \to 1$ and the realignment signal vanishes (Remark~\ref{rem:chess_witness}).
\item Channel~(D) includes the antisymmetric sector $R_-$, which vanishes for chessboard (Theorem~\ref{thm:chess_degeneracy}) and Tiles (numerically $\|R_-\|_F < 10^{-15}$), eliminating the $\delta G_k$ witness.
\item Channel~(A) captures rank structure: chessboard and Tiles states are exactly rank-4, while generic separable mixtures are rank~$\geq 5$. But this feature alone has high overlap with low-rank separable states and cannot provide zero-FP separation.
\end{itemize}

\subsubsection{Composite entanglement witness}

\begin{theorem}[Multi-channel witness fusion]
\label{thm:witness_fusion}
Let $\varphi: \mathcal{D}(\mathcal{H}) \to \mathbb{R}^n$ be a feature map constructed from spectral moments of $\rho$, $\rho^{T_A}$, $R(\rho)$, and $R(\rho^{T_A})$. Let $f(\rho) = \mathbf{w} \cdot \varphi(\rho) + b$ be a linear classifier trained to satisfy $f(\sigma) \leq 0$ for all separable states $\sigma$ (zero-false-positive condition). Then $f$ defines a \emph{nonlinear entanglement witness}: $f(\rho) > 0$ implies $\rho$ is entangled.

This witness is strictly more powerful than any individual channel: there exist states detected by $f$ that are missed by every single-channel criterion (CCNR, $\delta G_k$, rank gap) individually.
\end{theorem}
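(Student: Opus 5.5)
The first claim is a contrapositive of the defining inequality, so I would prove it in one line. Suppose $f(\rho) > 0$ and assume for contradiction that $\rho$ is separable. The zero-false-positive condition, assumed to hold for \emph{all} separable $\sigma$ and not only those in the training set, then gives $f(\rho) \le 0$, which is a contradiction. Hence $\rho$ is entangled.

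I would stress that $f$ is affine in $\varphi(\rho)$ but nonlinear in $\rho$. The components of $\varphi$ are polynomial or Schatten-type functions of $\rho$: $I_k$, $T_k=\mu_k$, $\Sigma_k$, $G_k$, $SP_k$, $GP_k$. So $\{\rho : f(\rho)>0\}$ is in general not a half-space in state space. This is what the word ``nonlinear'' in the statement means, and in this sense $f$ generalizes the moment-based witnesses of Section~\ref{sec:identities}. I would also note that the hypothesis is really a statement about the whole separable set $\mathcal{S}$, namely $\sup_{\sigma\in\mathcal{S}} f(\sigma)\le 0$. For the trained classifiers this is certified only empirically, via the threshold equal to the maximum separable score. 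A rigorous version should therefore take $b$ to be $-\sup_{\mathcal{S}} \mathbf{w}\cdot\varphi$. One could compute this by optimizing over convex combinations of product states, and Carath\'eodory's theorem lets one restrict to at most $(d_Ad_B)^2$ terms.

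For the strict-superiority claim I need an explicit state $\rho^\star$ with three properties. It must satisfy $\|R(\rho^\star)\|_1\le 1$ (CCNR silent), $\delta G_k(\rho^\star)=0$ for all $k$ ($\delta G_k$ silent), and it must have no rank gap distinguishing it from separable states at the zero-FP threshold. At the same time it must satisfy $f(\rho^\star)>0$. The natural candidate is a real chessboard state with $mn/ab$ close to $1$, possibly mixed with its marginals as in Eq.~\eqref{eq:new_family}. Theorem~\ref{thm:chess_degeneracy} gives $R_-=0$, and hence $\delta G_k\equiv 0$ via Corollaries~\ref{cor:even_k}--\ref{cor:odd_k}. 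Chessboard states have $\Sigma_1<1$ (e.g.\ $0.949$), so CCNR fails. Admixing $t\,\rho_A\otimes\rho_B$ with small $t>0$ raises the rank to full, which kills the rank-gap channel while keeping the state PPT and entangled for $t<t_{\max}$. I would then construct $f$ with only a few features, e.g.\ $\Sigma_1$, $D_2$, $C_4$ and $I_2$. To get $f(\rho^\star)>0$, the separable supremum of $\mathbf{w}\cdot\varphi$ must lie strictly below $\mathbf{w}\cdot\varphi(\rho^\star)$.

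The main obstacle is certifying that supremum over all of $\mathcal{S}$, not just over a sample. My first attempt would use weights for which the separable maximum is controlled by known analytic bounds. These are $\Sigma_1\le 1$ from CCNR, $R_-=0$ for real separable states from Proposition~\ref{prop:sep_R_minus}, and a bound on $D_2$ in terms of purity. A Lagrangian or Lasserre-type SDP relaxation over product-state decompositions would then supply a rigorous upper bound. If that is too weak, I would fall back on the computational certification used for the classifier in Proposition~\ref{prop:complementarity}. That certification combines the zero-FP threshold on the held-out separable set with an SDP symmetric-extension check that $\rho^\star$ is entangled. In that case I would state explicitly that the strictness part is established numerically.
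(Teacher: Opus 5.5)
Your proposal follows the paper's proof: part one is the same contraposition, and for strictness the paper likewise takes real chessboard states with $mn/ab$ close to $1$ (CCNR silent, $R_-=0$ so $\delta G_k\equiv 0$ by Theorem~\ref{thm:chess_degeneracy}, rank gap overlapping separable values) and rests $f(\rho^\star)>0$ on the trained classifier's empirical $84\%$ detection rate. That is a numerical argument, exactly as in your fallback, so your point that the zero-FP condition must hold on the whole separable set (not just a sample) applies equally to the paper's proof.

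The marginal-noise admixture is harmless but unnecessary. It keeps $R_-=0$ because $\rho_A\otimes\rho_B$ is real separable, and it keeps $\Sigma_1<1$ by the triangle inequality. Your stated reason for it is off, though. Raising the rank does not remove the gap, since $\Delta_{\mathrm{EG}}\ge(1-t)\lambda_4(\rho_0)-t$ barely changes. The rank-gap channel is already silent at zero FP, because $\Delta_{\mathrm{EG}}\le\lambda_4\le 1/4$ with equality for the separable state $\tfrac14(|00\rangle\langle00|+|01\rangle\langle01|+|10\rangle\langle10|+|11\rangle\langle11|)$.
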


\begin{proof}
The first statement follows by contraposition: if $\rho$ were separable, the zero-FP condition guarantees $f(\rho) \leq 0$.

For strictness, chessboard states near the separability boundary provide constructive examples. Consider parameters with $mn/ab$ close to but not equal to 1. These states satisfy: (i)~$\|R\|_1 < 1$ (CCNR fails), (ii)~$R_- = 0$ ($\delta G_k$ witness fails by Theorem~\ref{thm:chess_degeneracy}), (iii)~$\Delta_{\mathrm{EG}}$ overlaps with separable-state values (rank gap alone is insufficient). Yet the trained classifier detects 84\% of such states by combining channels~(A)--(E) with optimised weights. In feature space, these states lie outside the image of the separable set in a direction \emph{oblique} to every individual channel axis.
\end{proof}

\begin{remark}[Connection to entanglement witness theory]
In standard entanglement theory, a witness is a Hermitian operator $W$ such that $\mathrm{Tr}[W\sigma] \geq 0$ for all separable $\sigma$ and $\mathrm{Tr}[W\rho] < 0$ for some entangled $\rho$. Our composite witness $f$ is a \emph{nonlinear} generalisation: the feature map $\varphi$ includes polynomial functions of $\rho$ (the moments $I_k$, $S_k$, $G_k$ are polynomials of degree $k$ in the matrix elements of $\rho$), and the classifier performs a linear combination in this lifted polynomial space. This is equivalent to a polynomial entanglement witness --- a well-studied extension of the standard witness framework~\cite{Guhne2009}.
\end{remark}

\subsubsection{Chessboard invisibility depth}
\label{subsubsec:chess_invisibility}

The chessboard family provides a concrete illustration of why multi-channel fusion is necessary. We quantify the ``invisibility depth'' of each BE family by measuring the overlap of its feature distribution with the separable set.

\begin{center}
\begin{tabular}{@{}lcccc@{}}
\toprule
\textbf{Feature} & \textbf{SEP mean} & \textbf{Chess mean} & \textbf{Overlap} & \textbf{Channel} \\
\midrule
$\Sigma_1$ (CCNR)   & 0.710 & 0.904 & 48\%  & C \\
$D_3$ ($\delta G_3$)  & 0.073 & 0.077 & 94\%  & D \\
$\Delta_{\mathrm{EG}}$ (rank gap) & 0.027 & 0.135 & 11\%  & A \\
$C_4$ (rank-4 conc.) & 0.944 & 1.000 & 100\% & A \\
$\lambda_{\min}^{T_A}$ (PT boundary) & 0.081 & ${\sim}0$ & 100\% & B \\
$Q_2$ (Hermiticity) & 0.548 & 0.630 & 81\%  & D \\
\bottomrule
\end{tabular}
\end{center}

Here ``overlap'' is the fraction of chessboard feature values falling within two standard deviations of the separable mean. The chessboard simultaneously achieves high overlap ($>80$\%) in channels~B, C, D --- the spectral channels --- leaving only the rank-structure feature $\Delta_{\mathrm{EG}}$ (eigenvalue gap at position 4--5) with low overlap (11\%). However, this single feature cannot provide zero-FP separation alone because some separable mixtures are also low-rank.

The classifier overcomes this by learning an \emph{oblique} decision boundary in the joint feature space. The top classifier weights span all five channels: $GP_9$ (channel~E, weight~$-207$), $\Delta_{\mathrm{EG}}$ (channel~A, weight~$+75$), $\mathrm{SE}_{\mathrm{gap}}$ (channel~B, weight~$-72$), $\Sigma_1$ (channel~C, weight~$+9$), $G_5$ (channel~D, weight~$-19$). No single weight dominates; the power comes from their combination.

\subsubsection{Detection hierarchy and NP-hardness}
\label{subsubsec:detection_hierarchy}

\begin{corollary}[Detection hierarchy]
\label{cor:detection_hierarchy}
For $3\times 3$ systems, bound entangled states form a hierarchy of detectability ordered by the number of channels in which they are distinguishable from separable states:
\begin{enumerate}
\item \textbf{Easy} (Horodecki): detected by channels~B, C, D, E ($\geq 4$ channels). Recall $\geq 98$\% with any single channel.
\item \textbf{Medium} (Tiles): detected by channels~C, D, E ($\geq 3$ channels). Recall~$= 100$\% via $\|R\|_1 > 1$.
\item \textbf{Hard} (Chessboard): detected weakly by channel~A only (rank structure). Single-channel recall $\leq 37$\%; combined recall $= 84$\%; 16\% remain undetected even by multi-channel fusion.
\end{enumerate}
\end{corollary}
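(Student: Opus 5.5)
This corollary is largely a reorganisation of results already established, so my plan is to assemble it family by family. For each known $3\times 3$ bound entangled family I will count the channels of Definition~\ref{def:channels} in which it separates from the separable set at the zero-false-positive operating point. The quantitative inputs are the per-channel recall figures of Proposition~\ref{prop:complementarity}. The structural inputs explaining why a channel fails are Theorem~\ref{thm:universal_sv}, Corollaries~\ref{cor:even_k}--\ref{cor:ent_witness}, and Theorem~\ref{thm:chess_degeneracy}. The ordering ``easy/medium/hard'' is then simply the ordering by number of active channels, together with the single-channel and fused recalls.

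For the Horodecki family (item 1), I would first invoke Remark~\ref{rem:horodecki_contrast}. It gives $R_-\neq 0$, so by Corollary~\ref{cor:ent_witness} the odd moment $\delta G_3\neq 0$ is an analytic entanglement signal living in channels D and E. This explains why those channels are active rather than merely observing it. Channel B is active because the Horodecki density matrices give $C_3\neq 0$ and a PT spectrum distinguishable from separable ones; here I can only cite the 98\% recall of Proposition~\ref{prop:complementarity}(b). Channel C follows from $\|R\|_1>1$ (CCNR detection) together with the 94\% figure. Taking the minimum over (b)--(e) gives $\geq 90\%$ for each single channel. There is a mismatch I must flag rather than hide: the stated ``$\geq 98\%$ with any single channel'' does not follow from (d), which gives 90\%. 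I would either restrict that claim to channel B or restate it as $\geq 90\%$.

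For Tiles (item 2), the 100\% recall in (c)--(e) gives three active channels. Channel B is inactive (0\% in (b)). Because Tiles states are PPT, I would argue that channel B cannot carry a sign-definite signal. I would then note that the CCNR criterion $\|R\|_1>1$ alone already certifies every Tiles state, giving recall $=100\%$ analytically rather than merely empirically.

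For the chessboard (item 3), I would combine Theorem~\ref{thm:universal_sv} (channel C and the singular-value part of E are blind to partial transposition) with Theorem~\ref{thm:chess_degeneracy}. The latter gives $R_-=0$, so $\delta G_k=0$ for all $k$, which kills the analytic content of channels D and E. For channel C I would add $\|R\|_1<1$ near $mn/ab\to1$ (Remark~\ref{rem:chess_witness}), and for channel B the PPT property. The invisibility-depth table of Section~\ref{subsubsec:chess_invisibility} leaves only $\Delta_{\mathrm{EG}}$ in channel A with low overlap. The numbers ``$\leq 37\%$ single, 84\% combined, 16\% undetected'' are read off directly from Proposition~\ref{prop:complementarity}.

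The main obstacle is that the hierarchy is partly empirical and contains tensions. Proposition~\ref{prop:complementarity}(a) reports 0\% for channel A alone, so ``detected weakly by channel A only'' has to be interpreted as ``A is the only channel with low overlap with the separable set'' (11\%), not as positive single-channel recall. Meanwhile C, D and E do give 17--37\% recall on the chessboard. I would resolve this by defining ``distinguishable in a channel'' as having an analytic, family-wide signal in that channel. Under that definition the analytic results above put the chessboard's residual structural signal only in channel A, and the nonzero recalls in C--E come from non-structural (finite-sample) separation that disappears near the boundary.
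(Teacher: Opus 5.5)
Your route, reading the corollary off the per-channel recalls of Proposition~\ref{prop:complementarity} and the overlap table of Section~\ref{subsubsec:chess_invisibility}, is the only support the paper gives; the corollary has no separate proof. You are right that ``$\geq 98\%$ with any single channel'' is contradicted by (c)--(e), and that ``channel~A only'' is contradicted by (a). But item~2 has the same defect, and there you go the wrong way. The claim that $\|R\|_1>1$ ``alone already certifies every Tiles state, analytically'' is false for the family as used. The Tiles states carry white noise $\rho_\varepsilon=(1-\varepsilon)\rho_{\mathrm{Tiles}}+\varepsilon\,\mathbb{I}/9$ with $\varepsilon\in[0,0.15]$. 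Since $R$ is linear and $\|R(\mathbb{I}/9)\|_1=\|\mathbb{I}/3\|_F^2=1/3$, you get $\|R(\rho_\varepsilon)\|_1\le(1-\varepsilon)\|R(\rho_{\mathrm{Tiles}})\|_1+\varepsilon/3$. The noiseless Tiles value is only about $1.09$, so this bound drops below $1$ already for $\varepsilon\gtrsim 0.12$. Consistently, Table~\ref{tab:criterion_comparison} reports CCNR recall of only $91\%$ on Tiles. The $100\%$ in Proposition~\ref{prop:complementarity}(c) is for the whole $\Sigma_k$ channel fed to a trained classifier, not for the single test $\|R\|_1>1$. So ``Recall $=100\%$ via $\|R\|_1>1$'' must be flagged as unsupported, like the $98\%$ figure, not proved.

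Your structural explanations also do not do the work you assign them.
\begin{itemize}
\item ``PPT $\Rightarrow$ channel~B carries no signal'' would equally silence~B on the Horodecki family, which is also PPT. That contradicts the $98\%$ you rely on in item~1.
\item Theorem~\ref{thm:universal_sv} only says $SP_k=\Sigma_k$, so part of channel~E duplicates~C. It does not make~C blind to entanglement, as its $100\%$ on Tiles shows.
\item $R_-=0$ (Theorem~\ref{thm:chess_degeneracy}) kills the cross-differences $\delta G_k$. It does not kill the $G_k$, $Q_k$ of channel~D, which still reach $37\%$ on the chessboard.
\end{itemize}
Your rescue of item~3 also fails. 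Redefining ``distinguishable'' as ``carries an analytic, family-wide signal'' does not give ``channel~A only''. Rank~4 is not an entanglement signature: separable rank-4 states are in the test set, and (a) gives $0\%$. Under your definition~A carries nothing either. The paper's only analytic chessboard certificate is the product-vector gap, i.e.\ channel~F (Theorem~\ref{thm:range_channel}; Corollary~\ref{cor:completeness} says ``detected exclusively by channel~F''). Meanwhile CCNR, an analytic test in channel~C, certifies about $4\%$ of chessboard states in the same table. Your claim that the $17$--$37\%$ recalls in C--E are finite-sample artefacts is unsupported.

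What the paper's data actually establish is a corrected reading:
\begin{itemize}
\item Horodecki: at least $90\%$ recall in each of B--E.
\item Tiles: $100\%$ in each of C, D, E, but not via CCNR alone.
\item Chessboard: at most $37\%$ in any single channel, $84\%$ fused, $16\%$ missed. Channel~A is merely the feature with the lowest overlap with the separable set.
\end{itemize}
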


The residual 16\% of undetected chessboard states occupy the \emph{deepest} pocket of the bound entangled set: they are simultaneously PPT, CCNR-invisible ($\|R\|_1 < 1$), $R_-$-annihilated ($\delta G_k = 0$), and metrically close to separable in all spectral channels. Their detection requires a fundamentally different type of information, which we identify next.

\subsubsection{The missing channel: product-vector gap}
\label{subsubsec:missing_channel}

The fundamental limitation of spectral moment channels (A)--(E) is that they are functions of \emph{eigenvalues} only. Two states $\rho_1$, $\rho_2$ can share identical spectra---and hence identical values of $I_k$, $T_k$, $\Sigma_k$, $G_k$, and all derived features---yet differ in their \emph{eigenvector} structure, with one separable and the other entangled. The missing information resides in the \emph{range criterion}: whether range($\rho$) can be spanned by product vectors $|a\rangle|b\rangle$.

\begin{definition}[Product-vector gap]
\label{def:pvgap}
For a bipartite state $\rho$ on $\mathcal{H}_A \otimes \mathcal{H}_B$ with range projector $P_{\mathrm{range}}$, the product-vector gap is
\begin{equation}
\mathcal{F}(\rho) \;=\; \min_{\substack{|a\rangle \in \mathcal{H}_A,\; |b\rangle \in \mathcal{H}_B \\ \|a\|=\|b\|=1}} \langle a, b|\,(I - P_{\mathrm{range}})\,|a, b\rangle.
\end{equation}
This measures the minimum ``distance'' from a product vector to the range of $\rho$: $\mathcal{F} = 0$ if and only if range($\rho$) contains a product vector.
\end{definition}

\begin{theorem}[Range criterion channel]
\label{thm:range_channel}
The product-vector gap $\mathcal{F}$ satisfies:
\begin{enumerate}
\item $\mathcal{F}(\sigma) = 0$ for every separable state $\sigma$;
\item $\mathcal{F}(\rho) > 0$ for every PPT entangled state whose range violates the range criterion;
\item $\mathcal{F}$ is \emph{not} expressible as any function of spectral moments $\{I_k, T_k, \Sigma_k, G_k, SP_k, GP_k\}$.
\end{enumerate}
\end{theorem}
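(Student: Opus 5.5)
The plan is to treat the three claims separately, relying only on the definition of $\mathcal{F}$ and elementary facts about ranges. One preliminary remark is used throughout. The map $(a,b)\mapsto\langle a,b|(I-P_{\mathrm{range}})|a,b\rangle$ is continuous and nonnegative, since $I-P_{\mathrm{range}}$ is a projector. Its domain, the product of unit spheres, is compact, so the minimum in Definition~\ref{def:pvgap} is attained. Moreover, the expression vanishes exactly when $|a,b\rangle\in\mathrm{range}(\rho)$. Together these give the stated equivalence: $\mathcal{F}=0$ if and only if the range contains a product vector.

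\textbf{Claim 1.} Write a separable state as $\sigma=\sum_i p_i|a_ib_i\rangle\langle a_ib_i|$ with $p_i>0$. Each $|a_ib_i\rangle$ lies in $\mathrm{range}(\sigma)$, because $\mathrm{range}(\sigma)=\mathrm{span}\{|a_ib_i\rangle\}$ for a sum of positive rank-one terms. Evaluating the minimand at $(a_1,b_1)$ therefore gives $0$, so $\mathcal{F}(\sigma)=0$.

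\textbf{Claim 2.} Here I must fix what ``violates the range criterion'' means, and I expect this to be a point of care. The classical range criterion says that a separable state has a range \emph{spanned} by product vectors. A state can violate this while its range still contains some product vectors, and then $\mathcal{F}=0$. I would therefore prove Claim 2 in the strong form: if $\mathrm{range}(\rho)$ contains no product vector, then $\mathcal{F}(\rho)>0$. This follows immediately from the preliminary remark, since attainment of the minimum rules out an infimum of $0$ that is not achieved.

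I would then check that the relevant families satisfy the strong form. For the Tiles UPB state, the range is the orthocomplement of a UPB, which contains no product vector by unextendibility. For the chessboard states, I would argue by a Segre-variety dimension count: $\mathbb{P}^2\times\mathbb{P}^2\subset\mathbb{P}^8$ has dimension $4$, so a generic $\mathbb{P}^3$ (rank $4$) misses it. I would also state explicitly that for the weak form one needs a refined gap, for instance the distance from $P_{\mathrm{range}}$ to the span of the product vectors in the range.

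\textbf{Claim 3.} This is the main obstacle. A sufficient strategy is to produce two states $\rho_1,\rho_2$ with identical spectra of $\rho$, $\rho^{T_A}$, $R(\rho)$ (both singular values and eigenvalues) and $R(\rho^{T_A})$, but with $\mathcal{F}(\rho_1)=0<\mathcal{F}(\rho_2)$. Local unitaries do not help, since all features and $\mathcal{F}$ are LU-invariant.

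As a first, weaker step, I would show that $\mathcal{F}$ is not a \emph{continuous} function of the moments. Take a rank-4 Tiles or chessboard state $\rho$ with $\mathcal{F}(\rho)>0$ and set $\rho_\epsilon=(1-\epsilon)\rho+\epsilon I/9$. Then $\rho_\epsilon$ has full rank, so $\mathcal{F}(\rho_\epsilon)=0$. However, every moment of $\rho_\epsilon$ converges to the corresponding moment of $\rho$, being polynomial, or a continuous spectral function, in the entries.

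For the full claim I would work inside the rank-4 stratum, where rank is fixed by the spectrum. Start from a rank-4 separable state $\sigma=\sum_{i=1}^4 p_i|a_ib_i\rangle\langle a_ib_i|$, which has $\mathcal{F}=0$. Then look for a perturbation inside the fibre of the finite-dimensional feature map, solving the matching equations numerically and then certifying exactly, that moves the range $\mathbb{P}^3$ off the Segre variety. The generic-miss dimension count suggests such a direction exists once the fibre has dimension larger than the LU orbit. Bounding that fibre dimension, or alternatively exhibiting an explicit algebraic pair, is the step I expect to be hardest.
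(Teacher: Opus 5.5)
\textbf{Claims 1 and 2.} These are fine. Your strong reading of Claim 2 is not pedantry; it is forced. The paper's proof equates ``no product vector in the range'' with violating the range criterion. But the Horodecki states were shown entangled in \cite{Horodecki1997} precisely via the range criterion, while their rank-7 range contains product vectors; the paper's own table lists $\mathcal{F}=0$ for them.

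\textbf{Claim 3 is the genuine gap.} What must be shown is that two states exist with equal values of \emph{all} of $I_k,T_k,\Sigma_k,G_k,SP_k,GP_k$ but different $\mathcal{F}$, and you never produce such a pair. Your discontinuity step cannot substitute for it. In $2\times 2$ exactly the same argument (a pure entangled state mixed with $\epsilon\,I/4$) shows that $\mathcal{F}$ is discontinuous in the moments, yet there $\mathcal{F}$ \emph{is} a function of them. Every two-dimensional subspace of $\mathbb{C}^2\otimes\mathbb{C}^2$ contains a product vector, because the determinant on a pencil of $2\times 2$ matrices is a binary quadratic form and so has a complex zero. Hence $\mathcal{F}=0$ whenever $I_2<1$, while for pure states $\mathcal{F}=p_{\min}=\tfrac12\bigl(1-\sqrt{(4T_3-1)/3}\bigr)$, where $p_{\min}$ is the smaller Schmidt weight. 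The same example shows the paper's argument is no proof either. It only asserts that states with identical spectra but different eigenvectors can have different $\mathcal{F}$. A pure product state and a pure entangled state are exactly such a pair, yet $T_3$ separates them. So Claim 3 is dimension-dependent, must use $3\times 3$ structure, and the missing pair is not supplied by the paper.

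\textbf{Your fibre plan.} It is the right kind of idea, but as stated it rests on a false premise and omits the decisive step.
\begin{enumerate}
\item The features are not LU-invariant. Under $U\otimes V$ one has $R(\rho)\mapsto(U\otimes\bar U)\,R(\rho)\,(V\otimes\bar V)^{\mathsf T}$, so the eigenvalues of $R$ are preserved in general only for $V=\bar U$. For example, $G_1=1$ for $|00\rangle\langle 00|$ but $G_1=0$ for $|01\rangle\langle 01|$. The eigenvalues of $R(\rho^{T_A})=P\,R(\rho)$ are preserved only if, in addition, $U$ is real. The relevant symmetry is therefore the $3$-dimensional $O\otimes O$, not the LU group.
\item Fibre dimension alone proves nothing. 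In the $55$-dimensional rank-4 stratum, $\{\mathcal{F}=0\}$ is the preimage of the Chow hypersurface of $\mathbb{P}^2\times\mathbb{P}^2\subset\mathbb{P}^8$, which has real codimension two. A fibre through one of its points may simply lie inside it.
\item What you need is transversality. Take a rank-4 state $\rho_0$ at a smooth point of $\{\mathcal{F}=0\}$ (for instance, range meeting the Segre variety in exactly one point, transversally), where the feature Jacobian has locally maximal rank. Show that the kernel of that Jacobian is not contained in the tangent space of $\{\mathcal{F}=0\}$. The constant-rank theorem then gives states with the same features as $\rho_0$ and $\mathcal{F}>0$, arbitrarily close to $\rho_0$. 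Alternatively, exhibit an exact algebraic pair.
\end{enumerate}
Neither state needs to be separable. A separable base point is in fact a poor choice: its range meets the Segre variety in at least four points, where $\{\mathcal{F}=0\}$ has several local branches that must all be left at once.

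The size of the fibres also depends on which orders $k$ are allowed. A crude count gives at most $36$ independent feature values on the rank-4 stratum for $k\le 9$. If all $k$ are allowed, it gives up to $54$ values against $52$ dimensions modulo $O\otimes O$, so the count does not even guarantee fibres larger than the symmetry orbits.

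Separately, your generic Segre count does not cover the chessboard states, a special lower-dimensional family that would have to be checked directly; Claim 2 itself does not need this.
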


\begin{proof}
(1) If $\sigma = \sum_m p_m\, |\alpha_m\rangle\langle\alpha_m| \otimes |\beta_m\rangle\langle\beta_m|$, then each $|\alpha_m\rangle|\beta_m\rangle$ lies in range($\sigma$), so $\mathcal{F}(\sigma) = 0$.
(2) By definition: $\mathcal{F} > 0$ means no product vector fits in range($\rho$), which is equivalent to violation of the range criterion~\cite{Horodecki1997}.
(3) By construction: spectral moments depend only on eigenvalues of various matrix restructurings. Two states with identical spectra but non-unitarily-equivalent eigenvectors can have different $\mathcal{F}$ values---one zero (separable) and one positive (entangled). Hence $\mathcal{F}$ cannot be a function of spectral moments alone.
\end{proof}

Numerical verification confirms the theory:
\begin{center}
\begin{tabular}{@{}lccc@{}}
\toprule
\textbf{State family} & $\mathcal{F}$ & \textbf{Rank} & \textbf{Detection channel} \\
\midrule
Random separable ($n=200$) & $0$ (all) & 6--9 & --- \\
Separable rank-4 ($n=100$) & $0$ (all) & 4 & --- \\
Horodecki ($a \in [0.1, 0.9]$) & $0$ & 7 & B, C, D, E (spectral) \\
Tiles UPB (pure) & $0.028$ & 4 & C (CCNR), F (range) \\
Tiles + noise ($\epsilon > 0$) & $0$ & 9 & C (CCNR) \\
Chessboard (integer, $n=120$) & $>5 \times 10^{-5}$ & 4 & \textbf{F only} \\
Chessboard (near-boundary, $n=50$) & $>2 \times 10^{-6}$ & 4 & \textbf{F only} \\
\bottomrule
\end{tabular}
\end{center}

At the zero-false-positive threshold ($\max_{\text{SEP}} \mathcal{F} = 0$), the product-vector gap detects \textbf{170/170 (100\%)} of chessboard states tested, including those near the separability boundary that evade all spectral channels.

\begin{corollary}[Completeness with range channel]
\label{cor:completeness}
For $3\times 3$ systems, spectral channels~(A)--(E) combined with the product-vector gap (channel~F) detect all known bound entangled families with 100\% recall and zero false positives:
\begin{itemize}
\item Horodecki: detected by spectral channels B, C, D, E ($\mathcal{F} = 0$);
\item Tiles UPB: detected by channel~C (CCNR) when pure; by channels~C, F when rank-deficient;
\item Chessboard: detected exclusively by channel~F ($\mathcal{F} > 0$ for all parameters with $mn \neq ab$).
\end{itemize}
\end{corollary}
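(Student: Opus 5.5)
The plan is to prove the corollary as a union-of-channels argument. Zero false positives and full recall are handled separately, and recall is checked family by family. A state is declared BE if it fires \emph{any} channel at that channel's own zero-FP threshold. So zero false positives follows once each channel is separately zero-FP on separable states. For channel~F this is exact: Theorem~\ref{thm:range_channel}(1) gives $\mathcal{F}(\sigma)=0$ for every separable $\sigma$, so the threshold $\mathcal{F}>0$ admits no separable state. For the spectral channels (A)--(E), zero FP is built into the cross-validated protocol (threshold equal to the maximal separable score), in the sense of Theorem~\ref{thm:witness_fusion}. I will state explicitly that this part is empirical, relative to the certified separable dataset. The disjunction of zero-FP tests is again zero-FP.

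For recall I would split into the three families. \emph{Horodecki:} these states have full range structure with $\mathcal{F}=0$, so detection must come from the spectral channels. I would cite Proposition~\ref{prop:complementarity}, where the combined classifier reaches $100\%$ on Horodecki. I would also supply an analytic backup: Remark~\ref{rem:horodecki_contrast} gives $R_-\neq0$ and hence $\delta G_3\neq0$, which is a certified witness by Corollary~\ref{cor:ent_witness}, since Horodecki states are real. The remaining point is to check that $\mathrm{Tr}[R_-^3]\neq0$ for all $a\in(0,1)$. I would do this by expressing it as a rational function of $a$ and showing it has no roots in the interval. \emph{Tiles:} in the pure (rank-4) case, $\|R\|_1>1$ is the known CCNR detection of the Tiles UPB state, giving channel~C. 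The range criterion also applies, because the range is orthogonal to a UPB and therefore contains no product vector, so $\mathcal{F}>0$.

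The core step, and the one I expect to be the main obstacle, is the chessboard claim that $\mathcal{F}>0$ whenever $mn\neq ab$. Using the matrices $V_1,\dots,V_4$ from the proof of Theorem~\ref{thm:chess_degeneracy}, a product vector in the range corresponds to coefficients $c\neq0$ such that $X(c)=\sum_m c_mV_m$ has rank one. I would impose the vanishing of all $2\times2$ minors of $X(c)$. The checkerboard parity splits $X$ into an even part $(c_1,c_3)$ and an odd part $(c_2,c_4)$ on complementary supports. The mixed minors then force either the even part or the odd part to vanish, or else they give a constraint proportional to $mn-ab$. Within each pure-parity case, the minors $m n$, $m^2+n^2$ and the like force the remaining coefficients to zero. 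The difficulty is organizing this elimination cleanly, for instance by a Gröbner-basis or resultant computation in $(c_1,\dots,c_4)$ with $s=ac/n$ and $t=ad/m$. The goal is to show the variety of rank-one solutions is $\{0\}$ exactly when $mn\neq ab$. I would try the parity case split by hand first and fall back on symbolic elimination if that stalls. Since the set of normalized product vectors is compact and $\mathcal{F}$ is continuous, the absence of product vectors gives a strictly positive minimum, so $\mathcal{F}>0$.

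Finally, I would note the scope of the statement. Noisy chessboard and noisy Tiles states are full rank, so $\mathcal{F}=0$ for them, and they are covered only by the empirical spectral results (Table~\ref{tab:be_per_source}). The corollary's $100\%$ claim should therefore be read as referring to the noiseless families, with the noisy variants inheriting the spectral recall figures.
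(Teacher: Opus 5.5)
Your union-of-channels structure follows the same logic the paper relies on, but you reach the range-channel claims by a genuinely different route. The paper gives no separate proof. It reads the corollary off Theorem~\ref{thm:range_channel}(1), which gives exact zero false positives for $\mathcal{F}$. It then adds the computational Proposition~\ref{prop:complementarity} for spectral recall on Horodecki and Tiles. Finally it uses a numerical table: $\mathcal{F}=0.028$ for the noiseless Tiles state, and $\mathcal{F}>2\times10^{-6}$ on 170 sampled chessboard states. You replace the numerical evidence with proofs. Unextendibility of the UPB excludes product vectors from $\mathrm{range}(\rho_{\mathrm{Tiles}})$. A rank-one analysis of $X(c)=\sum_m c_mV_m$ excludes them from the chessboard range. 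Compactness then gives $\mathcal{F}>0$. This covers the whole parameter family rather than a sample, which matters most near $mn/ab\to1$, where the paper's gaps are $\sim10^{-6}$ and hard to tell from optimiser error.

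The elimination closes, but it needs two adjustments.
\begin{itemize}
\item It requires $acd\neq0$, plus $mn\neq0$ so that $s,t$ exist. For $d=0$ one has $t=0$, and the range sits in $\mathbb{C}^2\otimes\mathbb{C}^3$. Every subspace there of dimension $>(2-1)(3-1)=2$ contains a product vector, so $\mathcal{F}=0$ even though $mn\neq ab$. The same happens for $a=0$ or $c=0$.
\item The useful case split is not by parity. We have $X_{22}=0$, while $X_{02}=c_1s$, $X_{12}=c_2c$, $X_{20}=c_3t$ and $X_{21}=c_4d$. The four minors through the $(2,2)$ entry therefore give $c_1c_3=c_1c_4=c_2c_3=c_2c_4=0$. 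So the support is $\{V_1,V_2\}$ or $\{V_3,V_4\}$, and each of these mixes parities.
\end{itemize}
On $\{V_1,V_2\}$ the remaining minors give $ac(c_2^2-c_1^2)=0$ and $c_1^2mn-c_2^2ab=0$, hence $c_1^2(mn-ab)=0$. On $\{V_3,V_4\}$ they give $ad(c_3^2-c_4^2)=0$ and $c_3^2(ab-mn)=0$. No Gr\"obner computation is needed.

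You should drop the Horodecki ``analytic backup''. Corollary~\ref{cor:ent_witness} is unsound, because Proposition~\ref{prop:sep_R_minus} assumes that a real separable state decomposes into \emph{real} product vectors. A counterexample is
$\tfrac14(\mathbb{I}+\sigma_y\otimes\sigma_y)=\tfrac12(|y_+y_+\rangle\langle y_+y_+|+|y_-y_-\rangle\langle y_-y_-|)=\tfrac12(|\Psi^+\rangle\langle\Psi^+|+|\Phi^-\rangle\langle\Phi^-|)$.
This state is real and separable. It has $T_{yy}=1$, which no mixture of real product states can produce. Yet its realignment has eigenvalue $-\tfrac12$ on $(|01\rangle-|10\rangle)/\sqrt2$. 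So $R_-\neq0$ and $|\delta G_3|=2|\mathrm{Tr}[R_-^3]|=\tfrac14$, and embedding it in $3\times3$ changes nothing. Hence $\delta G_3\neq0$ only certifies ``not a mixture of real product states''. For an analytic certificate on Horodecki, use CCNR instead: $\|R(\rho_{\mathrm{Hor}}(a))\|_1>1$ for all $a\in(0,1)$ (Chen--Wu), which the paper lists among the family's properties. Otherwise your Horodecki recall rests on Proposition~\ref{prop:complementarity}, exactly as the paper's does.

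Your closing scope remark is correct and sharper than the statement itself. The noisy chessboard and Tiles states are full rank, so channel F gives $\mathcal{F}=0$ there, and the $100\%$ claim holds only for the noiseless families. You are also right not to try to establish ``exclusively'', which Proposition~\ref{prop:complementarity} contradicts with its $84\%$ spectral recall on chessboard.
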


\begin{remark}[Complementarity between spectral and range channels]
The spectral and range channels are \emph{strictly complementary}: neither alone detects all BE families. Horodecki states have $\mathcal{F} = 0$ (their rank-7 range easily accommodates product vectors) but are detected by spectral channels. Chessboard states evade all spectral channels but have $\mathcal{F} > 0$ (their rank-4 range excludes product vectors). This complementarity reflects a fundamental distinction between \emph{eigenvalue-type} and \emph{eigenvector-type} entanglement signatures.
\end{remark}

\subsubsection{Augmented SWAP protocol for measuring the product-vector gap}
\label{subsubsec:augmented_swap}

We now show that the product-vector gap $\mathcal{F}$ can be measured within the controlled-SWAP framework, without full state tomography.

\begin{definition}[Augmented moment]
\label{def:augmented_moment}
For a state $\rho$ and a product state $|a\rangle|b\rangle$, the $k$-th augmented moment is
\begin{equation}
m_k(a,b) \;=\; \mathrm{Tr}\bigl[\rho^k\,|a,b\rangle\langle a,b|\bigr] \;=\; \sum_{i=1}^{r} \lambda_i^k\, p_i(a,b),
\label{eq:augmented_moment}
\end{equation}
where $\lambda_1,\ldots,\lambda_r$ are the nonzero eigenvalues of $\rho$ and $p_i(a,b) = |\langle e_i | a, b\rangle|^2$ are the squared overlaps with the corresponding eigenvectors.
\end{definition}

\begin{theorem}[Vandermonde protocol for the product-vector gap]
\label{thm:vandermonde}
Let $\rho$ have rank $r$ with distinct nonzero eigenvalues $\lambda_1 > \cdots > \lambda_r > 0$. Then:
\begin{enumerate}
\item The augmented moments $\{m_k(a,b)\}_{k=1}^{r}$ determine the overlaps $\{p_i(a,b)\}_{i=1}^{r}$ uniquely via inversion of the Vandermonde system
\begin{equation}
\underbrace{\begin{pmatrix} \lambda_1 & \cdots & \lambda_r \\ \lambda_1^2 & \cdots & \lambda_r^2 \\ \vdots & & \vdots \\ \lambda_1^r & \cdots & \lambda_r^r \end{pmatrix}}_{V}
\begin{pmatrix} p_1 \\ p_2 \\ \vdots \\ p_r \end{pmatrix}
= \begin{pmatrix} m_1 \\ m_2 \\ \vdots \\ m_r \end{pmatrix}.
\label{eq:vandermonde}
\end{equation}

\item The product-vector gap is
\begin{equation}
\mathcal{F}(a,b) \;=\; 1 - \sum_{i=1}^{r} p_i(a,b) \;=\; 1 - \mathbf{c}^T \mathbf{m}(a,b),
\label{eq:pvgap_linear}
\end{equation}
where $\mathbf{c} = V^{-T}\mathbf{1}$ are coefficients determined entirely by the eigenvalues of $\rho$.

\item $\mathcal{F} = \min_{|a\rangle,|b\rangle} \mathcal{F}(a,b)$, where the minimisation is over normalised product states.
\end{enumerate}
\end{theorem}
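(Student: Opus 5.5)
The plan is to use the spectral decomposition $\rho = \sum_{i=1}^r \lambda_i |e_i\rangle\langle e_i|$, with orthonormal eigenvectors $|e_i\rangle$. Then $\rho^k = \sum_i \lambda_i^k |e_i\rangle\langle e_i|$. Substituting this into Definition~\ref{def:augmented_moment} gives directly
\[
m_k(a,b) = \sum_i \lambda_i^k\, |\langle e_i|a,b\rangle|^2 = \sum_i \lambda_i^k\, p_i(a,b).
\]
This is exactly row $k$ of the system~\eqref{eq:vandermonde}, so the linear relation $V\mathbf{p} = \mathbf{m}$ holds for $k=1,\dots,r$.

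For uniqueness (item 1), factor the matrix as $V = W\,\mathrm{diag}(\lambda_1,\dots,\lambda_r)$, where $W_{ki} = \lambda_i^{k-1}$ is the standard Vandermonde matrix. Then
\[
\det V = \Bigl(\prod_i \lambda_i\Bigr)\prod_{i<j}(\lambda_j-\lambda_i).
\]
This is nonzero precisely because the eigenvalues are distinct and strictly positive, which are the two hypotheses of the theorem. So $\mathbf{p} = V^{-1}\mathbf{m}$ is unique.

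For item 2, note that $P_{\mathrm{range}} = \sum_{i=1}^r |e_i\rangle\langle e_i|$. Hence
\[
\langle a,b|(I-P_{\mathrm{range}})|a,b\rangle = 1 - \sum_i p_i(a,b) = 1 - \mathbf{1}^T V^{-1}\mathbf{m}.
\]
Writing $\mathbf{1}^T V^{-1} = (V^{-T}\mathbf{1})^T = \mathbf{c}^T$ gives~\eqref{eq:pvgap_linear}, with $\mathbf{c}$ depending only on the spectrum. Item 3 is then just Definition~\ref{def:pvgap} rewritten with this pointwise expression for the objective.

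I expect only one point to need care: the index bookkeeping between $\mathbf{c}^T\mathbf{m}$ and $\mathbf{1}^T\mathbf{p}$. Beyond that, I would remark on two things. First, the distinctness and positivity hypotheses are exactly what is needed for invertibility. Second, for degenerate spectra one would replace the $p_i$ by summed overlaps within each eigenspace, using only as many moments as there are distinct eigenvalues; this is not required for the statement as given.
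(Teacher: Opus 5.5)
Your proof is correct and follows the paper's argument step for step: $V$ is invertible because $\det V=\prod_i\lambda_i\prod_{i<j}(\lambda_j-\lambda_i)\neq 0$, then $\sum_i p_i=\mathbf{1}^T V^{-1}\mathbf{m}=(V^{-T}\mathbf{1})^T\mathbf{m}$, and item 3 holds by definition. You add two small things: you derive $m_k=\sum_i\lambda_i^k p_i$ from the spectral decomposition (the paper builds this into the definition of the augmented moment), and your determinant uses the standard sign convention, which differs from the paper's $\prod_{i<j}(\lambda_i-\lambda_j)$ only by an irrelevant overall sign.
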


\begin{proof}
(1)~The Vandermonde matrix $V_{ki} = \lambda_i^k$ is invertible when eigenvalues are distinct, since $\det V = \prod_{i<j}(\lambda_i - \lambda_j) \prod_i \lambda_i \neq 0$.
(2)~$\mathcal{F}(a,b) = \langle a,b|(I - P_{\mathrm{range}})|a,b\rangle = 1 - \sum_i p_i$. By~\eqref{eq:vandermonde}, $\sum_i p_i = \mathbf{1}^T V^{-1} \mathbf{m} = (V^{-T}\mathbf{1})^T \mathbf{m} = \mathbf{c}^T \mathbf{m}$.
(3)~By definition.
\end{proof}

\paragraph{Experimental implementation and cost.}
Each augmented moment $m_k(a,b)$ is measured by a $(k{+}1)$-body controlled-permutation circuit identical to the standard SWAP test, with one register prepared in $|a\rangle|b\rangle$ instead of a copy of $\rho$. Stage~1 (eigenvalue extraction via standard moments, already available from spectral screening) determines the Vandermonde coefficients $\mathbf{c}$. Stage~2 scans product states $|a(\boldsymbol{\theta})\rangle|b(\boldsymbol{\phi})\rangle$ via gradient-based optimisation, computing $\mathcal{F}(a,b) = 1 - \mathbf{c}^T\mathbf{m}$ classically; if $\min \mathcal{F} > 0$ after convergence, the state is certified entangled. For $3\times 3$ rank-4 states, this requires ${\sim}400$ additional circuit configurations (${\sim}5\times$ the cost of full $9 \times 9$ tomography), triggered only for the small fraction of states unresolved by spectral screening.

\paragraph{Conditioning caveat.}
The Vandermonde system~\eqref{eq:vandermonde} becomes ill-conditioned when eigenvalues are nearly degenerate: well-separated chessboard states give $\kappa(V) \approx 4 \times 10^3$ (stable), while near-boundary states ($mn/ab \to 1$) give $\kappa(V) \sim 10^{15}$--$10^{18}$. In exact arithmetic the protocol remains correct (verified numerically for all test cases including $mn/ab = 1.001$), but finite shot noise is amplified by $\kappa(V)$, consistent with the fundamental difficulty of separability detection near the PPT boundary. Mitigations include regularised inversion, overdetermined least-squares with higher moments, and adaptive product-state probe selection. Full numerical verification details are provided in the supplementary code repository.

\subsubsection{Comparison of detection criteria}
\label{subsubsec:comparison}

Table~\ref{tab:criterion_comparison} summarises the detection power of each criterion and channel combination across the known $3\times 3$ bound entangled families.

\begin{table}[htb!]
\centering
\caption{Detection rates (zero-false-positive threshold) for $3\times 3$ bound entangled states by criterion and channel. ``ML (A--E)'' denotes the multi-channel spectral classifier. ``$+$ Range (F)'' adds the product-vector gap. All separable states (1500 random, 100 rank-4) are correctly classified (100\% specificity) by every criterion.}
\label{tab:criterion_comparison}
\begin{tabular}{@{}lcccc@{}}
\toprule
\textbf{Criterion} & \textbf{Horodecki} & \textbf{Tiles} & \textbf{Chessboard} & \textbf{Overall} \\
\midrule
CCNR ($\|R\|_1 > 1$) & 28\% & 91\% & 4\% & ${\sim}50$\% \\
$\delta G_k$ witness (odd $k$) & 100\% & 0\% & 0\% & ${\sim}56$\% \\
PT spectrum (channel B) & 98\% & 0\% & 0\% & ${\sim}54$\% \\
Realign.\ EVs (channel D) & 90\% & 100\% & 37\% & ${\sim}74$\% \\
\textbf{ML spectral (A--E)} & \textbf{100\%} & \textbf{100\%} & \textbf{84\%} & \textbf{96\%} \\
\textbf{ML $+$ Range (A--F)} & \textbf{100\%} & \textbf{100\%} & \textbf{100\%} & \textbf{100\%} \\
\bottomrule
\end{tabular}
\end{table}

The table reveals a clear hierarchy: individual criteria each have blind spots, but combining spectral channels via machine learning doubles the overall detection rate relative to the best single criterion (CCNR). Adding the range channel (product-vector gap) closes the remaining gap, achieving universal detection of all known $3 \times 3$ bound entangled families. The key improvement over CCNR is not merely quantitative but structural: the multi-channel approach extracts \emph{all} spectral information from the realignment matrix (moments of all orders, both singular values and eigenvalues, cross-comparisons with the partial transpose), whereas CCNR uses only the first singular-value moment $\Sigma_1 = \|R\|_1$.

\paragraph{Two-stage detection protocol.}
We propose a practical two-stage protocol:
\begin{enumerate}
\item \textbf{Spectral screening} (${\sim}10$ SWAP-test configurations): Extract moments, apply ML classifier. Detects 96\% of BE states. If detected $\to$ DONE.
\item \textbf{Range analysis} (${\sim}400$ augmented configurations, only for unresolved rank-deficient PPT states): Measure product-vector gap via Vandermonde protocol. Detects remaining 4\%.
\end{enumerate}
Stage~2 is triggered only when Stage~1 flags a state as rank-deficient and PPT but spectrally ambiguous---a rare occurrence in practice. The combined protocol achieves 100\% recall with zero false positives across all known $3 \times 3$ bound entangled families, while maintaining the measurement efficiency of the SWAP-test infrastructure.

\subsubsection{Bound entanglement detection on IBM Fez}
\label{subsubsec:chess_hardware}

We tested bound entanglement detection on IBM Fez across five BE states from two structurally distinct families---three Horodecki states with $a \in \{0.50, 0.70, 0.90\}$ and two chessboard states---together with three calibration states (product $|00\rangle\langle 00|$, classically correlated $\frac{1}{3}\sum_{a}|aa\rangle\langle aa|$, and misaligned product). Each qutrit was encoded in two physical qubits. Two measurement campaigns were executed:
\begin{itemize}
\item \textbf{$\Sigma_1$/$G_1$ experiment} (298 circuits): $\Sigma_1$ via Pauli decomposition of the Hermitianised realignment unitary $W_H = (W + W^\dagger)/2$, measured in qubit-wise commuting (QWC) bases on 4~qubits; $G_1$ via SWAP test with $|\Phi^+\rangle$ on 9~qubits. Circuit depths: 3--89 (Pauli) and 191--305 (SWAP).
\item \textbf{$\Sigma_2$/$G_2$ experiment} (386 circuits): both via two-copy SWAP tests on 9~qubits. Circuit depths: 93--333.
\end{itemize}
All circuits used $4 \times 10^3$ shots and were multiplexed onto disjoint qubit groups (up to 33 groups of 4~qubits and 15 groups of 9~qubits on the 156-qubit processor), with transpilation at Qiskit optimization level~3 and approximation degree~0.95.

\paragraph{Per-circuit maximum likelihood estimation.}
The three circuit architectures exhibit fundamentally different noise characteristics, requiring architecture-specific fidelity parameters. We model each raw aggregate feature as
\begin{equation}
X^{\mathrm{meas}} = g_X \cdot X^{\mathrm{true}}, \quad X \in \{\Sigma_1, G_1, \Sigma_2, G_2\},
\end{equation}
with three independent fidelity parameters: $g_{\Sigma_1}$ for 4-qubit Pauli circuits, $f_{G_1}$ for single-copy 9-qubit SWAP circuits, and $f_{\Sigma_2,G_2}$ for two-copy 9-qubit SWAP circuits (shared, since $\Sigma_2$ and $G_2$ use the same architecture). The joint negative log-likelihood
\begin{equation}
-\log\mathcal{L} = \sum_{\mathrm{cal}} \sum_X \frac{(X^{\mathrm{meas}} - g_X \cdot X^{\mathrm{thy}})^2}{2\sigma_X^2} + \sum_{\mathrm{test}} \sum_X \frac{(X^{\mathrm{meas}} - g_X \cdot X^{\mathrm{ML}})^2}{2\sigma_X^2}
\label{eq:percircuit_nll}
\end{equation}
is minimised over 3 fidelity parameters and 20 physical features ($\Sigma_1, G_1, \Sigma_2, G_2$ for each of 5 test states) subject to physical constraints: $\Sigma_k \geq G_k \geq 0$, $\Sigma_1 \leq 3$, $1/9 \leq \Sigma_2 \leq 1$.  The measurement uncertainties $\sigma_X$ are estimated from shot noise as $\sigma_{\Sigma_1} = \sigma_{\Sigma_2} = 0.015$, $\sigma_{G_1} = 0.025$, $\sigma_{G_2} = 0.020$. Calibration states with known theoretical values anchor the fidelity factors; error bars are obtained from the Cram\'er--Rao bound via numerical Hessian inversion. Optimisation is performed using SLSQP with multiple random restarts; convergence is verified by comparing results from theory-initialised and data-initialised starting points.

The fitted fidelities are:
\begin{align}
g_{\Sigma_1} &= 0.164 \pm 0.003 \quad \text{(4-qubit Pauli)}, \nonumber\\
f_{G_1} &= 0.594 \pm 0.011 \quad \text{(9-qubit SWAP)}, \nonumber\\
f_{\Sigma_2,G_2} &= 0.306 \pm 0.003 \quad \text{(9-qubit two-copy SWAP)}.
\end{align}
The order-of-magnitude difference between $g_{\Sigma_1}$ and $f_{G_1}$ reflects the fundamentally different circuit architectures: Pauli-basis rotation circuits (depth 3--89) are shallow but probe each eigenstate through multiple QWC measurement bases with accumulated reconstruction noise, while SWAP-test circuits (depth 191--305) are deeper but yield a direct overlap estimate from a single ancilla measurement.

\begin{table}[htb!]
\centering
\caption{Bound entanglement detection on IBM Fez via per-circuit maximum likelihood estimation. MLE-reconstructed spectral features with Cram\'er--Rao error bars, compared with theoretical values. All five BE states yield $D_1 > 0$ and $D_2 > 0$; detection significance is reported in units of $\sigma$.}
\label{tab:chess_hardware}
\begin{tabular}{@{}lccccccccc@{}}
\toprule
\textbf{State} & \textbf{Type} & $\Sigma_1^{\mathrm{ML}}$ & $G_1^{\mathrm{ML}}$ & $D_1^{\mathrm{ML}}$ & sig.\ & $D_2^{\mathrm{ML}}$ & sig.\ & $D_1^{\mathrm{thy}}$ & $D_2^{\mathrm{thy}}$ \\
\midrule
Horo ($a\!=\!0.50$) & BE & $2.17\!\pm\!0.07$ & $0.57\!\pm\!0.03$ & $+1.60$ & $19.7\sigma$ & $+0.064\!\pm\!0.014$ & $4.6\sigma$ & $0.052$ & $0.018$ \\
Horo ($a\!=\!0.70$) & BE & $2.16\!\pm\!0.07$ & $0.36\!\pm\!0.03$ & $+1.80$ & $22.7\sigma$ & $+0.095\!\pm\!0.014$ & $6.8\sigma$ & $0.024$ & $0.006$ \\
Horo ($a\!=\!0.90$) & BE & $2.05\!\pm\!0.07$ & $0.63\!\pm\!0.03$ & $+1.41$ & $18.3\sigma$ & $+0.013\!\pm\!0.014$ & $0.9\sigma$ & $0.007$ & $0.001$ \\
Chess C1 & BE & $1.13\!\pm\!0.11$ & $0.34\!\pm\!0.04$ & $+0.79$ & $6.9\sigma$ & $+0.093\!\pm\!0.023$ & $4.1\sigma$ & $0.836$ & $0.201$ \\
Chess C2 & BE & $0.92\!\pm\!0.11$ & $0.41\!\pm\!0.04$ & $+0.52$ & $4.4\sigma$ & $+0.195\!\pm\!0.021$ & $9.2\sigma$ & $0.292$ & $0.063$ \\
\bottomrule
\end{tabular}
\end{table}

\paragraph{Detection results.}
Table~\ref{tab:chess_hardware} summarises the MLE-reconstructed features.  All five bound entangled states are detected with $D_1 > 0$ and $D_2 > 0$. The second-order gap $D_2$ provides the most reliable detection channel: $\Sigma_2$ and $G_2$ share the same SWAP-test architecture, so depolarisation attenuates both by the common factor $f_{\Sigma_2,G_2}$ and their difference preserves the physical signal. Four of five states are detected at $\geq 4.1\sigma$ significance; the weakest signal is Horo($a\!=\!0.90$) at $0.9\sigma$, consistent with its near-vanishing theoretical gap $D_2^{\mathrm{thy}} = 0.0015$.

The first-order gap $D_1$ is positive for all states at $4.4$--$22.7\sigma$ significance, but the reconstructed $\Sigma_1$ values are systematically higher than theory (e.g., $\Sigma_1^{\mathrm{ML}} = 2.17$ versus $\Sigma_1^{\mathrm{thy}} = 1.00$ for Horo $a\!=\!0.50$).  This overestimation arises because the Pauli-circuit fidelity $g_{\Sigma_1} = 0.164$ provides almost no dynamic range: all raw $\Sigma_1$ measurements cluster near $0.15$--$0.36$ regardless of the true value, and the MLE inverts through a near-singular scaling factor.  By contrast, $G_1$ is well-resolved ($f_{G_1} = 0.594$), so the apparent $D_1$ significance is dominated by $\Sigma_1$ overestimation rather than genuine spectral gap detection.

The key result is therefore the $D_2$ channel: operating entirely within the SWAP-test architecture, it provides a self-consistent detection of bound entanglement without cross-architecture calibration artefacts.

\paragraph{Calibration state verification.}
The three calibration states (product, classically correlated, misaligned product) are reconstructed exactly at their theoretical values by the MLE procedure, confirming internal consistency.  The calibration states span a range of $G_1$ values from $0.50$ (misaligned) to $1.00$ (product and classical), providing leverage for the $f_{G_1}$ fit.  For $\Sigma_2$/$G_2$, the classical and misaligned states span $\Sigma_2 \in [0.33, 1.00]$ and $G_2 \in [0.25, 1.00]$, sufficient to constrain $f_{\Sigma_2,G_2}$.  The product state $\Sigma_2^{\mathrm{meas}} = -0.085$ (negative, unphysical) is correctly handled by the MLE: the fidelity factor absorbs the sign flip without biasing the test-state estimates.

\paragraph{Circuit multiplexing.}
Since IBM Heron processors provide 156~physical qubits and each SWAP test uses at most~9, multiple independent tests execute simultaneously on disjoint qubit regions within a single circuit submission. For $\Sigma_1$ (4-qubit Pauli circuits), up to 33 circuits are packed per submission; for $G_1$ and $\Sigma_2$/$G_2$ (9-qubit SWAP tests), up to 15 are packed. This multiplexing reduces wall-clock QPU time by an order of magnitude.

\subsection{Bound entangled states for \texorpdfstring{$3\times 3$}{3x3} systems}
\label{subsec:bound_states_3x3}

This section provides explicit definitions of the bound entangled state families used in training the machine learning classifiers. All states below are PPT (positive partial transpose) but entangled, making them undetectable by the Peres--Horodecki criterion.

\subsubsection{Horodecki state}

The first bound entangled state was discovered by P.~Horodecki~\cite{Horodecki1997}. For a parameter $a \in (0, 1)$, the $3\times 3$ Horodecki state in the computational basis $\{|ij\rangle\}_{i,j=0}^{2}$ is:
\begin{equation}
\rho_{\mathrm{Hor}}(a) = \frac{1}{8a+1}\begin{pmatrix}
a & \cdot & \cdot & \cdot & a & \cdot & \cdot & \cdot & a \\
\cdot & a & \cdot & \cdot & \cdot & \cdot & \cdot & \cdot & \cdot \\
\cdot & \cdot & a & \cdot & \cdot & \cdot & \cdot & \cdot & \cdot \\
\cdot & \cdot & \cdot & a & \cdot & \cdot & \cdot & \cdot & \cdot \\
a & \cdot & \cdot & \cdot & a & \cdot & \cdot & \cdot & a \\
\cdot & \cdot & \cdot & \cdot & \cdot & a & \cdot & \cdot & \cdot \\
\cdot & \cdot & \cdot & \cdot & \cdot & \cdot & c & \cdot & b \\
\cdot & \cdot & \cdot & \cdot & \cdot & \cdot & \cdot & a & \cdot \\
a & \cdot & \cdot & \cdot & a & \cdot & b & \cdot & c
\end{pmatrix},
\label{eq:horodecki_state_si}
\end{equation}
where $b = \tfrac{1}{2}\sqrt{1-a^2}$, $c = \tfrac{1+a}{2}$, and dots denote zeros. This follows the parametrization of Ref.~\cite{Horodecki1997}, verified against the QETLAB library implementation~\cite{QETLAB}.

\textbf{Properties:}
\begin{itemize}
\item Positive semidefinite for all $a \in (0,1)$
\item PPT for all $a \in (0,1)$
\item Entangled (bound entangled) for all $a \in (0,1)$
\item Rank 7 for generic $a$; approaches rank 1 as $a \to 1$
\item Detected by the realignment criterion for all $a \in (0,1)$: $\|R(\rho)\|_1 > 1$
\end{itemize}


\subsubsection{Tiles (UPB) state}

Bound entangled states can be constructed from \textbf{unextendible product bases} (UPB)~\cite{Bennett1999upb}. The ``Tiles'' UPB for $3\times 3$ consists of 5 orthogonal product states:
\begin{align}
|\psi_1\rangle &= |0\rangle \otimes \tfrac{|0\rangle - |1\rangle}{\sqrt{2}}, &
|\psi_2\rangle &= |2\rangle \otimes \tfrac{|1\rangle - |2\rangle}{\sqrt{2}}, \nonumber\\
|\psi_3\rangle &= \tfrac{|0\rangle - |1\rangle}{\sqrt{2}} \otimes |2\rangle, &
|\psi_4\rangle &= \tfrac{|1\rangle - |2\rangle}{\sqrt{2}} \otimes |0\rangle, \nonumber\\
|\psi_5\rangle &= \tfrac{|0\rangle + |1\rangle + |2\rangle}{\sqrt{3}} \otimes \tfrac{|0\rangle + |1\rangle + |2\rangle}{\sqrt{3}}.
\end{align}

The bound entangled state is constructed as the normalized projection onto the orthogonal complement:
\begin{equation}
\rho_{\mathrm{Tiles}} = \frac{1}{4}\left(\mathbb{I}_9 - \sum_{k=1}^{5}|\psi_k\rangle\langle\psi_k|\right)
\label{eq:tiles_state}
\end{equation}

\textbf{Properties:}
\begin{itemize}
\item PPT by construction (complement of product states)
\item Entangled because the UPB cannot be extended to a complete product basis
\item Rank 4 (dimension 9 minus 5 UPB states)
\item Low purity: $I_2 \approx 0.25$
\end{itemize}

\subsubsection{Chessboard states}

The chessboard construction~\cite{Bruss2000chess} creates bound entangled states as rank-4 density matrices whose non-zero entries form a checkerboard pattern in the computational basis. Given six real parameters $a, b, c, d, m, n$ with $cmn \neq abc$ (entanglement condition), define $s = ac/n$ and $t = ad/m$ and the four range vectors
\begin{align}
|v_1\rangle &= m|00\rangle + s|02\rangle + n|11\rangle, \nonumber\\
|v_2\rangle &= a|01\rangle + b|10\rangle + c|12\rangle, \nonumber\\
|v_3\rangle &= n|00\rangle - m|11\rangle + t|20\rangle, \nonumber\\
|v_4\rangle &= b|01\rangle - a|10\rangle + d|21\rangle.
\label{eq:chess_vectors_si}
\end{align}
The chessboard state is
\begin{equation}
\rho_{\mathrm{Chess}} = \frac{\sum_{k=1}^{4} |v_k\rangle\langle v_k|}{\mathrm{Tr}\bigl[\sum_k |v_k\rangle\langle v_k|\bigr]}.
\label{eq:chess_state_si}
\end{equation}
Vectors $|v_1\rangle$, $|v_3\rangle$ have support on even-parity basis states ($|00\rangle$, $|02\rangle$, $|11\rangle$, $|20\rangle$) and $|v_2\rangle$, $|v_4\rangle$ on odd-parity states ($|01\rangle$, $|10\rangle$, $|12\rangle$, $|21\rangle$), producing the characteristic checkerboard pattern of zeros. For real parameters the state is PPT; when additionally $cmn \neq abc$, it is entangled and hence bound entangled.

\textbf{Properties:}
\begin{itemize}
\item Rank 4 with checkerboard pattern of zeros
\item PPT for real parameters; entangled when $cmn \neq abc$
\item Structurally distinct from Horodecki and Tiles families
\item All eigenvalue moments of $R(\rho)$ and $R(\rho^{T_A})$ are exactly equal ($\delta G_k = 0$ for all $k$), eliminating the last spectral comparison signal (Section~\ref{sec:chess_anatomy})
\item Correlation ellipsoid collapsed to 5 of 8 SU(3) channels; CCNR criterion fails ($\|R(\rho)\|_1 < 1$)
\end{itemize}

The training dataset uses two complementary parameter regimes. \textbf{Integer-grid states} (300 samples): all tuples $(a,b,c,d) \in \{1,2,3,4\}^4$, $(m,n) \in \{1,2,3\}^2$ satisfying the BE condition $mn \neq ab$, randomly subsampled. \textbf{Near-boundary states} (200 samples): continuous parameters $a,b,c,d \sim \mathrm{Uniform}(0.3, 5)$, $m \sim \mathrm{Uniform}(0.3, 3)$, with $n = ab\cdot r / m$ where $r = 1 + \delta$, $\delta \sim \mathrm{Uniform}(0.02, 0.5)$ with random sign. These states have $mn/ab$ close to unity and are the hardest cases: their realignment nuclear norm $\|R(\rho)\|_1 < 1$ (CCNR criterion fails) and $R(\rho) \approx R(\rho^{T_A})$ in all moments, providing minimal signal in most individual features. An additional 300 noisy variants ($\epsilon \in [0.01, 0.12]$ white noise) are generated from all 500 chessboard states.

\subsubsection{Dataset generation}

The dataset is generated from analytically certified families without PPT verification (all families are provably PPT-entangled by construction):
\begin{enumerate}
\item \textbf{Horodecki:} 500 states with $a \in [0.01,0.99]$ on a uniform grid.
\item \textbf{Noisy Horodecki:} 500 states with white noise $\epsilon \sim \mathrm{Uniform}(0.01, 0.12)$.
\item \textbf{Tiles UPB:} 100 states with noise $\epsilon \in [0, 0.15]$.
\item \textbf{Chessboard (integer grid):} 300 states from all integer-parameter tuples satisfying $mn \neq ab$.
\item \textbf{Chessboard (near-boundary):} 200 states with continuous parameters near $mn/ab \approx 1$.
\item \textbf{Noisy chessboard:} 300 noisy variants of the above chessboard states.
\item \textbf{Separable:} 2000 random pure-state product mixtures $\rho = \sum_k p_k \rho_A^{(k)} \otimes \rho_B^{(k)}$.
\end{enumerate}

\subsubsection{Extended dataset and Lasso logistic regression}
\label{subsubsec:extended_dataset}

For the analysis presented in the main text (Fig.~3), we train classifiers on a dataset of 13{,}600 certified $3 \times 3$ states with eight features per state: six realignment spectral moments ($\Sigma_1$, $G_1$, $D_1$, $\Sigma_2$, $G_2$, $D_2$) and two chirality corrections ($C_3 = \mu_3 - I_3$, $C_4 = \mu_4 - I_4$).  The chirality features are computed from the density matrices via partial transpose eigenvalues (Section~S1).  The dataset comprises 6{,}800 BE states from seven families and 6{,}800 separable states:
\begin{enumerate}
\item \textbf{Horodecki:} 2{,}000 states with $a \in [0.01,0.99]$.
\item \textbf{Chessboard:} 2{,}000 states from integer parameter grids with $mn \neq ab$.
\item \textbf{Tiles UPB:} 100 states with depolarizing noise $\epsilon \in [0, 0.05]$.
\item \textbf{MN Horodecki:} 1{,}000 marginal-noise variants ($t \in [0.01, 0.20]$).
\item \textbf{MN Chessboard:} 1{,}000 marginal-noise variants.
\item \textbf{MN Tiles:} 200 marginal-noise variants.
\item \textbf{Mixed BE:} 500 depolarized Horodecki variants with $\varepsilon \in [0.005, 0.04]$, within the Carath\'{e}odory-certified BE range (Section~\ref{subsec:caratheodory}).
\end{enumerate}
The separable set comprises 6{,}800 random product-state mixtures.  All states are certified: BE states by analytical construction (PPT verified numerically) and independently validated by numerical closest-separable-state certification via Carath\'{e}odory decomposition (Section~\ref{subsec:caratheodory}); separable states by explicit product-state decomposition.

The previous extended dataset of 16{,}033 states (including CCNR-invisible marginal-noise construction and 1{,}000 random SDP-certified BE states) was used for the earlier ExtraTrees analysis (Table~\ref{tab:perf3x3_bound}).  The current dataset uses fewer states but includes the chirality features $C_3$ and $C_4$, which require access to the full density matrix for computation.  These chirality features prove decisive: the Random Forest achieves $99.9\%$ recall at zero false positives with only eight base features, surpassing the previous $99.6\%$ achieved with 83 realignment-only features.

The classifier uses polynomial features of degree~2 from eight base features ($\Sigma_1$, $G_1$, $D_1$, $\Sigma_2$, $G_2$, $D_2$, $C_3$, $C_4$), yielding 44 features after expansion. The two chirality corrections $C_3 = \mu_3 - I_3$ and $C_4 = \mu_4 - I_4$ are computed from the same density matrix but provide information orthogonal to the realignment spectrum: $C_3$ is identically zero for all states with real density matrices ($\rho = \rho^*$), including chessboard and Tiles states, but nonzero for Horodecki states whose construction involves $\sqrt{1-a^2}$ off-diagonal elements.  A Random Forest (500 trees, 5-fold stratified CV on 13{,}600 states: 6{,}800 BE, 6{,}800 SEP) achieves AUC~$= 1.000$ with $99.9\%$ recall at zero false positives ($P = 0.694$ threshold).  At the standard $P = 0.5$ threshold, the RF has $99.96\%$ recall (2 false negatives out of 5{,}440) and $0.06\%$ false positive rate (3 out of 5{,}440).  The interpretable Lasso logistic regression achieves AUC~$= 0.986$ (CV) / $0.989$ (test) with $93\%$ recall and $5.2\%$ false positive rate; 22 of 44 features have non-zero coefficients (Fig.~3a), with $G_2$, $D_1^2$, $\Sigma_1 \cdot \Sigma_2$, and $C_3$-dependent terms among the most discriminative.

\subsection{Numerical separability certification via Carath\'{e}odory decomposition}
\label{subsec:caratheodory}

To independently verify that the bound entangled states in our dataset are genuinely inseparable, we perform numerical closest-separable-state certification using the Carath\'{e}odory theorem.

\subsubsection{Method}

By the Carath\'{e}odory theorem, any separable state $\sigma$ in $d_A \times d_B$ dimensions can be written as a convex combination of at most $d^2 = (d_A d_B)^2$ product states:
\begin{equation}
\sigma = \sum_{k=1}^{K} p_k \, |a_k\rangle\langle a_k| \otimes |b_k\rangle\langle b_k|, \qquad p_k \geq 0,\;\sum_k p_k = 1.
\label{eq:caratheodory}
\end{equation}
We parametrise the decomposition with unconstrained real variables: normalised product states $|a_k\rangle = v_A^{(k)}/\|v_A^{(k)}\|$, $|b_k\rangle = v_B^{(k)}/\|v_B^{(k)}\|$ with $v$ unconstrained complex vectors, and weights $p_k = \mathrm{softmax}(w)_k$ with $w \in \mathbb{R}^K$ unconstrained. This ensures $\sigma$ is always a valid density matrix by construction, with no singularities or constraint violations. The closest separable state is obtained by minimising the Frobenius distance:
\begin{equation}
d_F(\rho) = \min_{\sigma \in \mathrm{SEP}} \|\rho - \sigma\|_F.
\end{equation}
We use the Adam optimiser with cosine learning rate annealing on GPU (PyTorch autograd), running a three-phase protocol: (i) broad search (50 random restarts, 2{,}000 steps, $\mathrm{lr} = 0.01$), (ii) local refinement from the best solution (50 restarts, 3{,}000 steps, $\mathrm{lr} = 0.003$), and (iii) final polishing with the largest $K$ (50 restarts, 5{,}000 steps, $\mathrm{lr} = 0.001$). Each $K$ value from $\{10, 20, 40, 81\}$ is tested.

\subsubsection{Results}

\begin{table}[h]
\centering
\caption{Frobenius distance $d_F(\rho)$ to the closest separable state found by Carath\'{e}odory decomposition. Separable states converge to $d_F \approx 0$; bound entangled states show a stable, $K$-independent distance floor.}
\label{tab:caratheodory}
\begin{tabular}{lccl}
\hline
State & $d_F$ & Rank & Verdict \\
\hline
Separable ($k = 2$--$20$)  & $< 10^{-4}$ & varies & SEP (correct) \\
Tiles UPB                   & 0.0435 & 4 & BE \\
Chess($2,3,1,2,1,3$)       & 0.0084 & 4 & BE \\
Chess($3,1,2,1,2,1$)       & 0.0038 & 4 & BE \\
Horodecki $a = 0.5$         & 0.0099 & 7 & BE \\
Horodecki $a = 0.9$         & 0.0028 & 7 & BE \\
\hline
\end{tabular}
\end{table}

Key observations:
\begin{itemize}
\item All randomly generated separable states (with $k = 2$ to $20$ product terms) converge to $d_F < 10^{-4}$, validating the algorithm.
\item All tested bound entangled states show a stable, nonzero distance floor that persists across all $K$ values from $K = 10$ to $K = 81$ and across multiple refinement phases. This $K$-independence confirms the gap is a genuine feature of the states, not an optimisation artefact.
\item The Tiles UPB state has the largest gap ($d_F = 0.0435$), followed by Horodecki $a = 0.5$ ($d_F = 0.0099$) and entangled chessboard states ($d_F = 0.004$--$0.008$).
\item The entanglement condition $mn \neq ab$ for chessboard states is correctly reflected: chessboard states with $mn = ab$ (which are separable by construction) converge to $d_F < 10^{-5}$, while those with $mn \neq ab$ show a clear gap.
\item Noisy Horodecki states $\rho_\varepsilon = (1-\varepsilon)\rho_{\mathrm{Hor}}(a) + \varepsilon\,\mathbb{I}/d$ transition from $d_F > 0$ (BE) to $d_F = 0$ (SEP) at noise levels consistent with the known separability boundary ($\varepsilon \approx 0.05$--$0.10$ depending on $a$).
\end{itemize}

These results provide independent numerical evidence that the bound entangled states in our training set are genuinely inseparable, complementing the analytical certification from construction.

\subsection{Conclusions for bound entanglement detection}

\begin{enumerate}
\item \textbf{Near-perfect recall at zero false-positive rate:} A Random Forest classifier trained on eight features ($\Sigma_1$, $G_1$, $D_1$, $\Sigma_2$, $G_2$, $D_2$, $C_3$, $C_4$) achieves $99.9\%$ bound-entangled recall at zero false positives (5-fold stratified CV on 13{,}600 Carath\'{e}odory-certified states: 6{,}800 BE from seven families, 6{,}800 SEP). At the $P = 0.5$ threshold, recall is $99.96\%$ with only 3 false positives out of 5{,}440 separable states. This surpasses both the previous ExtraTrees result ($99.6\%$ on 83 features) and the CCNR criterion ($40\%$ recall on this dataset, which includes marginal-noise variants invisible to CCNR). The key improvement comes from the chirality features $C_3$ and $C_4$, which provide a detection channel orthogonal to the realignment spectrum.

\item \textbf{Physical grounding:} All 83 base features have direct physical interpretations (Table~\ref{tab:bound_features}). Two initially proposed features---$I_2 - T_2$ and $\Sigma_2 - SP_2$---are identically zero for all quantum states (by Frobenius-norm invariance under partial transpose and reshuffling) and were excluded; their apparent informativeness in earlier tests was an artifact of floating-point noise.

\item \textbf{Non-Hermiticity matters:} The gap $D_k = \Sigma_k - G_k$ between singular-value and eigenvalue moments of the realignment matrix $R(\rho)$ is the most physically informative feature group (ExtraTrees importance $\approx 31\%$ for $R(\rho)$ moments, $\approx 30\%$ for $R(\rho^{T_A})$ moments). Bound entangled states have nearly Hermitian realignment matrices ($D_k \approx 0$) due to the PPT constraint; separable states exhibit substantial non-Hermiticity.

\item \textbf{Cross-structure differences:} The quantity $\delta G_2 = G_2 - GP_2$ (difference between eigenvalue moments of $R(\rho)$ and $R(\rho^{T_A})$) is non-trivially informative: unlike the Frobenius norm ($\Sigma_2 = SP_2$ for all states), eigenvalue moments of $R$ and $R(\rho^{T_A})$ differ for generic states, providing a structural signal not accessible from either matrix alone.

\item \textbf{Near-boundary hardness:} Chessboard states with $mn/ab$ close to unity are the hardest to classify: the CCNR criterion yields $\|R(\rho)\|_1 < 1$ (below the detection threshold) and $R(\rho) \approx R(\rho^{T_A})$ in all moments. Including 200 near-boundary states in training enables the ExtraTrees classifier to reach $99.8\%$ recall on this family.

\item \textbf{Nonlinear boundaries:} The genuine overlap between separable and bound entangled feature distributions means that no linear boundary in the $(D_4, \Sigma_2^2/\Sigma_4)$ plane achieves both perfect recall and low false positive rate. The RBF kernel resolves this by learning a nonlinear decision surface in the four-dimensional feature space.

\item \textbf{Chessboard as geometric extreme:} Among the known $3\times 3$ BE families, the chessboard is the hardest to detect geometrically. While singular-value moments $\Sigma_k = SP_k$ for all states (a universal identity), the chessboard's checkerboard parity additionally forces all eigenvalue moments to coincide ($\delta G_k = 0$ exactly for all $k$), unlike Horodecki states ($|\delta G_3| \sim 10^{-3}$). Combined with $D_k \approx 0$ (PPT), $\|R\|_1 < 1$ (CCNR failure), and a collapsed correlation ellipsoid ($H_\sigma = 1.34$ versus $1.86$ for separable, only 5 of 8 active SU(3) channels), every spectral comparison signal is eliminated, confining detection to rank-structure features alone (Section~\ref{sec:chess_anatomy}).
\end{enumerate}


\section{Hardware validation details}
\label{sec:hardware}

\subsection{Experimental configuration}

\begin{table}[htb!]
\centering
\caption{IBM Quantum processor specifications.}
\label{tab:hardware_specs}
\begin{tabular}{@{}lccc@{}}
\toprule
Property & Kingston & Torino & Fez \\
\midrule
Architecture & Heron & Heron & Heron \\
Qubits & 156 & 133 & 156 \\
Native gates & \multicolumn{3}{c}{CZ, $\sqrt{X}$, $R_Z$, $X$} \\
Median $T_1$ ($\mu$s) & 150 & 140 & 150 \\
Median $T_2$ ($\mu$s) & 100 & 95 & 100 \\
Median CZ error & 0.5\% & 0.6\% & 0.5\% \\
\bottomrule
\end{tabular}
\end{table}

All circuits were transpiled using Qiskit optimization level 3 (with approximation degree~0.95 for bound entanglement circuits). Negativity and chirality experiments (Kingston, Torino) used $10^5$ shots; bound entanglement detection on IBM Fez used $4 \times 10^3$ shots per circuit across 684 multiplexed circuits: 298 for $\Sigma_1$/$G_1$ (4-qubit Pauli + 9-qubit SWAP) and 386 for $\Sigma_2$/$G_2$ (9-qubit two-copy SWAP). Circuit depths ranged from 3 to 333 gates across the three architectures (Section~\ref{subsubsec:chess_hardware}). Readout errors were mitigated using M3.

\subsection{State preparation}

The parametrized two-qubit states $|\psi(\theta)\rangle = \cos(\theta/2)|00\rangle + \sin(\theta/2)|11\rangle$ were prepared using a standard entangling circuit: (i) apply $R_y(\theta)$ to qubit $A$, producing $\cos(\theta/2)|0\rangle + \sin(\theta/2)|1\rangle$; (ii) apply CNOT from $A$ to $B$, yielding the target state. Bell states correspond to $\theta = 90^\circ$ with appropriate single-qubit corrections.

For $2 \times 3$ experiments, each qutrit was encoded in two physical qubits using the computational subspace $\{|00\rangle, |01\rangle, |10\rangle\} \cong \{|0\rangle, |1\rangle, |2\rangle\}$, with the $|11\rangle$ state excluded by construction. The state $|\psi(\theta)\rangle = \cos(\theta/2)|00\rangle_A|00\rangle_B + \sin(\theta/2)|01\rangle_A|01\rangle_B$ was prepared analogously using $R_y(\theta)$ on the first qubit of subsystem $A$ followed by controlled operations to entangle with subsystem $B$.

\subsection{Statistical uncertainty estimation}

Statistical uncertainties on measured moments were estimated via bootstrap resampling. For each circuit configuration, the $N = 10^5$ measurement outcomes were resampled with replacement $B = 1{,}000$ times. Each bootstrap sample yields a set of moment estimates $\{\mu_k^{(b)}\}_{b=1}^B$, from which the standard error is computed as the standard deviation across resamples:
\begin{equation}
\sigma_{\mu_k} = \mathrm{std}\bigl(\{\mu_k^{(b)}\}_{b=1}^B\bigr).
\end{equation}
Propagating through the Newton--Girard reconstruction and maximum likelihood calibration, the resulting uncertainty on negativity is $\pm 0.005$ for $10^5$ shots, confirming that statistical shot noise is subdominant to systematic gate errors (which produce mean errors of 0.002--0.027 depending on processor and system dimension).

\subsection{Two-stage maximum likelihood calibration}

\textbf{Stage 1 (Calibration):} Using states with known $\theta$, fit degradation factors $(f_2, f_3, f_4)$ for each moment $\mu_k$ by minimizing:
\begin{equation}
\chi^2 = \sum_{i,X} \frac{(X_i^{\mathrm{meas}} - f_X \cdot X_i^{\mathrm{theo}}(\theta_i))^2}{\sigma_X^2}.
\end{equation}

\textbf{Stage 2 (Blind estimation):} For each state, fit $\theta$ using calibrated factors, then compute physical invariants:
\begin{equation}
X^{\mathrm{phys}} = X^{\mathrm{meas}} / f_X.
\end{equation}

\subsection{Calibration results}

\begin{table}[htb!]
\centering
\caption{Two-stage ML calibration parameters.}
\label{tab:calibration}
\begin{tabular}{@{}lccc@{}}
\toprule
System & $f_2$ & $f_3$ & $f_4$ \\
\midrule
Torino $2 \times 2$ & 0.729 & 0.612 & 0.456 \\
Torino $2 \times 3$ & 0.786 & 0.504 & 0.219 \\
\bottomrule
\end{tabular}
\end{table}

\subsection{Calibration sensitivity analysis}

The maximum likelihood calibration assumes that calibration states are prepared at their nominal angles $\theta_i$. To assess robustness to preparation imperfections, we performed a perturbation analysis: systematic offsets $\delta\theta$ were added to all calibration angles, and the entire calibration pipeline (fitting $f_k$ and then estimating negativity) was re-run.

\begin{table}[htb!]
\centering
\caption{Sensitivity of calibration to preparation angle offsets (Torino $2 \times 2$).}
\label{tab:calibration_sensitivity}
\begin{tabular}{@{}lccccc@{}}
\toprule
$\delta\theta$ & $f_2$ & $f_3$ & $f_4$ & Mean $\mathcal{N}$ error & $\Delta f_4 / f_4$ \\
\midrule
$-2^\circ$ & 0.729 & 0.614 & 0.459 & 0.013 & 0.7\% \\
$-1^\circ$ & 0.729 & 0.613 & 0.457 & 0.012 & 0.2\% \\
$0^\circ$ (nominal) & 0.729 & 0.612 & 0.456 & 0.012 & --- \\
$+1^\circ$ & 0.729 & 0.611 & 0.454 & 0.012 & 0.4\% \\
$+2^\circ$ & 0.729 & 0.610 & 0.452 & 0.014 & 0.9\% \\
\bottomrule
\end{tabular}
\end{table}

The results confirm that the calibration is robust: $\pm 2^\circ$ systematic errors in preparation angles change the fitted degradation factors by $<1\%$ and the mean negativity error by $<0.003$. The second-order moment $f_2$ is invariant to small angle perturbations because $\mu_2 = 1$ for all pure states regardless of $\theta$, providing a strong anchor. The fourth-order factor $f_4$ shows the largest sensitivity, but even $\delta\theta = \pm 2^\circ$ (well beyond typical gate calibration accuracy of $<0.5^\circ$) produces negligible degradation in the final negativity estimates.

\subsection{Negativity results}

\begin{table}[htb!]
\centering
\caption{Negativity measurements across all processors and system dimensions.}
\label{tab:negativity_all}
\begin{tabular}{@{}llccc@{}}
\toprule
Processor (system) & State & $\mathcal{N}_{\mathrm{theory}}$ & $\mathcal{N}_{\mathrm{ML}}$ & Error \\
\midrule
\multirow{5}{*}{Kingston ($2{\times}2$)}
& $|00\rangle$ & 0.000 & 0.000 & 0.000 \\
& $\theta = 30^\circ$ & 0.250 & 0.255 & 0.005 \\
& $\theta = 45^\circ$ & 0.354 & 0.352 & 0.002 \\
& $\theta = 60^\circ$ & 0.433 & 0.437 & 0.004 \\
& $\theta = 90^\circ$ & 0.500 & 0.500 & 0.000 \\
\cmidrule{2-5}
& \textbf{Mean error} & & & \textbf{0.002} \\
\midrule
\multirow{9}{*}{Torino ($2{\times}2$)}
& $|00\rangle$ & 0.000 & 0.000 & 0.000 \\
& $|\Phi^-\rangle$ & 0.500 & 0.500 & 0.000 \\
& $|\Phi^+\rangle$ & 0.500 & 0.495 & 0.005 \\
& $|\Psi^-\rangle$ & 0.500 & 0.500 & 0.000 \\
& $\theta = 15^\circ$ & 0.129 & 0.178 & 0.049 \\
& $\theta = 30^\circ$ & 0.250 & 0.239 & 0.011 \\
& $\theta = 45^\circ$ & 0.354 & 0.371 & 0.017 \\
& $\theta = 60^\circ$ & 0.433 & 0.423 & 0.010 \\
\cmidrule{2-5}
& \textbf{Mean error} & & & \textbf{0.012} \\
\midrule
\multirow{7}{*}{Torino ($2{\times}3$)}
& $\theta = 0^\circ$ & 0.000 & 0.000 & 0.000 \\
& $\theta = 15^\circ$ & 0.129 & 0.107 & 0.022 \\
& $\theta = 30^\circ$ & 0.250 & 0.189 & 0.061 \\
& $\theta = 45^\circ$ & 0.354 & 0.335 & 0.019 \\
& $\theta = 60^\circ$ & 0.433 & 0.405 & 0.028 \\
& $\theta = 90^\circ$ & 0.500 & 0.469 & 0.031 \\
\cmidrule{2-5}
& \textbf{Mean error} & & & \textbf{0.027} \\
\bottomrule
\end{tabular}
\end{table}

\subsection{Chirality correction results}

\begin{table}[htb!]
\centering
\caption{Chirality correction measurements on IBM Torino.}
\label{tab:chirality}
\begin{tabular}{@{}l|ccc|ccc@{}}
\toprule
& \multicolumn{3}{c|}{$2 \times 2$} & \multicolumn{3}{c}{$2 \times 3$} \\
$\theta$ & $(-C_4)_{\mathrm{theory}}$ & $(-C_4)_{\mathrm{ML}}$ & Error & $(-C_4)_{\mathrm{theory}}$ & $(-C_4)_{\mathrm{ML}}$ & Error \\
\midrule
$0^\circ$ & 0.000 & 0.000 & 0.000 & 0.000 & 0.000 & 0.000 \\
$15^\circ$ & 0.066 & 0.122 & 0.056 & 0.066 & 0.045 & 0.021 \\
$30^\circ$ & 0.234 & 0.215 & 0.019 & 0.234 & 0.138 & 0.096 \\
$45^\circ$ & 0.438 & 0.465 & 0.027 & 0.438 & 0.425 & 0.013 \\
$60^\circ$ & 0.609 & 0.580 & 0.029 & 0.609 & 0.505 & 0.104 \\
$90^\circ$ & 0.750 & 0.750 & 0.000 & 0.750 & 0.750 & 0.000 \\
\midrule
\textbf{Mean} & & & \textbf{0.022} & & & \textbf{0.039} \\
\bottomrule
\end{tabular}
\end{table}


\subsection{Werner chirality simulation}
\label{subsec:werner_chirality_sim}

The chirality correction $C_4(p) = -\tfrac{3}{4}p^3$ for Werner states $\rho_W(p) = p|\Psi^-\rangle\langle\Psi^-| + (1-p)\mathbb{I}/4$ is validated using a combination of hardware measurements and a noise-model simulation calibrated from experimental data. Two data points are measured directly on IBM Torino:
\begin{itemize}
\item $p = 0$ (maximally mixed state): $C_4 = 0$, measured via 300 four-copy circuits (out of 512 total) with $4 \times 10^3$ shots each, serving as the calibration anchor.
\item $p = 1$ (Bell state $|\Psi^-\rangle$): $C_4 = -3/4$, obtained from the pure-state chirality measurement on IBM Torino ($10^5$ shots, Table~\ref{tab:chirality}).
\end{itemize}

For intermediate Werner parameters $p \in (0,1)$, direct measurement would require ${\sim}4{,}600$ circuits per state (spectral decomposition into $4^4 = 256$ eigenvector combinations, each requiring separate $\mu_4$ and $I_4$ circuits). Instead, we predict the expected measurement outcomes using a noise model calibrated from the two anchor points.

\paragraph{Asymmetric depolarization model.}
The $\mu_4$ (anticyclic permutation on $B$) and $I_4$ (cyclic permutation on both subsystems) circuits experience different depolarization due to their distinct gate orderings. The per-circuit fidelities $f_{\mu_4}$ and $f_{I_4}$ are determined by a joint least-squares fit to: (i)~the six pure-state chirality measurements (Table~\ref{tab:chirality}), which constrain the ratio and absolute scale via the ML-calibrated degradation factor $f_4 = 0.456$; and (ii)~the raw $C_4$ measurement at $p = 0$ from the Block~3 data, which constrains the offset $(f_{\mu_4} - f_{I_4}) \cdot \mu_4(0)$. The fit yields $f_{\mu_4} = 0.438$ and $f_{I_4} = 0.447$ (ratio $f_{\mu_4}/f_{I_4} = 0.980$). The raw chirality signal is:
\begin{equation}
C_4^{\mathrm{raw}}(p) = f_{\mu_4}\,\mu_4(p) - f_{I_4}\,I_4(p),
\end{equation}
which is \emph{not} proportional to $C_4^{\mathrm{thy}}(p) = \mu_4(p) - I_4(p)$ when $f_{\mu_4} \neq f_{I_4}$. A two-point linear calibration using the anchors at $p = 0$ and $p = 1$:
\begin{equation}
C_4^{\mathrm{cal}}(p) = \frac{C_4^{\mathrm{raw}}(p) - C_4^{\mathrm{raw}}(0)}{[C_4^{\mathrm{raw}}(1) - C_4^{\mathrm{raw}}(0)] / C_4^{\mathrm{thy}}(1)}
\end{equation}
recovers the theory exactly at the anchors but introduces systematic deviations at intermediate~$p$ (up to $+2.3\%$ at $p = 0.2$, decreasing to $<0.5\%$ for $p \geq 0.5$) due to the nonlinearity of the moment polynomials.

\paragraph{Two-component uncertainty model.}
The Werner state eigenstates are the Bell basis with eigenvalues $\lambda_s = (1+3p)/4$ (singlet) and $\lambda_t = (1-p)/4$ (triplet, $\times 3$). The spectral decomposition generates circuits whose weights $w = \prod_{k=1}^{4}\lambda_{i_k}$ span several orders of magnitude. The all-singlet circuit (weight $\lambda_s^4$) is equivalent to a direct Bell-state measurement with well-characterised noise (from the pure-state experiment, $10^5$ shots); the remaining circuits carry spectral noise characterised by the $p = 0$ data ($4 \times 10^3$ shots per circuit, empirical per-circuit variance $\sigma_{\mathrm{spectral}}^2 \approx 0.006$, which is ${\sim}12\times$ larger than shot noise alone due to gate errors and crosstalk). The uncertainty on $C_4$ is:
\begin{equation}
\sigma^2(C_4^{\mathrm{raw}}) = \lambda_s^8\,\sigma_{\mathrm{direct}}^2 + \bigl(I_2^4 - \lambda_s^8\bigr)\,\sigma_{\mathrm{spectral}}^2,
\end{equation}
where $I_2 = \mathrm{Tr}[\rho^2] = \lambda_s^2 + 3\lambda_t^2 = (1+3p^2)/4$ and $\sigma_{\mathrm{direct}}^2$ is derived from the pure-state measurement uncertainty. As $p$ increases from~0 to~1, the fraction of variance carried by the direct component rises from~$0.4\%$ to~$100\%$, producing error bars that peak at $p \approx 0.7$ and then decrease toward $p = 1$.

\subsection{Noise-model validation across BE families}
\label{subsec:noise_model_validation}

To assess the predictive power of the per-architecture depolarisation model beyond the five states measured on IBM Fez, we simulate the full detection pipeline for 37 bound entangled states spanning all known $3 \times 3$ families and the CCNR-invisible construction. The simulation uses the hardware-calibrated fidelities ($g_{\Sigma_1} = 0.164$, $f_{G_1} = 0.594$, $f_{\Sigma_2,G_2} = 0.306$) and empirical measurement uncertainties ($\sigma_{\Sigma_1} = 0.015$, $\sigma_{G_1} = 0.025$, $\sigma_{\Sigma_2} = 0.015$, $\sigma_{G_2} = 0.020$) extracted from the IBM Fez experiment (Table~\ref{tab:chess_hardware}), with 200 Monte Carlo trials per state at $4 \times 10^3$ shots per circuit.

\paragraph{Horodecki sweep (19 states, $a = 0.05$--$0.95$).}
The non-Hermiticity gap $D_2$ decreases with $a$ from $D_2 = 0.37$ (simulation mean) at $a = 0.05$ to $D_2 \approx 0.03$ (noise floor) at $a \geq 0.65$. Detection rates exceed $80\%$ for $a \leq 0.25$, where the theoretical $D_2 > 0.07$ provides sufficient separation from shot noise. For $a \geq 0.50$, the theoretical $D_2 < 0.02$ falls below the noise floor ($\sigma_{D_2} \approx 0.05$), and detection rates drop to ${\sim}50\%$ (chance level). This establishes a practical detection boundary at $a \approx 0.25$ for the current noise level. The three hardware data points from IBM Fez (Table~\ref{tab:chess_hardware}) at $a \in \{0.50, 0.70, 0.90\}$ fall within the noise-dominated regime, consistent with the simulation prediction of marginal detection at these parameter values.

\paragraph{Tiles family (5 states, noise $\epsilon = 0$--$0.12$).}
All five Tiles states are detected with $D_2 > 0$ at $88$--$94\%$ detection rate (simulation mean $D_2 \approx 0.10$--$0.13$). The Tiles family has larger theoretical $D_2$ ($0.10$--$0.13$) than Horodecki states at comparable rank, making them robust to hardware noise.

\paragraph{Chessboard (2 states).}
Chessboard C1 ($D_2^{\mathrm{thy}} = 0.16$) is detected at $99\%$ rate with mean $D_2 = 0.17 \pm 0.08$, consistent with the IBM Fez measurement ($D_2^{\mathrm{ML}} = 0.093 \pm 0.023$). Chessboard C2 has $D_2^{\mathrm{thy}} = 0$ (exactly Hermitian realignment matrix) and shows ${\sim}50\%$ detection rate (noise), confirming that $D_2$ alone cannot detect all chessboard states---the classifier's rank-structure features are essential for this family.

\paragraph{CCNR-invisible states (4 marginal-noise constructions).}
States seeded from Tiles ($t = 0.05, 0.10$) retain substantial $D_2$ ($0.10$--$0.11$) and are detected at $88$--$94\%$ rate, comparable to the parent Tiles family. States seeded from Horodecki ($t = 0.05, 0.10$) have near-vanishing $D_2$ (${\sim}0.02$) and are detected at only $45$--$56\%$, confirming that the CCNR-invisible Horodecki-seeded construction is among the hardest to detect through $D_2$ alone. The multi-channel classifier leverages additional features ($\Sigma_1$, $G_1/\Sigma_1$, $D_1/\Sigma_1$) to improve detection of these states beyond the $D_2$-only rate.

These simulations confirm that the hardware-calibrated noise model generalises beyond the five directly measured states and correctly predicts the detection hierarchy across all four BE families.

\section{Noise analysis}
\label{sec:noise}

\subsection{Depolarizing noise model}

Under depolarizing noise:
\begin{equation}
\rho_{\mathrm{noisy}} = (1-\eta)\rho + \frac{\eta}{d}\mathbb{I}_d.
\end{equation}

For pure target states, the measured purity relates to noise by:
\begin{equation}
I_2^{\mathrm{noisy}} \approx 1 - \frac{3\eta}{2} \quad \text{(for two-qubit systems)}.
\end{equation}

\subsection{RMSE scaling}

Monte Carlo simulation ($N = 10^5$ Haar-random states) yields:
\begin{align}
\mathrm{RMSE}_{2\times2} &= (0.245 \pm 0.004)\eta, \quad R^2 = 0.998, \\
\mathrm{RMSE}_{2\times3} &= (0.219 \pm 0.002)\eta, \quad R^2 = 0.999.
\end{align}

Within typical NISQ noise ($\eta \approx 0.04$--$0.08$), RMSE remains below 0.022.

\subsection{Bias direction}

Depolarizing noise consistently \emph{underestimates} entanglement:
\begin{equation}
\mathcal{N}(\rho_{\mathrm{noisy}}) \approx (1-\eta)\mathcal{N}(\rho).
\end{equation}

This ensures the protocol never falsely reports entanglement in separable states.

\section{Efficiency verification}
\label{sec:efficiency}

\subsection{Monte Carlo results}

For Haar-random states ($N = 10^5$):

\begin{table}[htb!]
\centering
\caption{Measurement efficiency versus tomography.}
\label{tab:efficiency}
\begin{tabular}{@{}lccc@{}}
\toprule
System & Tomography & This work & Efficiency \\
\midrule
$2 \times 2$ & 16 & 3 & $5.3\times$ \\
$2 \times 3$ & 36 & 5 & $7.2\times$ \\
\bottomrule
\end{tabular}
\end{table}

The degeneracy conditions $\mathcal{G}_1 = 0$ and $\mathcal{G}_2 = 0$ are satisfied with probability zero for uniformly sampled states, so all generic states require the maximum number of moments.

\subsection{Classification distribution}

Among uniformly sampled two-qubit states:
\begin{itemize}
\item 95.3\% have four distinct eigenvalues ($\mathcal{D} \neq 0$)
\item 4.7\% exhibit simple pair degeneracy ($\mathcal{D} = 0$, but $\mathcal{G}_1 \neq 0$, $\mathcal{G}_2 \neq 0$)
\item Both cases require 3 measurements
\end{itemize}

Only specially structured states (Bell, Werner, maximally mixed) satisfy $\mathcal{G}_2 = 0$ and benefit from simplified formulas.



\begin{thebibliography}{40}

\bibitem{RevModPhys.81.865}
Horodecki, R., Horodecki, P., Horodecki, M. \& Horodecki, K. Quantum entanglement. \textit{Rev. Mod. Phys.} \textbf{81}, 865--942 (2009).

\bibitem{Huang2020shadows}
Huang, H.-Y., Kueng, R. \& Preskill, J. Predicting many properties of a quantum system from very few measurements. \textit{Nat. Phys.} \textbf{16}, 1050--1057 (2020).

\bibitem{Elben2020mixed}
Elben, A. \textit{et al.} Mixed-state entanglement from local randomized measurements. \textit{Phys. Rev. Lett.} \textbf{125}, 200501 (2020).

\bibitem{PhysRevLett.77.1413}
Peres, A. Separability criterion for density matrices. \textit{Phys. Rev. Lett.} \textbf{77}, 1413--1415 (1996).

\bibitem{HORODECKI19961}
Horodecki, M., Horodecki, P. \& Horodecki, R. Separability of mixed states: necessary and sufficient conditions. \textit{Phys. Lett. A} \textbf{223}, 1--8 (1996).

\bibitem{KalmeyerLaughlin1987}
Kalmeyer, V. \& Laughlin, R. B. Equivalence of the resonating-valence-bond and fractional quantum Hall states. \textit{Phys. Rev. Lett.} \textbf{59}, 2095--2098 (1987).

\bibitem{Taguchi2001}
Taguchi, Y. \textit{et al.} Spin chirality, Berry phase, and anomalous Hall effect in a frustrated ferromagnet. \textit{Science} \textbf{291}, 2573--2576 (2001).

\bibitem{Nation2021}
Nation, P. D., Kang, H., Sundaresan, N. \& Gambetta, J. M. Scalable mitigation of measurement errors on quantum computers. \textit{PRX Quantum} \textbf{2}, 040326 (2021).

\bibitem{PhysRevLett.89.127902}
Horodecki, P. \& Ekert, A. Method for direct detection of quantum entanglement. \textit{Phys. Rev. Lett.} \textbf{89}, 127902 (2002).

\bibitem{PhysRevLett.90.167901}
Horodecki, P. Measuring quantum entanglement without prior state reconstruction. \textit{Phys. Rev. Lett.} \textbf{90}, 167901 (2003).

\bibitem{tulewicz2025}
Tulewicz, P., Bartkiewicz, K., Miranowicz, A. \& Nori, F. Resource-efficient quantum correlation measurements via multicopy neural network methods. \textit{Sci. Rep.} \textbf{15}, 40868 (2025).

\bibitem{Travnicek2019}
Tr\'avn\'{\i}\v{c}ek, V., Bartkiewicz, K., \v{C}ernoch, A. \& Lemr, K. Experimental measurement of the Hilbert-Schmidt distance between two-qubit states. \textit{Phys. Rev. Lett.} \textbf{123}, 260501 (2019).

\bibitem{Hradil1997}
Hradil, Z. Quantum-state estimation. \textit{Phys. Rev. A} \textbf{55}, R1561--R1564 (1997).

\bibitem{Banaszek2000}
Banaszek, K., D'Ariano, G. M., Paris, M. G. A. \& Sacchi, M. F. Maximum-likelihood estimation of the density matrix. \textit{Phys. Rev. A} \textbf{61}, 010304(R) (2000).

\bibitem{Chen2003}
Chen, K. \& Wu, L.-A. A matrix realignment criterion for recognizing entanglement. \textit{Quant. Inf. Comput.} \textbf{3}, 193--202 (2003).

\bibitem{Rudolph2005}
Rudolph, O. Further results on the cross norm criterion for separability. \textit{Quant. Inf. Process.} \textbf{4}, 219--239 (2005).

\bibitem{Horodecki1998bound}
Horodecki, M., Horodecki, P. \& Horodecki, R. Mixed-state entanglement and distillation: Is there a ``bound'' entanglement in nature? \textit{Phys. Rev. Lett.} \textbf{80}, 5239--5242 (1998).

\bibitem{Horodecki1997}
Horodecki, P. Separability criterion and inseparable mixed states with positive partial transposition. \textit{Phys. Lett. A} \textbf{232}, 333--339 (1997).

\bibitem{Bartkiewicz2015moments}
Bartkiewicz, K., Beran, J., Lemr, K., Norek, M. \& Miranowicz, A. Quantifying entanglement of a two-qubit system via measurable and invariant moments of its partially transposed density matrix. \textit{Phys. Rev. A} \textbf{91}, 022323 (2015).


\bibitem{Yu2021optimal}
Yu, X.-D., Imai, S. \& G\"uhne, O. Optimal entanglement certification from moments of the partial transpose. \textit{Phys. Rev. Lett.} \textbf{127}, 060504 (2021).

\bibitem{Neven2021symmetry}
Neven, A. \textit{et al.} Symmetry-resolved entanglement detection using partial transpose moments. \textit{npj Quantum Inf.} \textbf{7}, 152 (2021).

\bibitem{Lemr2016collectibility}
Lemr, K., Bartkiewicz, K. \& \v{C}ernoch, A. Experimental measurement of collective nonlinear entanglement witness for two qubits. \textit{Phys. Rev. A} \textbf{94}, 052334 (2016).

\bibitem{Lim2011spa}
Lim, H.-T., Kim, Y.-S., Ra, Y.-S., Bae, J. \& Kim, Y.-H. Experimental realization of an approximate partial transpose for photonic two-qubit systems. \textit{Phys. Rev. Lett.} \textbf{107}, 160401 (2011).


\bibitem{Amselem2009bound}
Amselem, E. \& Bourennane, M. Experimental four-qubit bound entanglement. \textit{Nat. Phys.} \textbf{5}, 748--752 (2009).

\bibitem{Zhang2023randomized}
Zhang, C. \textit{et al.} Experimental verification of bound and multiparticle entanglement with the randomized measurement toolbox. Preprint at \url{https://arxiv.org/abs/2307.04382} (2023).

\bibitem{Gulati2024ibm}
Gulati, V., Singh, G. \& Dorai, K. Using linear and nonlinear entanglement witnesses to generate and detect bound entangled states on an IBM quantum processor. \textit{Phys. Scr.} \textbf{99}, 095112 (2024).

\bibitem{Bennett1999upb}
Bennett, C. H. \textit{et al.} Unextendible product bases and bound entanglement. \textit{Phys. Rev. Lett.} \textbf{82}, 5385--5388 (1999).

\bibitem{Bruss2000chess}
Bru\ss, D. \& Peres, A. Construction of quantum states with bound entanglement. \textit{Phys. Rev. A} \textbf{61}, 030301(R) (2000).


\bibitem{Doherty2004}
Doherty, A. C., Parrilo, P. A. \& Spedalieri, F. M. Complete family of separability criteria. \textit{Phys. Rev. A} \textbf{69}, 022308 (2004).

\bibitem{Reascos2023chirality}
Reascos, L. I., Murta, B., Galv\~ao, E. F. \& Fern\'andez-Rossier, J. Quantum circuits to measure scalar spin chirality. \textit{Phys. Rev. Res.} \textbf{5}, 043087 (2023).

\bibitem{Tarabunga2025moments}
Tarabunga, P. S. \& Haug, T. Quantifying mixed-state entanglement via partial transpose and realignment moments. Preprint at \url{https://arxiv.org/abs/2507.13840} (2025).

\bibitem{Huang2026realignment}
Huang, X., Zhu, X., Chen, B., Jing, N. \& Fei, S.-M. A note on entanglement detection via the generalized realignment moments. \textit{Adv. Quantum Technol.} 2500794 (2026).




\bibitem{Werner1989}
Werner, R. F. Quantum states with Einstein-Podolsky-Rosen correlations admitting a hidden-variable model. \textit{Phys. Rev. A} \textbf{40}, 4277--4281 (1989).

\bibitem{Verstraete2001}
Verstraete, F., Audenaert, K., Dehaene, J. \& De Moor, B. A comparison of the entanglement measures negativity and concurrence. \textit{J. Phys. A} \textbf{34}, 10327--10332 (2001).

\bibitem{DiVincenzo2000}
DiVincenzo, D. P. \textit{et al.} Evidence for bound entangled states with negative partial transpose. \textit{Phys. Rev. A} \textbf{61}, 062312 (2000).

\bibitem{Lewenstein2000}
Lewenstein, M., Kraus, B., Cirac, J. I. \& Horodecki, P. Optimization of entanglement witnesses. \textit{Phys. Rev. A} \textbf{62}, 052310 (2000).

\bibitem{Wootters1998}
Wootters, W. K. Entanglement of formation of an arbitrary state of two qubits. \textit{Phys. Rev. Lett.} \textbf{80}, 2245--2248 (1998).

\bibitem{Guhne2009}
G\"uhne, O. \& T\'oth, G. Entanglement detection. \textit{Phys. Rep.} \textbf{474}, 1--75 (2009).

\bibitem{Gurvits2003}
Gurvits, L. Classical deterministic complexity of Edmonds' problem and quantum entanglement. In \textit{Proceedings of the 35th Annual ACM Symposium on Theory of Computing (STOC '03)}, 10--19 (ACM, 2003).

\bibitem{QETLAB}
Johnston, N. QETLAB: A MATLAB toolbox for quantum entanglement, version 0.9. \url{https://qetlab.com} (2016).


\end{thebibliography}
\end{document}